\documentclass[12pt,english]{article}

\usepackage[T1]{fontenc}
\usepackage[utf8]{inputenc}   
\usepackage{microtype}
\usepackage{comment}
\usepackage{tikz}
\usepackage{float}

\usepackage{graphicx} 
\usepackage{amsmath}
\usepackage{amsthm}
\usepackage{amssymb}
\usepackage{newtxtext}
\usepackage[cmintegrals,varg]{newtxmath}
\theoremstyle{plain} 
\newtheorem{theorem}{Theorem}
\newtheorem{lemma}{Lemma}
\newtheorem{corollary}{Corollary}

\theoremstyle{remark} 
\newtheorem{remark}{Remark}

\usepackage[authoryear]{natbib}
\usepackage{tabularx}
\usepackage{siunitx}
\usepackage{booktabs}
\newcolumntype{Y}{>{\centering\arraybackslash}X}

\usepackage{geometry}
\usepackage{setspace}
\makeatother
\title{The Generalized Fisher Transformation: Finite-Sample Properties and Inference}

\author{Ilya Archakov$^{\ddagger}$ and Peter Reinhard Hansen$^{\mathsection}$
\\ \small $^{\ddagger}$Department of Economics, York University
\\ \small $^{\mathsection}$Department of Economics, University of North Carolina at Chapel Hill}
\date{\today}

\begin{document}

\maketitle

\begin{abstract}
We study the finite-sample behavior of the Generalized Fisher Transformation (GFT), the parametrization of a correlation matrix $C$ by $\gamma(C)=\operatorname{vecl}\log C$. The GFT coordinates extend Fisher's transformation to dimension $n>2$: for elliptical data their finite-sample distributions are close to Gaussian. More strikingly, the coordinates are nearly uncorrelated and their covariance is largely invariant to $C$. This approximate orthogonality and invariance make GFT-based inference far better behaved in finite samples than inference based on sample correlations or element-wise Fisher transformed correlations, yielding estimation errors that are approximately Gaussian, weakly dependent, and nearly pivotal.
\end{abstract}

\medskip
\noindent\textbf{Keywords:} correlation matrices; Fisher transformation; matrix logarithm; finite-sample inference; variance stabilization; realized correlation.

\smallskip
\noindent\textbf{JEL Classification:} C12, C13, C32, C58.

\section{Introduction}

Correlation matrices are central to multivariate analysis in economics and finance, yet statistical inference for these matrices remains notoriously difficult. The elements of the standard sample correlation matrix, $\hat{C}$, are bounded in $[-1,1]$ and exhibit complex dependencies, even when the underlying variables are independent. \citet{Fisher:1915} resolved these issues for the bivariate case ($n=2$) with his celebrated $z$-transformation, which stabilizes variance and reduces skewness under appropriate conditions. \citet{Hotelling:1953} emphasized that this simultaneous variance stabilization and approximate normalization is a special feature of Fisher's transformation, rather than a generic property of transformations of statistics. However, a multivariate generalization that preserves these desirable properties for correlation matrices of dimension $n>2$ has remained elusive.

In this paper, we analyze the Generalized Fisher Transformation (GFT), $\gamma(C)=\operatorname{vecl}(\log C)$, which was introduced by \citet{ArchakovHansen:Correlation}. The GFT maps the manifold of positive definite correlation matrices to the Euclidean space $\mathbb{R}^{d}$, where $d=n(n-1)/2$. This is achieved by taking the matrix logarithm of $C$ and vectorizing the below-diagonal elements, as illustrated in the following example for $n=3$:
$$C=\left[ \begin{array}{ccc} 1.00 & 0.90 & 0.70 \\ 0.90 & 1.00 & 0.40 \\ 0.70 & 0.40 & 1.00 \end{array} \right], \quad 
\log C = \left[ \begin{array}{rrr} -1.66 & 1.81 & 1.14 \\ 1.81 & -1.03 & -0.15 \\ 1.14 & -0.15 & -0.43 \end{array} \right],
\quad \gamma = \operatorname{vecl}\log C = \left[ \begin{array}{r} 1.81 \\ 1.14 \\ -0.15  
\end{array} \right].$$
This is a generalization of the Fisher transformation,
$\phi=\tfrac{1}{2}\log\tfrac{1+\rho}{1-\rho}$, to correlation
matrices, in the sense that $\gamma(C)$ is identical to the Fisher
transformation for a $2\times2$ correlation matrix, specifically
$$
C=\left[\begin{array}{cc}
1 & \rho\\
\rho & 1
\end{array}\right]\quad\Rightarrow\quad\log C=\left[\begin{array}{cc}
\tfrac{1}{2}\log(1-\rho^{2}) & \tfrac{1}{2}\log\tfrac{1+\rho}{1-\rho}\\
\tfrac{1}{2}\log\tfrac{1+\rho}{1-\rho} & \tfrac{1}{2}\log(1-\rho^{2})
\end{array}\right]\quad\text{for }|\rho|<1.
$$
For $n \geq 3$, the elements of $\gamma$ do not coincide with the element-wise Fisher transformed correlations. Consequently, $\hat\gamma=\gamma(\hat{C})$ has distinct statistical properties.

\begin{figure}[htb]
\begin{centering}
\includegraphics[width=0.8\textwidth]{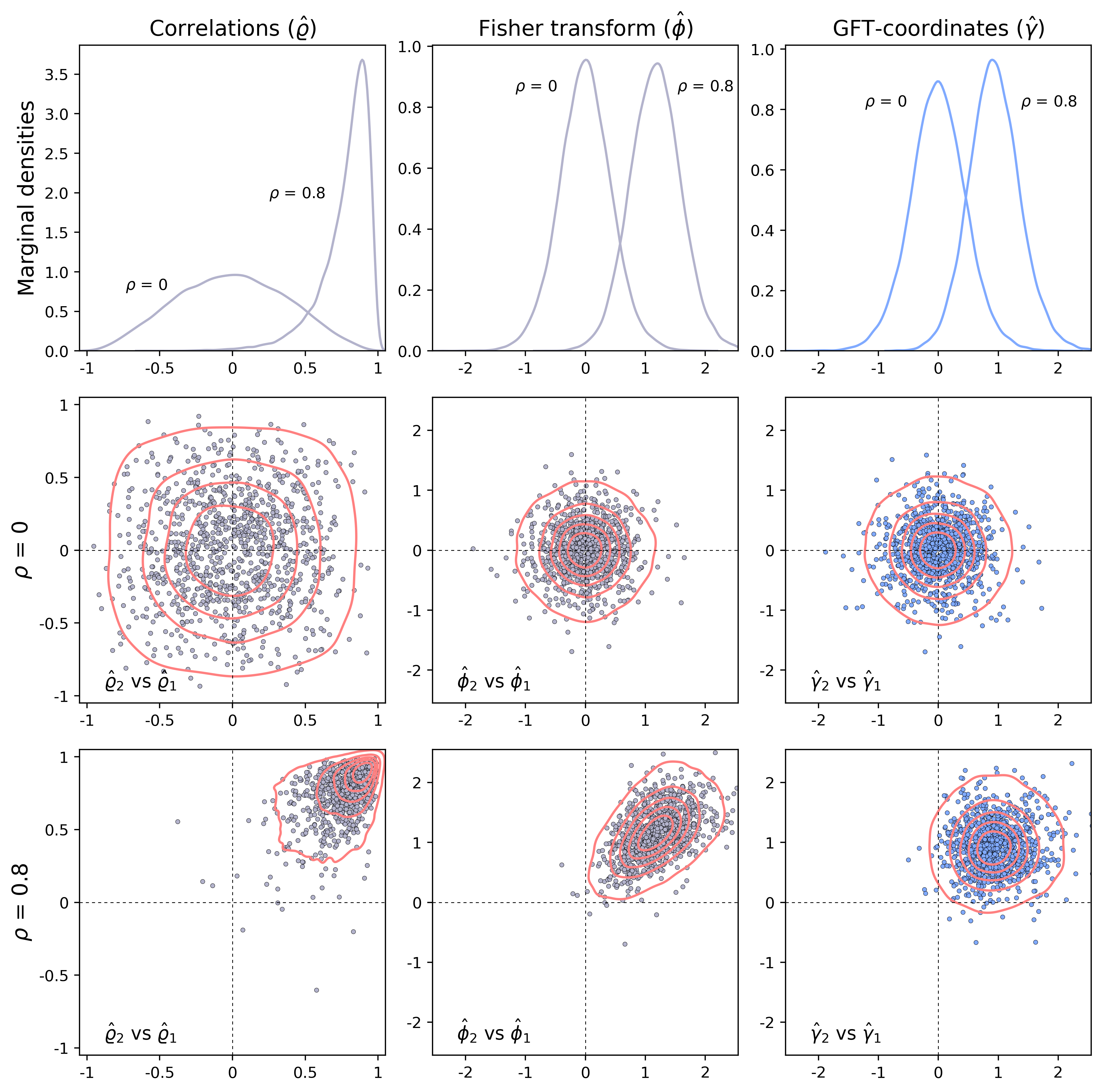}
\par\end{centering}
\caption{\small Bivariate distributions of pairs of correlations, $(\hat{\varrho}_{1},\hat{\varrho}_{2})$,
Fisher transformed correlations, $(\hat{\phi}_{1},\hat{\phi}_{2})$,
and Generalized Fisher transformed correlations, $(\hat{\gamma}_{1},\hat{\gamma}_{2})$,
computed from $X_{t}\sim iid N_{3}(0,\Sigma)$, where $\Sigma$ is an
equicorrelation matrix with off-diagonal elements equal to either
$\rho=0$ or $\rho=0.8$ and $T=8$ observations.\label{fig:FisherDesign}}
\end{figure}

The intuition behind the GFT is illustrated in Figure~\ref{fig:FisherDesign}. To generate this figure, we replicated the design used by \citet{Fisher:1921} to motivate his original transformation. The design uses just $T=8$ observations and an equicorrelation structure with either $\rho=0$ or $\rho=0.8$. The first two panels in the first row reproduce \citet[figures 1 and 2]{Fisher:1921}. 
The left panels display the marginal and joint distributions of two standard sample correlation coefficients, $(\hat{\varrho}_{1},\hat{\varrho}_{2})$, which correspond to $(\hat{\rho}_{12},\hat{\rho}_{13})$ in this three-dimensional design. For $\rho=0.8$, the distribution is heavily skewed and exhibits strong dependence, as seen in the lower-left panel. The middle column displays the corresponding Fisher transformed correlations, $(\hat{\phi}_{1}, \hat{\phi}_{2})$. The marginal distributions are close to normal with unit variance, but the strong dependence between the two elements persists for $\rho=0.8$. The right panels display the corresponding GFT coordinates, $(\hat{\gamma}_{1},\hat{\gamma}_{2})$. Not only are the marginal distributions approximately Gaussian, the contours of the bivariate distribution are remarkably spherical for both $\rho=0$ and $\rho=0.8$. Thus, the GFT maps the complex geometry of the correlation cone to coordinates in which the estimates are approximately uncorrelated.

While \citet{ArchakovHansen:Correlation} established the theoretical bijection and asymptotic results for $\gamma(C)$, the finite-sample statistical properties have largely remained unexplored for general correlation structures and non-Gaussian data. This paper documents three main finite-sample properties of the GFT and shows how they translate into improved inference for correlation matrices.

First, we show that the finite-sample distributions of the elements of $\hat{\gamma}$ are well approximated by their Gaussian asymptotic distributions for elliptically distributed data. This mirrors the behavior of the univariate Fisher transformation and extends an important scalar property to correlation matrices of higher dimension. However, as with the Fisher transformed correlation, the Gaussian approximation can become unreliable in small samples when the data are strongly skewed or otherwise depart from elliptical symmetry.

Second, we document the near orthogonality of the GFT coordinates. The finite-sample covariance matrix $V_{\gamma,T}(C)=\operatorname{var}{\sqrt{T}(\hat{\gamma}-\gamma)}$ is approximately diagonal across a wide range of designs. This contrasts sharply with the strong dependence found in both $\hat\varrho$ and $\hat\phi$, where $\hat\varrho=\varrho(\hat{C})$ and $\hat\phi=\phi(\hat{C})$ denote the vector of sample correlations and the vector of element-wise Fisher transformed sample correlations, respectively.

Third, we document that $V_{\gamma,T}(C)$ is far less sensitive to the true correlation matrix, $C$, than the corresponding covariance matrices $V_{\varrho,T}(C)$ and $V_{\phi,T}(C)$. This near invariance has important implications for inference. Since $V_{\gamma}(C)$ is relatively stable across values of $C$, the plug-in covariance estimator $V_{\gamma}(\hat{C})$ is much less affected by estimation error in $\hat{C}$ than the corresponding plug-in estimators for $\hat\varrho$ and $\hat\phi$. As a result, standardized statistics based on $\hat{\gamma}$ have substantially better finite-sample behavior, and Wald tests based on the GFT converge much faster to their nominal size.

The remainder of the paper is organized as follows. Section~\ref{sec:notation} introduces the notation, simulation designs, and empirical data sets. Section~\ref{sec:MarginalJoint} documents the marginal and joint finite-sample properties of $\hat{\varrho}$, $\hat{\phi}$, and $\hat{\gamma}$ using Gaussian simulations, non-Gaussian simulations, and empirical resampling designs. Section~\ref{sec:covstability} gives theoretical results that explain the covariance stability and weak dependence of the GFT coordinates. Section~\ref{sec:inference} studies the implications for inference and shows that GFT-based standardized statistics have substantially better finite-sample behavior. Section~\ref{sec:conclusion} concludes. The Appendix contains proofs, and the Supplement reports additional simulation and empirical results.

\section{Notation, Simulation Designs, and Empirical Data}\label{sec:notation}

We present results for the sample correlation matrix, $\hat{C}$,
associated with the sample covariance matrix, 
$$\hat{\Sigma}=\tfrac{1}{T}\sum_{t=1}^{T}(X_{t}-\bar{X})(X_{t}-\bar{X})^{\prime},
\qquad \bar{X}=\tfrac{1}{T}\sum_{t=1}^{T}X_{t}.$$ 
Let $\mu=\mathbb{E}[X_{t}]$ and $\Sigma=\operatorname{var}(X_{t})$. In our simulations, we can, without loss of generality, take $\mu=0$ and $\Sigma=C$, because $Y=DX+b$
has the same correlation matrix as $X$ for any non-singular
diagonal matrix, $D$, and any vector $b$, and the empirical correlation
matrix, $\hat{C}$, retains this invariance. We present results for
the case where $X_{t}$ has a multivariate normal distribution and
cases where it follows non-Gaussian distributions, including the uniform,
multivariate $t$, and Inverse Gaussian distributions.
We also simulate by drawing from an empirical distribution of daily
industry returns.

As introduced earlier, we let $\varrho=\operatorname{vecl}(C)$ and $\phi=\phi(C)$
denote the vectors of correlations and the corresponding vector of
Fisher transformed correlations, respectively. Similarly, $\hat{\varrho}=\varrho(\hat{C})$,
$\hat{\phi}=\phi(\hat{C})$, and $\hat{\gamma}=\gamma(\hat{C})$ denote
the empirical vectors of correlation measures, where $\hat{C}$ is
the sample correlation matrix. Their asymptotic distributions are,
under suitable regularity conditions, given by
\begin{eqnarray}
\sqrt{T}(\hat{\varrho}-\varrho) & \overset{d}{\rightarrow} & N(0,V_{\varrho}(C)),\nonumber \\
\sqrt{T}(\hat{\phi}-\phi) & \overset{d}{\rightarrow} & N(0,V_{\phi}(C)),\label{eq:LimitDistributions}\\
\sqrt{T}(\hat{\gamma}-\gamma) & \overset{d}{\rightarrow} & N(0,V_{\gamma}(C)),\nonumber
\end{eqnarray}
respectively. General expressions for the asymptotic covariance matrices,
$V_{\varrho}(C)=\operatorname{avar}(\hat{\varrho})$, $V_{\phi}(C)=\operatorname{avar}(\hat{\phi})$,
and $V_{\gamma}(C)=\operatorname{avar}(\hat{\gamma})$, are given in Appendix
\ref{app:additional}, along with simplified expressions for the special case where
$X_{t}$ is iid and normally distributed. We use the correlation matrix,
$C$, as an argument in the expressions for the asymptotic covariance
matrices to make explicit that they depend on the true correlation
matrix. The corresponding finite-sample variance-covariance matrices
are denoted by
$$
V_{\varrho,T}(C)=\operatorname{var}(\sqrt{T}(\hat{\varrho}-\varrho)),\quad V_{\phi,T}(C)=\operatorname{var}(\sqrt{T}(\hat{\phi}-\phi)),\quad V_{\gamma,T}(C)=\operatorname{var}(\sqrt{T}(\hat{\gamma}-\gamma)),
$$
respectively, where $T$ is the sample size. Similarly, we define
the correlation matrices,
$$
R_{\varrho,T}(C)=\operatorname{corr}(\sqrt{T}(\hat{\varrho}-\varrho)),\quad R_{\phi,T}(C)=\operatorname{corr}(\sqrt{T}(\hat{\phi}-\phi)),\quad R_{\gamma,T}(C)=\operatorname{corr}(\sqrt{T}(\hat{\gamma}-\gamma)),
$$
and their asymptotic counterparts are denoted by $R_{\varrho}(C)$,
$R_{\phi}(C)$, and $R_{\gamma}(C)$, respectively.

\subsection{Some Empirical Correlation Matrices}

Before we present results on the finite-sample properties of $\hat{\varrho}$, $\hat{\phi}$, and $\hat{\gamma}$, it will be useful to illustrate the GFT coordinates with three empirical data sets. These provide realistic correlation structures for the simulation designs in Section~\ref{sec:MarginalJoint}.
\begin{figure}[!htbp]
\begin{centering}
\includegraphics[width=0.99\textwidth]{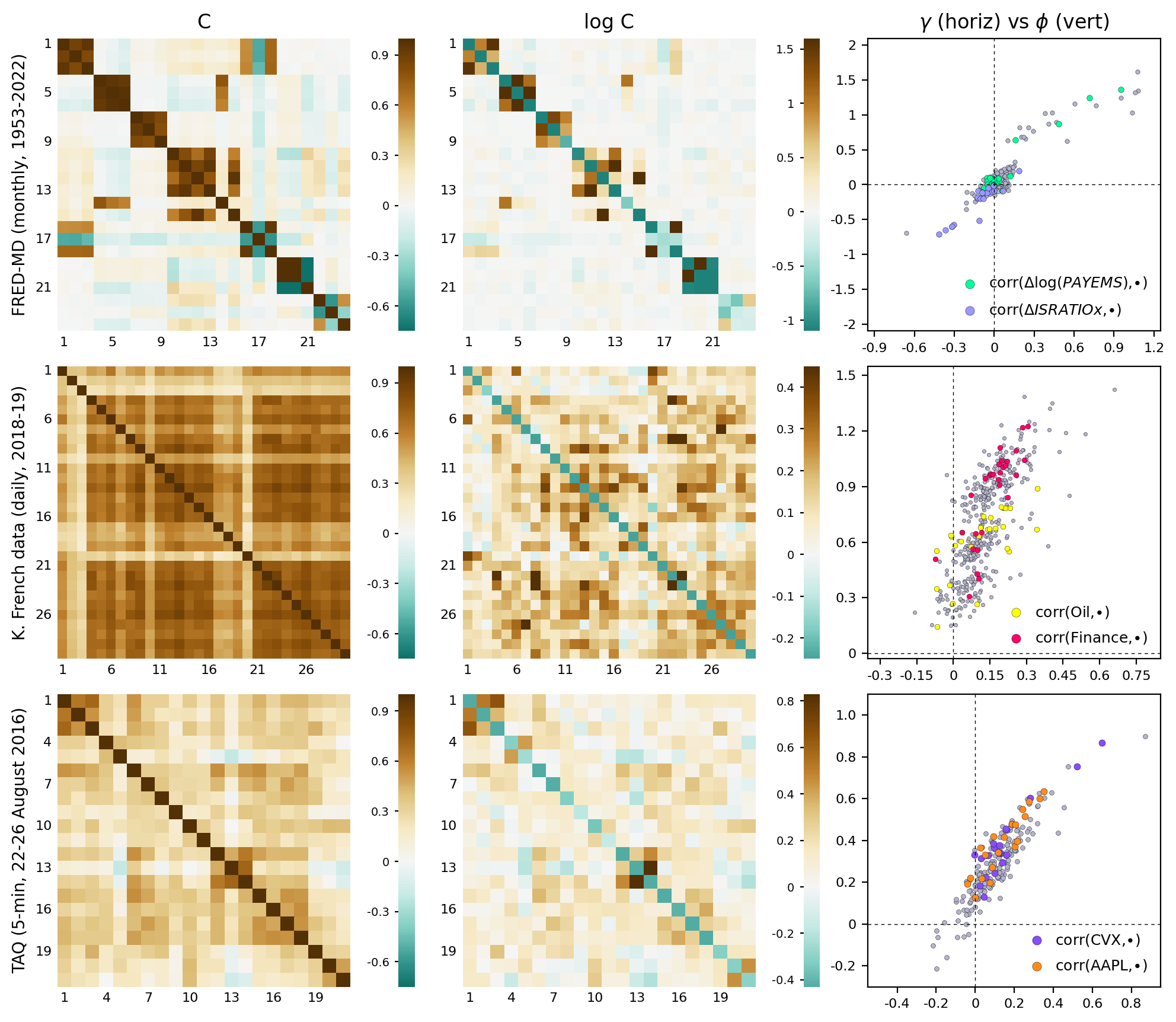}
\par\end{centering}
\caption{{\small Empirical correlation matrices, $\hat{C},$ for three data
sets are shown in the left panels and the corresponding $\log\hat{C}$
shown in the middle panels. The right panels are scatter plots of
$\hat{\phi}_{i}$ against $\hat{\gamma}_{i}$.}\label{Fig:Dataset}}
\end{figure}

The three data sets are: (1) the macroeconomic time-series from \citet{McCrackenNg:2016}
(monthly data from 1953 to 2022); (2) Daily returns for 30 Fama-French
industry portfolios downloaded from Kenneth French's website (January
2, 2018 to December 31, 2019); and (3) Realized correlation matrix
for 21 assets based on five-minute intraday returns from opening hours
9:30 AM to 4:00 PM during the week, August 22-26, 2016.

We present the empirical correlation matrix for these three data sets
in Figure~\ref{Fig:Dataset} (left panels) along with the logarithmically
transformed correlation matrix (middle column). The right panels are
scatter plots of the elements of $\phi(\hat{C})$ plotted against
the corresponding elements of $\gamma(\hat{C})$. These reveal a relationship
between $\hat{\gamma}$ and $\hat{\phi}$, which also implies a relationship
between $\hat{\gamma}$ and $\hat{\varrho}$. Thus, even though the
matrix logarithm is a highly nonlinear mapping, where each element
of $\log\hat{C}$ depends on all sample correlation coefficients in $\hat{C}$, it is evident that $\hat{\gamma}_{i}$ is primarily influenced by the corresponding sample correlation element, $\hat{\varrho}_{i}$. This is not surprising because the Jacobian, $\partial\varrho/\partial\gamma$, is typically dominated by its
diagonal elements.

Detailed information about elements of $\hat{C}$ and $\log\hat{C}$
will be presented in Section~\ref{subsec:EmpiricalBasedResults} for
the industry portfolios along with an extended time series of the
realized correlations. Additional empirical results are presented
in the Supplementary Material.

\section{Marginal and Joint Finite-Sample Distributions}\label{sec:MarginalJoint}

The asymptotic distributional theory for $\hat{\gamma}$, $\hat{\varrho}$, and $\hat{\phi}$ is given in \citet{ArchakovHansen:Correlation}. The goal of this section is to document their finite-sample behavior, with particular emphasis on how the GFT coordinates $\hat{\gamma}$ compare to the conventional coordinates $\hat{\varrho}$ and $\hat{\phi}$.

We first examine variance, skewness, and kurtosis of the marginal distributions, using these moments to assess variance stabilization and the quality of the Gaussian approximation in finite samples. We then study joint behavior through the finite-sample correlation matrices $R_{\varrho,T}(C)$, $R_{\phi,T}(C)$, and $R_{\gamma,T}(C)$ of the corresponding coordinate vectors. The first two are nearly identical, whereas $R_{\gamma,T}(C)$ is typically much closer to the identity matrix.

\subsection{Results for $C$ with Toeplitz Structure}

We first consider finite-sample properties when $\hat{C}$ is computed
from $X_{t} \sim iid N_{n}(0,C)$, $t=1,\ldots,T$, where $C$ has the
following Toeplitz structure,
\begin{equation}
C=\left(\begin{array}{ccccc}
1 & \rho & \rho^{2} & \cdots & \rho^{n-1}\\
\rho & 1 & \rho & \cdots & \rho^{n-2}\\
\rho^{2} & \rho & 1 & \cdots & \rho^{n-3}\\
\vdots & \vdots & \vdots & \ddots & \vdots\\
\rho^{n-1} & \rho^{n-2} & \rho^{n-3} & \cdots & 1
\end{array}\right),\label{eq:CorrelationMatrixToeplitz}
\end{equation}
and we use $\rho=0.9$ and $T=100$ in this section. Thus,
when $n=25$, the correlation coefficients range from about $0.08$ to $0.90$.

\begin{figure}[!htbp]
\begin{centering}
\includegraphics[width=0.9\textwidth]{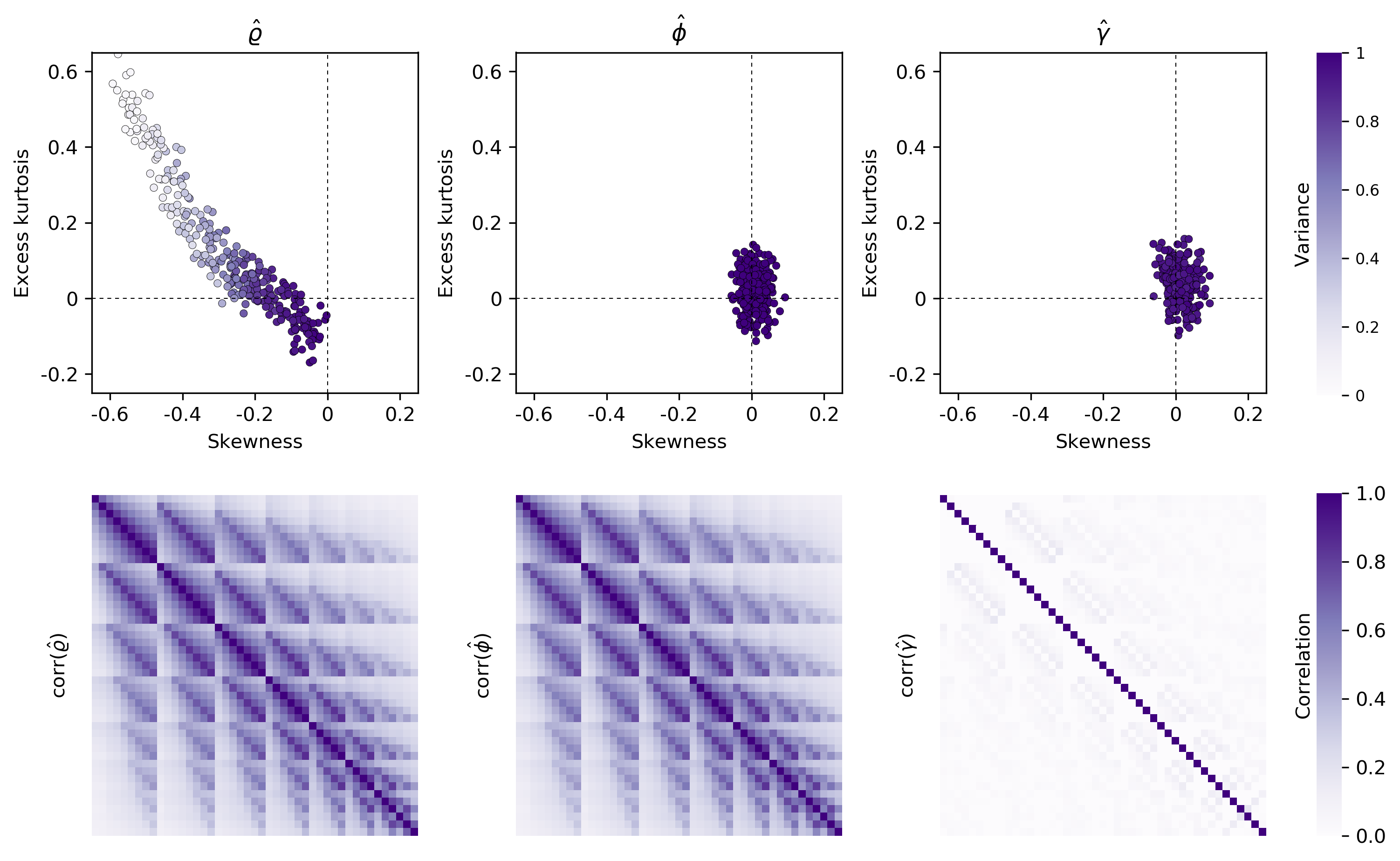}
\par\end{centering}
\caption{{\small Finite-sample skewness and excess kurtosis for elements of $\hat{\varrho}$, $\hat{\phi}$, and $\hat{\gamma}$ based on a Toeplitz structure with $\rho=0.9$ ($n=25$ in upper panels and $n=10$ in lower panels) and sample size $T=100$. }\label{Fig:FiniteDistToeplitz}}
\end{figure}

The upper panels of Figure~\ref{Fig:FiniteDistToeplitz} are based
on dimension $n=25$ and present the skewness and kurtosis for the
marginal distributions of the $d=300$ elements of $\hat{\varrho}$,
$\hat{\phi}$, and $\hat{\gamma}$. We have plotted excess kurtosis
against skewness, where the color intensity of each dot indicates the variance of the corresponding element. Light colors correspond to small, near-zero variances, and dark colors correspond to variances close to one. For reference, the standard normal distribution, $N(0,1)$, has zero skewness, zero excess kurtosis, and unit variance.

The finite-sample distributions of the sample correlations, $\hat{\varrho}$, are very non-Gaussian, as is evident from their skewness and excess kurtosis. Moreover, there is large variation in their variances because these depend on the population correlations. The elements of $\hat{\phi}$ are the Fisher transformed correlations, and each is well approximated by its asymptotic distribution, which is the standard normal distribution, $N(0,1)$. Moreover, the transformation successfully stabilizes their variances, as expected. The corresponding results for $\hat{\gamma}$ are presented in the upper right panel of Figure~\ref{Fig:FiniteDistToeplitz}. Interestingly, the marginal distributions of the elements of $\hat{\gamma}$ are similar to those of $\hat{\phi}$. Skewness and excess kurtosis are both relatively close to zero, and the variances are approximately stabilized. However, the asymptotic marginal distributions for the elements of $\hat{\gamma}$ are not exactly $N(0,1)$, and the diagonal elements of the asymptotic covariance matrix, $V_{\gamma}(C)$, range from 0.865 to 1.005 in this design.\footnote{The majority of the variances are within a narrow band, with 80\% of them falling between 0.911 and 0.966. The asymptotic variances are given by the diagonal elements of $V_{\gamma}(C)$; see \citet[eq.~2]{ArchakovHansen:Correlation}
for the expression.}

A surprising property of $\hat{\gamma}$ emerges from the lower panels of
Figure~\ref{Fig:FiniteDistToeplitz}. These are based on $n=10$ (i.e.,
$d=45$) to ensure legibility of the correlation matrices, and display the
$45\times45$ correlation matrices, $R_{\varrho,T}(C)$, $R_{\phi,T}(C)$, and $R_{\gamma,T}(C)$, using color codes. The correlation matrices for
$\hat{\varrho}$ and $\hat{\phi}$ are virtually identical, which is a consequence of the delta method and the fact that $\hat{\phi}$ is defined by an element-wise transformation of $\hat{\varrho}$. The surprising result in Figure~\ref{Fig:FiniteDistToeplitz} is the weak dependence between the elements of $\hat{\gamma}$. Not only does the mapping $\gamma(C)$ produce coordinates whose finite-sample distributions are well approximated by normal distributions when the underlying data are normally distributed, the elements are also nearly uncorrelated. The latter is an unexpected property of the GFT coordinates. The weak dependence appears to be a robust property of $\hat{\gamma}$, whereas the near-Gaussian finite-sample distributions of the elements of $\hat{\phi}$ and $\hat{\gamma}$ are more fragile, in the sense that they rely on the underlying data being normally distributed. We next examine whether these findings persist for correlation matrices with more general structures and for correlation matrices estimated from non-Gaussian samples.

\subsection{Correlation Matrices with Arbitrary Structures\label{subsec:RandomCorr}}

The Toeplitz structure in (\ref{eq:CorrelationMatrixToeplitz}) defines
a special class of correlation matrices. We now study correlation
matrices with more general, arbitrary structures. We generate random
correlation matrices using the method developed in
\citet{ArchakovHansenLuo-RandomCorr:2024}. This is straightforward,
as it only requires drawing a random vector $\gamma$, followed by evaluating
$C(\gamma)$. Specifically, we first draw $\omega$ uniformly on the interval
$[0,\tfrac{5}{n}\log n]$, and then draw
$\gamma|\omega\sim N_d(\xi\iota,\omega^2I_d)$, where
$\xi=\tfrac{\log(1+n)}{n}$.\footnote{The dependence on $n$ is motivated by the fact that an equicorrelation
matrix with common correlation coefficient, $\rho$, will translate
to a vector, $\gamma$, with identical elements equal to $\tfrac{1}{n}\log(1+n\tfrac{\rho}{1-\rho})$.}
This particular design is chosen because it generates a sufficiently
interesting range of correlation matrices, including near-singular
matrices. Moreover, the design is such that the probability density
is positive for any valid non-singular correlation matrix.\footnote{Any distribution for $\gamma$ on $\mathbb{R}^{d}$, with $d=n(n-1)/2$,
will translate to a distribution on $\mathcal{C}_{n}$, and any distribution
with full support on $\mathbb{R}^{d}$ will induce a distribution
with full support on $\mathcal{C}_{n}$.} Three examples of random correlation matrices generated with this
design are:

\setlength{\arraycolsep}{1.5pt}
\begin{scriptsize}
$$
C_1 =
\left[ \begin{array}{rrrrr} 1.00 & 0.56 & \phantom{{}.{}}0.33 & \phantom{{}.{}}0.52 & 0.10 \\ 0.56 & 1.00 & 0.54 & 0.81 & -0.03 \\ 0.33 & 0.54 & 1.00 & 0.68 & 0.47 \\ 0.52 & 0.81 & 0.68 & 1.00 & 0.39 \\ 0.10 & -0.03 & 0.47 & 0.39 & 1.00 \\ \end{array} \right],
\quad
C_2 =
\left[ \begin{array}{rrrrr} 1.00 & 0.42 & -0.61 & 0.42 & 0.62 \\ 0.42 & 1.00 & -0.23 & -0.53 & 0.12 \\ -0.61 & -0.23 & 1.00 & -0.61 & -0.96 \\ 0.42 & -0.53 & -0.61 & 1.00 & 0.73 \\ 0.62 & 0.12 & -0.96 & 0.73 & 1.00 \\ \end{array} \right],
\quad
C_3 =
\left[ \begin{array}{rrrrr} 1.00 & 0.24 & -0.03 & 0.82 & 0.22 \\ 0.24 & 1.00 & -0.52 & -0.32 & 0.47 \\ -0.03 & -0.52 & 1.00 & 0.16 & -0.96 \\ 0.82 & -0.32 & 0.16 & 1.00 & 0.05 \\ 0.22 & 0.47 & -0.96 & 0.05 & 1.00 \\ \end{array} \right].
$$
\end{scriptsize}

We generate 1,000 random correlation matrices, and for each, we generate
$N=10{,}000$ estimates of $C$. These are sample correlation matrices
based on $X_{t}{\sim}iid N_n(0,C)$ for $t=1,\ldots,T$, with $T=40$,
$T=100$, and $T=250$. We present results based on random correlation
matrices with dimension $n=25$, which translates to $d=300$ distinct
correlation coefficients in $C$, such that the correlation matrix
for $\hat{\varrho}$, $R_{\varrho,T}(C)$, has 44,850 unique correlations.
The same dimensions apply to the other parametrizations, $\hat{\phi}$
and $\hat{\gamma}$.

While the correlation matrices can take any value in $\mathcal{C}_{n}$,
the variation in the estimators' properties across correlation matrices
is largely determined by the smallest eigenvalue $\lambda_{\min}(C)=\min\{\lambda\in\sigma(C)\}$.
For this reason, we present results that are stratified by $\lambda_{\min}(C)$.

\begin{figure}[!htbp]
\centering{}\includegraphics[width=0.9\textwidth]{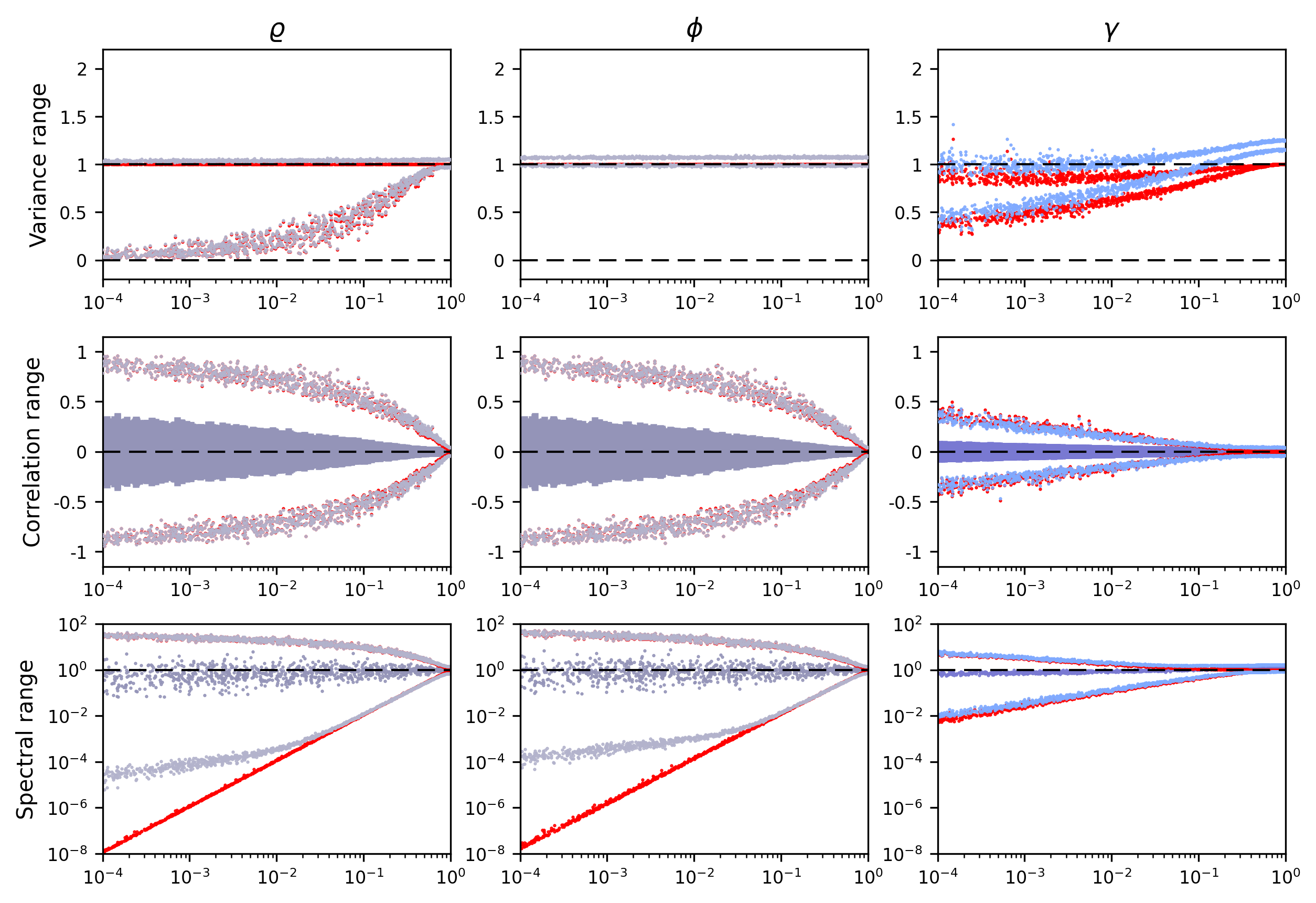}
\caption{\small Properties of $\sqrt{T}(\hat{\varrho}-\varrho)$, $\sqrt{T}(\hat{\phi}-\phi)$, and $\sqrt{T}(\hat{\gamma}-\gamma)$ for $n=25$ ($d=300$), plotted against $\lambda_{\min}(C)$ for 1,000 random correlation matrices. Columns correspond to $\varrho$, $\phi$, and $\gamma$. Rows show marginal variances, pairwise correlations, and covariance-matrix spectra. Blue dots are finite-sample quantities, red dots are asymptotic values, and shaded regions contain 80\% of the pairwise correlations. In the bottom row, eigenvalues are reported in $\log_{10}$ units, and light blue dots show the variances of standardized average elements.\label{fig:VarCorrSpectrum}}
\end{figure}

We focus on the GFT coordinates' ability to stabilize variance
compared to that of $\hat{\phi}$, and the dependence between elements
of the three parametrizations. In the Supplementary Material, we present
Q-Q plots showing that the marginal distributions of the elements
of $\hat{\phi}$ and $\hat{\gamma}$ are well approximated by their
asymptotic normal distributions. We also present results for additional
sample sizes and for dimensions $n=5$ and $n=10$.

\subsubsection{Variances of Correlation Parameters}

Next, we consider the finite-sample variance of the elements of $\sqrt{T}(\hat{\varrho}-\varrho)$,
$\sqrt{T}(\hat{\phi}-\phi)$, and $\sqrt{T}(\hat{\gamma}-\gamma)$,
which are given by the diagonal elements of $V_{\varrho,T}(C)$, $V_{\phi,T}(C)$,
and $V_{\gamma,T}(C)$, respectively. There are $d=300$ variances
for each of the 1,000 random correlation matrices. We present the
smallest and largest finite-sample variances (blue dots) for each
correlation matrix, plotted in the top row of Figure~\ref{fig:VarCorrSpectrum}
against the smallest eigenvalue of $C$, $\lambda_{\min}(C)$, for
the sample size $T=100$. Results for other sample sizes are presented
in the Supplementary Material, see Figure~\ref{fig:RandomDesignVariances}.
For comparison, we also include their asymptotic variances (red dots),
obtained from $V_{\varrho}(C)$, $V_{\phi}(C)$, and $V_{\gamma}(C)$.
The finite-sample variances are estimated by simulating 10,000 independent
estimates for each of the 1,000 correlation matrices.

The results for $\sqrt{T}(\hat{\varrho}-\varrho)$ and $\sqrt{T}(\hat{\phi}-\phi)$
are as expected. The finite-sample variances for $\hat{\varrho}$
depend on the elements of $C$, and asymptotically they approach $(1-C_{ij}^{2})^{2}$.
The Fisher transformation, shown in the middle column of panels, successfully
stabilizes the variances to unity, as it is designed to do for normally
distributed data. The GFT coordinates do not stabilize the marginal
variances to the same degree. There is some dispersion in the variances,
especially for near-singular correlation matrices, as seen for small
values of $\lambda_{\min}(C)$. For $\hat{\gamma}$, we also note that
the finite-sample variances are reasonably close to their asymptotic
values when $T=250$ and $n=25$, and for smaller $n$ the finite-sample
variances are closer to their asymptotic values. The agreement is also
good for smaller sample sizes, including $T=40$ (see Supplementary Material).

Next, we turn to the correlation between the elements within each
of the vectors, $\hat{\varrho}$, $\hat{\phi}$, and $\hat{\gamma}$.
Rather unexpectedly, we observe substantially less dependence between
the elements of $\hat{\gamma}$ than those of the other two vectors.
The middle row of panels in Figure~\ref{fig:VarCorrSpectrum} presents
the largest and smallest finite-sample correlations between elements
of $\hat{\varrho}$, $\hat{\phi}$, and $\hat{\gamma}$ (blue dots)
and their corresponding asymptotic population values (red dots) for
$T=100$. Note that $n=25$ implies $d=300$ elements in each vector,
which translates to $d(d-1)/2=44,850$ distinct correlations. The
shaded area in each panel covers the range for 80\% of these correlations.
The results are plotted against the smallest eigenvalue, $\lambda_{\min}(C)\in[10^{-4},1]$.
Results for other sample sizes are presented in Figure~\ref{fig:R-CorrelationRange}
in the Supplementary Material.

Variance stabilization in the multivariate case is better quantified
by considering linear combinations of the vectors, e.g., $\operatorname{var}(a^\prime\hat{\phi})$
with $a^{\prime}a=1$. The minimum and maximum variance across normalized
linear combinations are given by the smallest and largest eigenvalues
of the covariance matrix, respectively. These are presented in the
last row of panels in Figure~\ref{fig:VarCorrSpectrum} for $T=100$.
Results for other sample sizes are presented in the Supplementary
Material, see Figure~\ref{fig:Veigenvalues}. The GFT coordinates,
$\gamma(C)$, are far better at stabilizing the spectrum of variances,
with the range being orders of magnitude smaller than that of $\hat{\varrho}$
and $\hat{\phi}$ (note that a logarithmic scale is used for the $y$-axis
in lower panels). Thus, while the Fisher transformation is excellent
at standardizing the variance of individual elements of $\hat{\phi}$,
it is unable to moderate covariances between elements.

In the Supplementary Material, Figure~\ref{fig:min_max_var_corr_toeplitz},
we report additional results for the correlations between elements
of $\hat{\gamma}$ for the case where $C$ has a Toeplitz structure,
while maintaining the Gaussian distribution, $X_{t}\sim iid N_n(0,C)$.
Whenever $X_{t}$ is drawn from the Gaussian, we find that the variances
of the elements in $\hat{\gamma}$ have similar values and are approximately
equal to one. This appears to hold for all types of correlation matrices
with eigenvalues bounded away from zero. Moreover, as $n$ increases,
the results for Toeplitz correlation matrices become very similar
to those obtained with random correlation matrices. These findings
suggest that the weak dependence and approximate variance stabilization
of $\hat{\gamma}$ are not artifacts of a particular correlation design,
but rather robust finite-sample features of the GFT coordinates.

\subsection{Results for Non-Gaussian Distributions}

We next examine whether the weak dependence between elements of $\hat{\gamma}$
persists when the data are not Gaussian. The marginal Gaussian approximation
for transformed correlations is known to be more fragile in this case. For
example, skewness and kurtosis of the Fisher transformed correlations can
deviate substantially from the standard normal, as illustrated in Figure
\ref{fig:IGproperties} in the Supplementary Material.

Figure~\ref{fig:R-Correlations-NonGaussian} reports results for samples drawn
from uniform, Student's $t$, and Inverse Gaussian marginal distributions, using
the Toeplitz correlation matrix with $n=25$ and $\rho=0.9$. Each panel shows
the distribution of the 44,850 off-diagonal elements of $R_{\phi,T}(C)$ and
$R_{\gamma,T}(C)$ for $T=40$, $T=100$, and $T=250$. The corresponding results
for $R_{\varrho,T}(C)$ are essentially indistinguishable from those for
$R_{\phi,T}(C)$. The correlations between elements of $\hat{\gamma}$ remain
concentrated near zero in all three non-Gaussian designs, whereas the
correlations between elements of $\hat{\phi}$ are far more dispersed. Thus,
the weak dependence of the GFT coordinates appears to be much more robust than
the marginal Gaussian approximation.

\begin{figure}[!htbp]
\centering{}\includegraphics[width=0.9\textwidth]{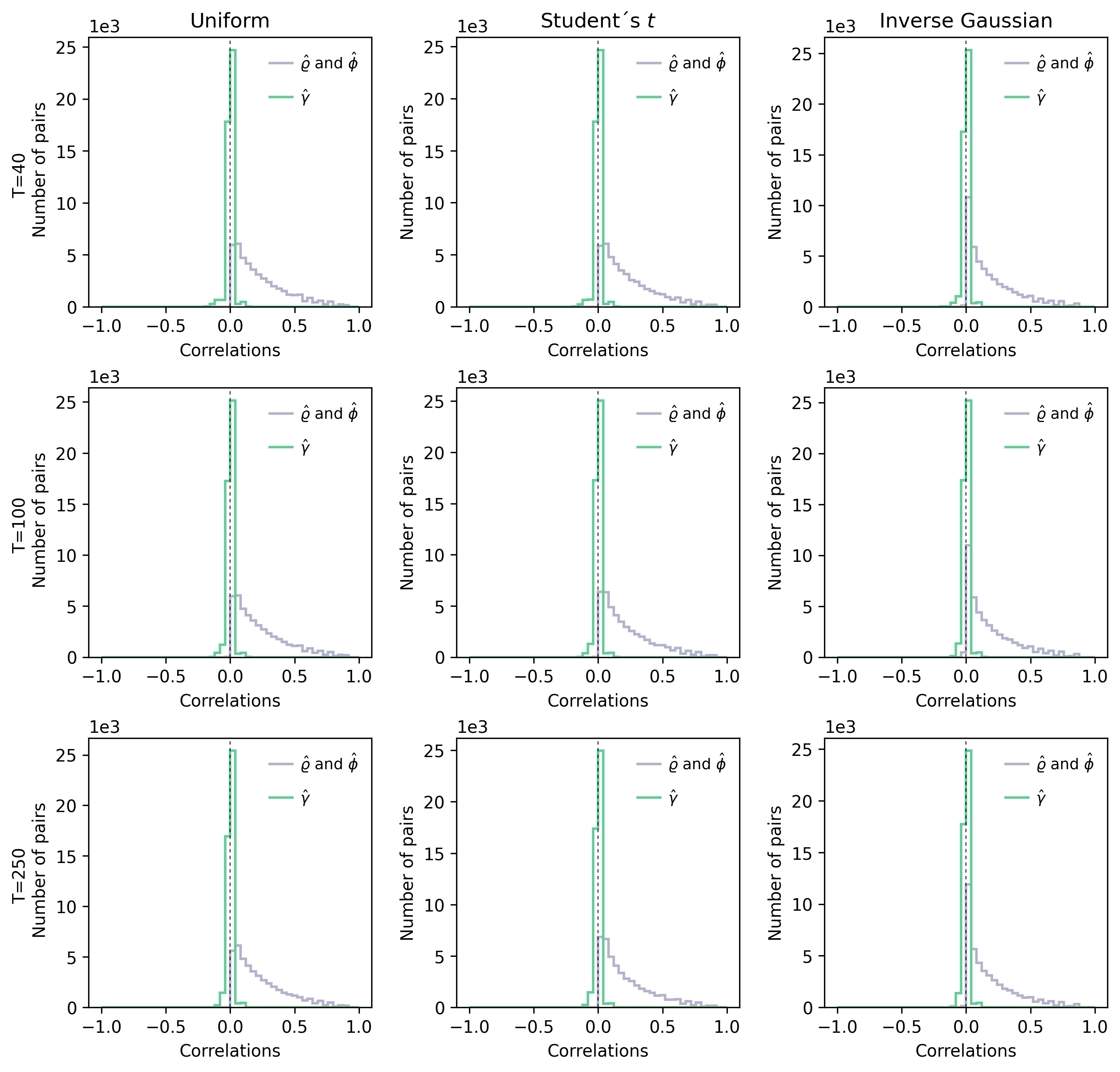}
\caption{Finite-sample correlations between the elements of $\hat{\varrho}$,
$\hat{\phi}$, and $\hat{\gamma}$, when $C$ ($n=25$) has a Toeplitz
structure with $\rho=0.9$ and $X_{it}$ is distributed uniformly
on $[0,1]$, as a Student's $t$ with five degrees of freedom, or
as a standard Inverse Gaussian. The correlations between the elements
of $\hat{\gamma}$ are remarkably concentrated about zero in all cases,
whereas those between elements of $\hat{\varrho}$ and $\hat{\phi}$
are more dispersed. \label{fig:R-Correlations-NonGaussian}}
\end{figure}

\subsection{Sampling from Empirical Data\label{subsec:EmpiricalBasedResults}}

We next study the properties of $\hat{\gamma}$ using empirical structures
from the three data sets used in Figure~\ref{Fig:Dataset}: the FRED-MD
macroeconomic data set from \citet{McCrackenNg:2016}, the 30 Fama-French
industry portfolios, and high-frequency returns for 21 assets, from which
we compute realized correlations using five-minute returns.

We generate artificial samples by resampling, with replacement, from the
empirical distributions and then apply the correlation estimators to these
samples. This preserves empirically relevant correlation structures while
also introducing the non-Gaussian features present in the data. Thus, the
exercise complements the controlled Gaussian and non-Gaussian simulations
above.

To conserve space, we present the results for two of the data sets here, the 30 Fama-French industry portfolios and the high-frequency return data. The
analogous results for a subset of the macro variables are presented in the Supplementary Material.
\begin{table}[htbp]
\begin{centering}
\includegraphics[width=1.0\textwidth]{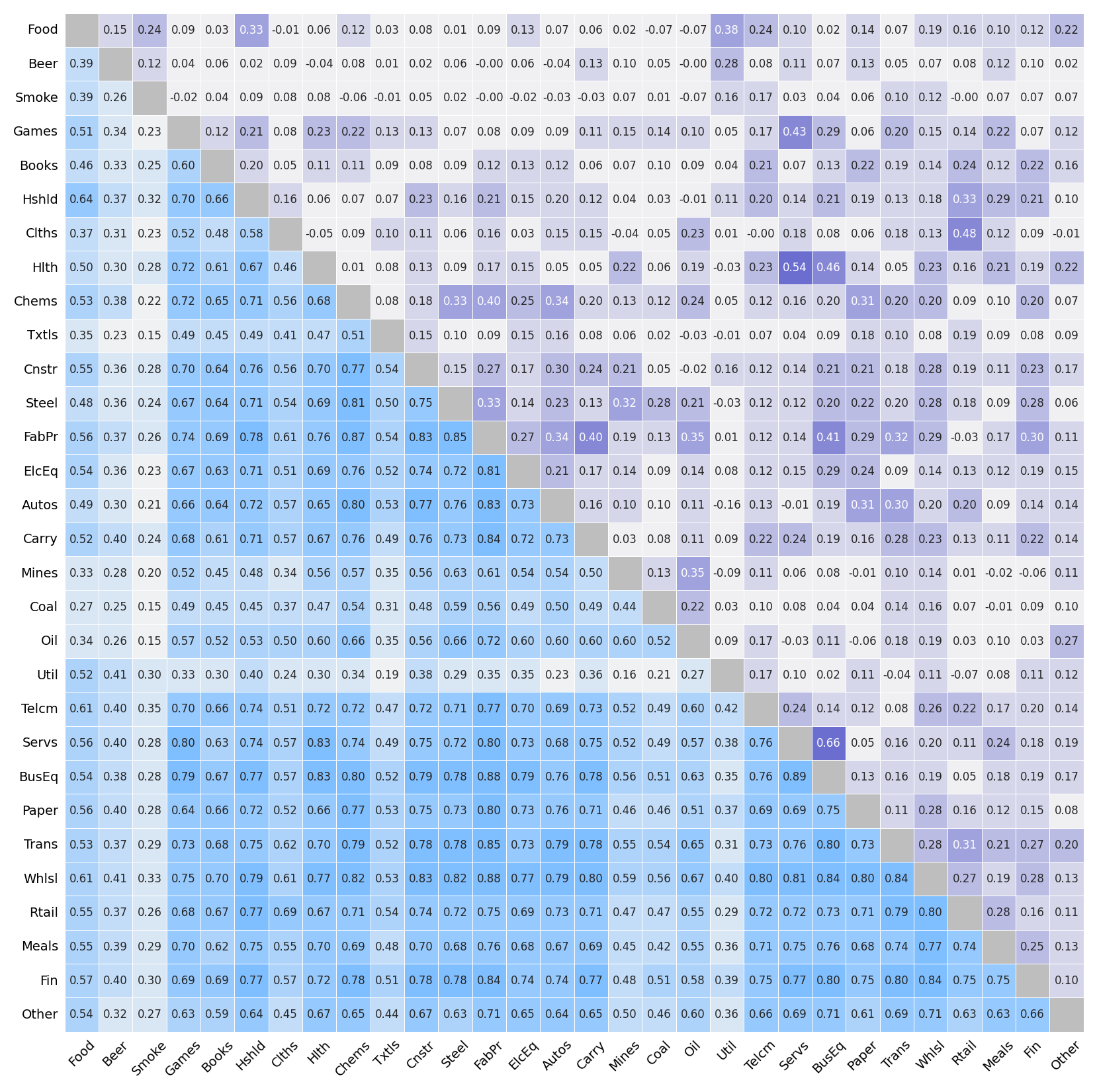}
\par\end{centering}
\caption{{\small Sample correlation matrix for daily returns on 30 Fama-French
industry portfolios. Elements of $\hat{C}$ are below the diagonal
and those of $\log\hat{C}$ are above the diagonal. Sample period
is from January 2, 2018 to December 31, 2019.}\label{tab:ff-Corr}}
\end{table}

\subsubsection{Fama-French 30 Industry Portfolios}\label{sec:FamaFrench}

Our data for the 30 industry portfolios consist of 503 daily returns from January 2, 2018 to December 31, 2019. The sample correlation matrix is shown  below the diagonal in Table~\ref{tab:ff-Corr}, and
the corresponding logarithmic correlation matrix is shown above the diagonal. The smallest eigenvalue is $\lambda_{\min}=0.0704$, so this is a realistic correlation structure that is moderately close to
singularity. 

We construct artificial samples in two ways. First, we simulate Gaussian
return vectors with correlation matrix equal to the empirical correlation
matrix, $X_{t}\sim iid N(0,\hat{C})$. Second, we resample return vectors
from the empirical distribution, with replacement. The first design isolates
the role of the empirical correlation structure, while the second also
incorporates the non-Gaussian features of the observed returns. For each
sample, we estimate its correlation matrix and compute $\hat{\varrho}$,
$\hat{\phi}$, and $\hat{\gamma}$. Figure~\ref{fig:ff-VarSkewKurtT100}
presents results for $T=100$; additional results for $T=40$ and $T=250$
are reported in the Supplementary Material.

The top panels of Figure~\ref{fig:ff-VarSkewKurtT100} show the variance,
skewness, and kurtosis for the 435 elements of
$\sqrt{T}(\hat{\varrho}-\varrho)$, $\sqrt{T}(\hat{\phi}-\phi)$, and
$\sqrt{T}(\hat{\gamma}-\gamma)$ under the Gaussian design. The bottom
panels show the corresponding results based on empirical resampling.
The contrast is clear: under Gaussian sampling, the finite-sample
distributions of $\hat{\phi}$ and $\hat{\gamma}$ are close to their
asymptotic normal distributions, whereas empirical resampling produces
substantial deviations from normality. The GFT coordinates nevertheless
retain better finite-sample behavior than the sample correlations.

\begin{figure}[!htbp]
\begin{centering}
\includegraphics[width=0.9\textwidth]{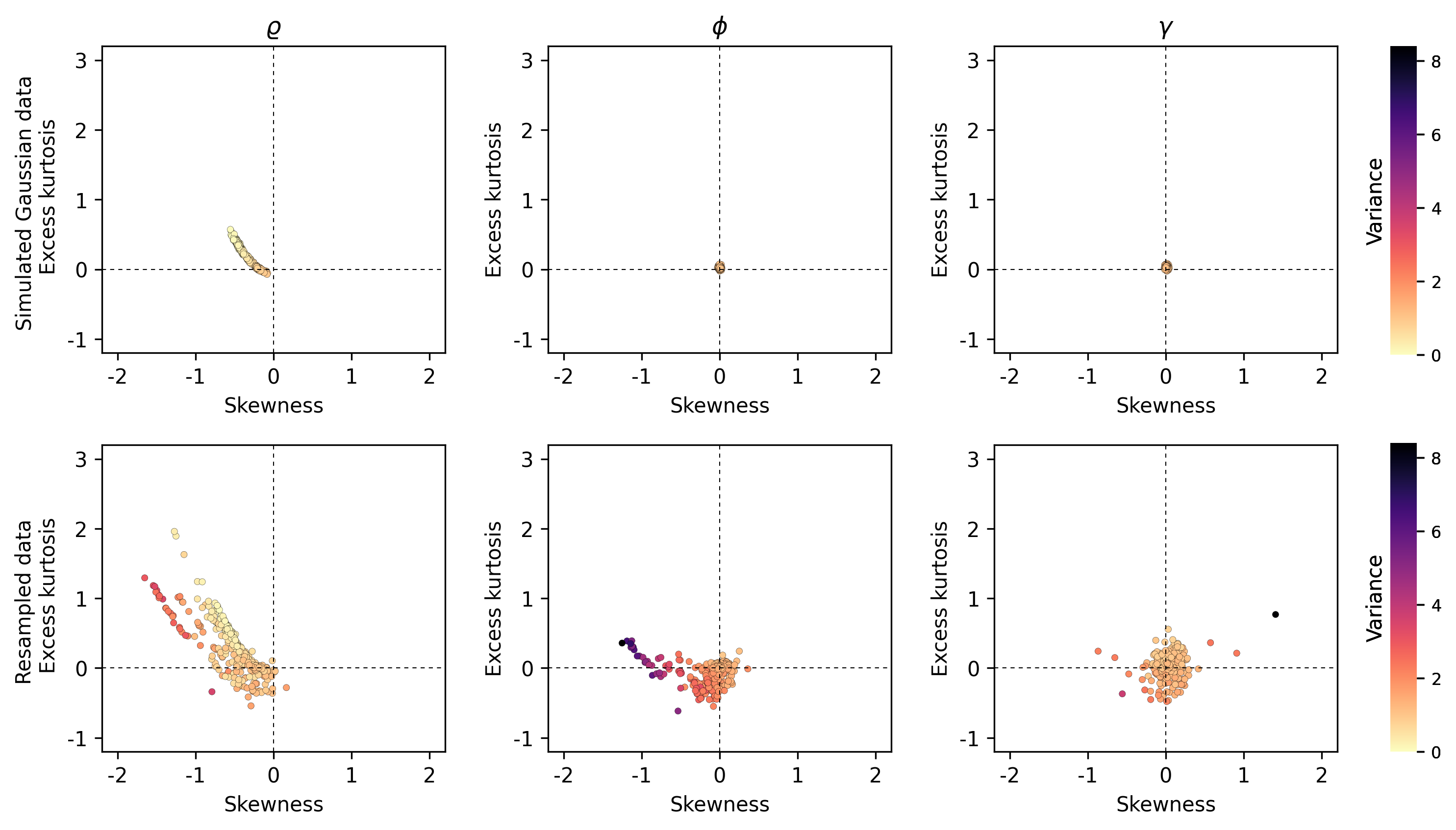}
\par\end{centering}
\centering{}\caption{{\small Finite-sample properties of $\sqrt{T}(\hat{\varrho}-\varrho)$,
$\sqrt{T}(\hat{\phi}-\phi)$, and $\sqrt{T}(\hat{\gamma}-\gamma)$
based on daily returns of 30 Fama-French industry portfolios and $T=100$.
We present the variance, skewness, and kurtosis of the 435 elements
of each vector using two designs. The top panels are for Gaussian
distributed samples using a correlation matrix identical to that of
the data, and the bottom panels are resampled data from the empirical
(non-Gaussian) distribution of return vectors. The finite-sample properties
for $\hat{\phi}$ and $\hat{\gamma}$ are excellent when data are normally distributed, whereas the convergence to the asymptotic normal distribution is clearly far slower when the data are drawn from the
empirical distribution. All quantities are computed from 100,000 artificial samples.}\label{fig:ff-VarSkewKurtT100}}
\end{figure}

Figure~\ref{fig:ff-Contour} provides a bivariate view of the same
phenomenon. It shows the joint finite-sample distributions of selected
elements involving Food Products, Apparel, and Automobiles and Trucks,
computed from empirical resamples with $T=40$, $100$, and $250$. The
left panels are for $\hat{\varrho}$, while the middle and right panels
show the corresponding elements of $\hat{\phi}$ and $\hat{\gamma}$.
The distributions for $\hat{\varrho}$ and $\hat{\phi}$ remain visibly
non-elliptical, and even bimodal, whereas the corresponding distribution
for $\hat{\gamma}$ is much closer to elliptical, even for $T=40$.

\begin{figure}[!htbp]
\centering{}\includegraphics[width=0.9\textwidth]{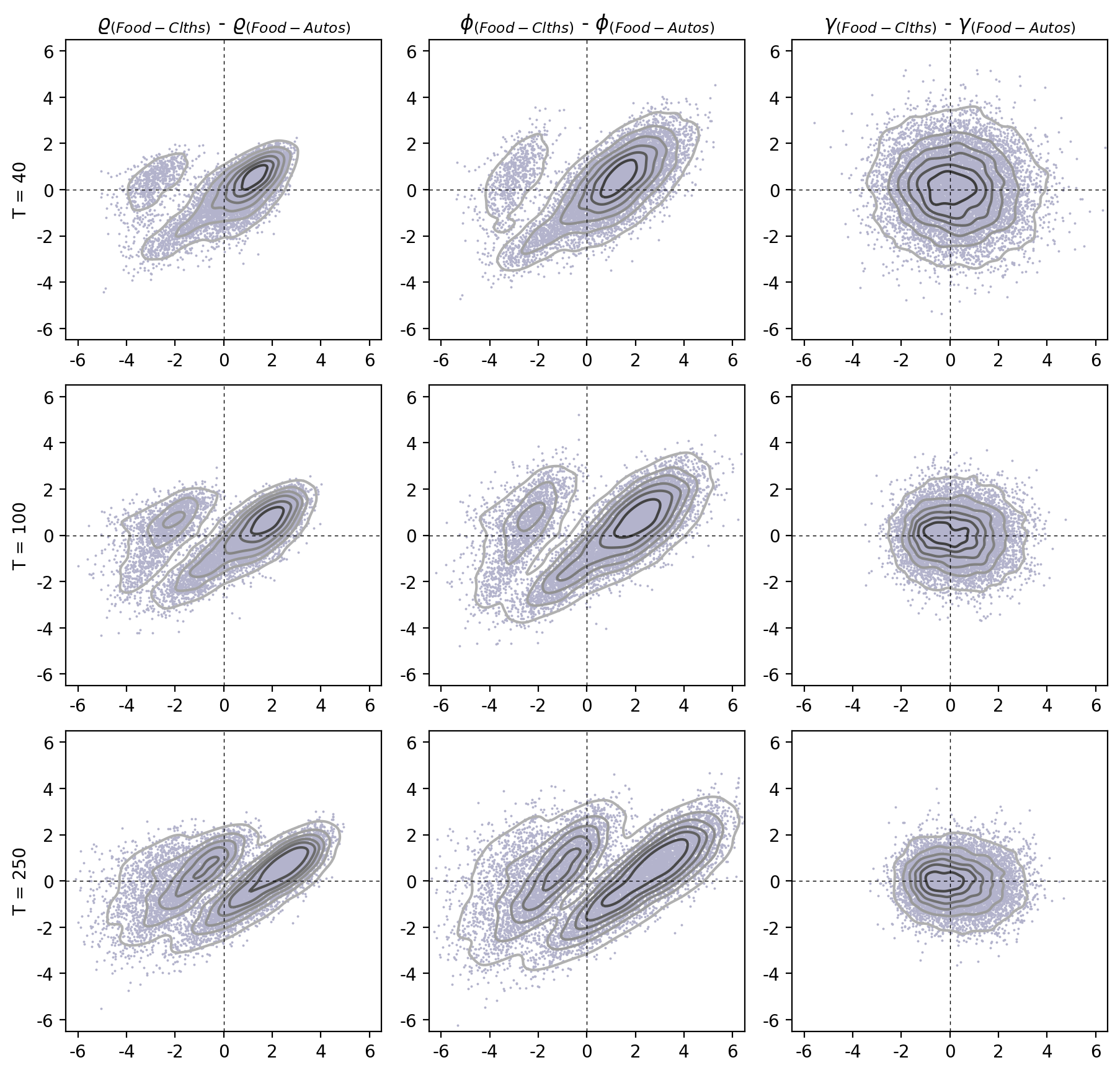}
\caption{{\small Contour plots of bivariate finite-sample distribution of $(\sqrt{T}(\hat{\varrho}_{[\mathrm{Food,Clths}]}-\varrho_{[\mathrm{Food,Clths}]}),\sqrt{T}(\hat{\varrho}_{[\mathrm{Food,Auto}]}-\varrho_{[\mathrm{Food,Auto}]}))$
(left panels), and the corresponding elements from $\hat{\phi}$ and $\hat{\gamma}$
in middle and right panels. Here $\hat{\varrho}_{[\mathrm{Food,Clths}]}$
and $\hat{\varrho}_{[\mathrm{Food,Auto}]}$ are the sample correlations
between the industry portfolio for Food Products and the industry
portfolios for Apparel and Auto, respectively. The contour plots are
based on 10,000 sample correlation matrices computed from resamples
from the empirical distribution of 503 daily return vectors (with
replacement) from the period: January 2, 2018 to December 31, 2019.}\label{fig:ff-Contour}}
\end{figure}

\subsection{Realized Correlation Matrices for 21 Assets}\label{sec:RealizedCorr}

This subsection reports an empirical analysis of realized correlation matrices constructed directly from high-frequency TAQ data. We use the same $n=21$ assets as in Figure~\ref{Fig:Dataset}, but extend the sample from the illustrative week of August 22--26, 2016 to the full period January 1, 2005 to December 31, 2020.

For each ofthe $K=4,027$ trading days, we compute a realized correlation matrix,
$\hat{C}_{t}$, from 78 five-minute intraday returns, yielding daily
time series of $\hat{\varrho}_{t}$, $\hat{\phi}_{t}$, and $\hat{\gamma}_{t}$.
Intraday returns depart from the iid benchmark underlying the closed-form asymptotic variances, since serial correlation and microstructure noise alter the sampling distribution of $\hat C_t$; we therefore read the high-frequency evidence as documenting the robustness of the GFT coordinates under weak dependence rather than as a test of the iid asymptotic variances, and the qualitative findings are unchanged.

We sort days according to the average realized variance across the 21 assets. Figure~\ref{fig:RealCorrHist} presents histograms of realized correlations for the 400 lowest-volatility days and the 400 highest-volatility days. Each histogram is based on $210\times400=84,000$ realized correlations, since $d=210$ in this application. Correlations are higher on high-volatility days, as expected, but the shape of the distribution also changes. In particular, the high-volatility distribution is clearly left-skewed, showing that the distribution of realized correlations depends strongly on the volatility regime.

\begin{figure}[!htbp]
\centering{}\includegraphics[width=0.9\textwidth]{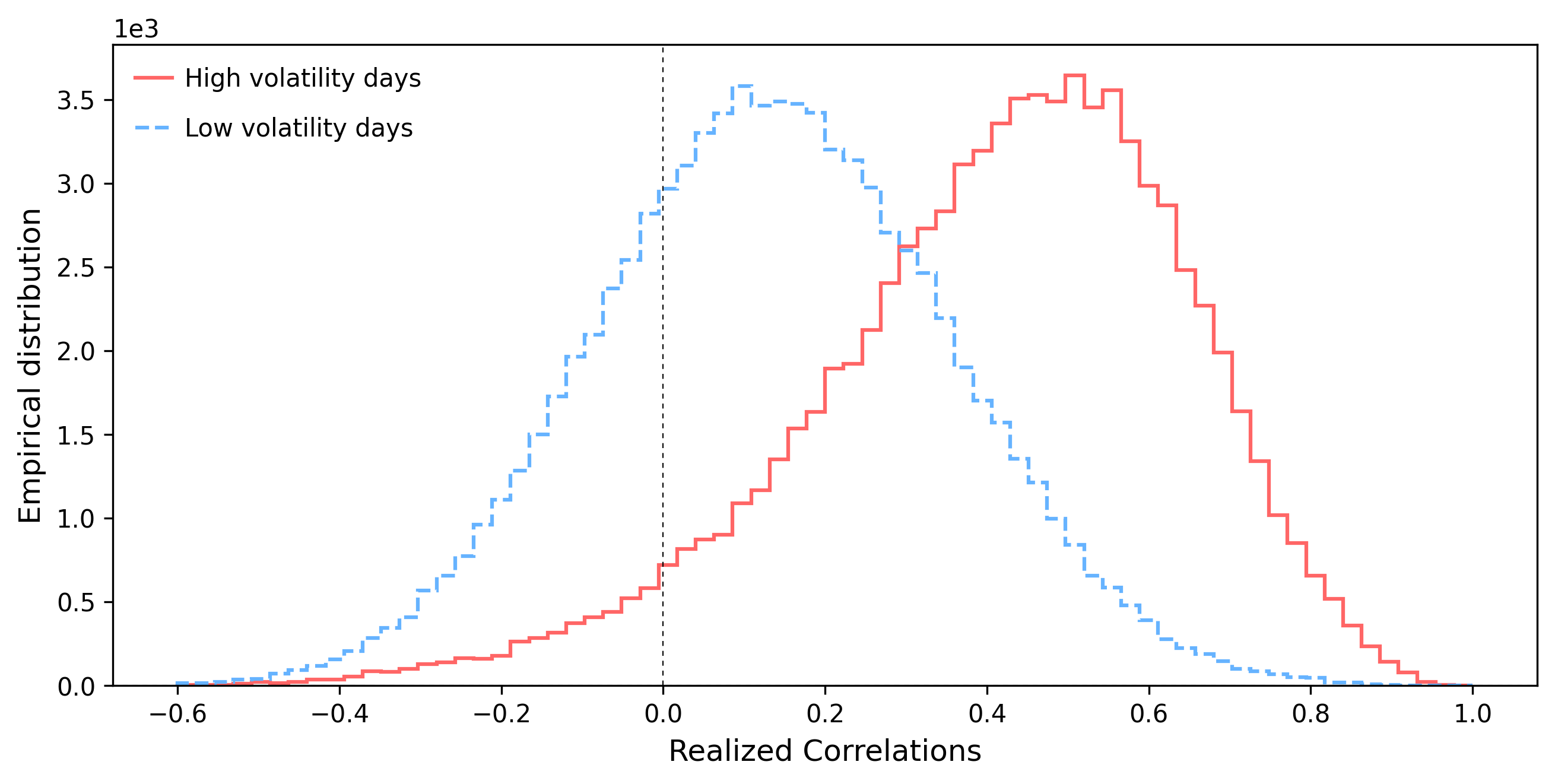}
\caption{{\small Empirical distributions for realized correlations for subsamples
with low (blue) and high (red) market volatility.}\label{fig:RealCorrHist}}
\end{figure}

The regime dependence becomes even clearer in the bivariate case. Figure~\ref{fig:RealCorrPairs} compares joint distributions for two representative coordinate pairs on the low- and high-volatility days. The low- and high-volatility regimes occupy distinct regions of the joint distribution, especially for the Fisher-transformed coordinates. Thus, the apparent bimodality in the pooled distribution of $\hat\phi$ is largely a mixture-of-regimes effect rather than evidence of bimodality within each volatility regime. The corresponding GFT coordinates display less regime separation and weaker dependence.

\begin{figure}[!htbp]
\centering
\includegraphics[width=0.9\textwidth]{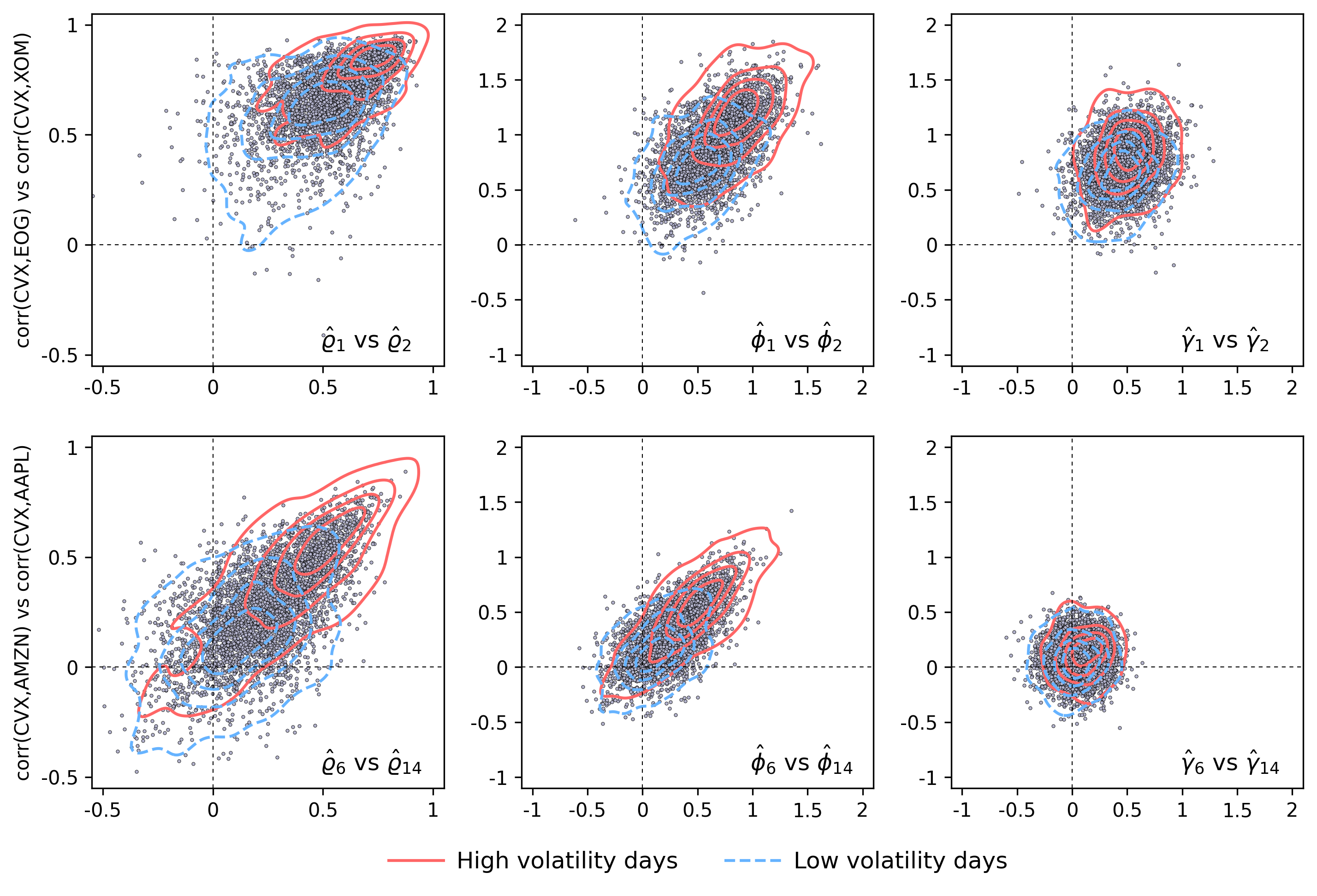}
\caption{\small Joint distributions of two representative coordinate pairs on low- and high-volatility days. The left, middle, and right columns show the corresponding pairs for $\hat\varrho$, $\hat\phi$, and $\hat\gamma$, respectively. Solid red contours correspond to the 400 highest-volatility days, and dashed blue contours correspond to the 400 lowest-volatility days. The separation between volatility regimes is most pronounced for the Fisher-transformed coordinates, which helps explain the apparent bimodality in the pooled marginal distribution of $\hat\phi$. The GFT coordinates show less regime separation and weaker dependence.}
\label{fig:RealCorrPairs}
\end{figure}

We revisit \citet{ABDE:2001}, who found that the unconditional empirical
distribution of realized correlations, $\hat{\varrho}_{i,t}$, for stock
returns is approximately normally distributed.\footnote{This empirical result
was also demonstrated for exchange rate data; see \citet{ABDL:2001}.}
Our focus is on the distributional properties of the measurement errors in
the realized measures. Since the latent paths $\varrho_t$, $\phi_t$, and
$\gamma_t$ are unobserved, we approximate them using a standard linear
Gaussian filter, obtaining smoothed paths $\varrho_t^f$, $\phi_t^f$, and
$\gamma_t^f$. We then define standardized residuals by
$\hat{\varepsilon}(\hat{\varrho})=\sqrt{T}(\hat{\varrho}_{t}-\varrho_{t}^{f})$,
$\hat{\varepsilon}(\hat{\phi})=\sqrt{T}(\hat{\phi}_{t}-\phi_{t}^{f})$,
and $\hat{\varepsilon}(\hat{\gamma})=\sqrt{T}(\hat{\gamma}_{t}-\gamma_{t}^{f})$,
and use these as proxies for the scaled measurement errors.
\begin{figure}[!htbp]
\centering{}\includegraphics[width=0.9\textwidth]{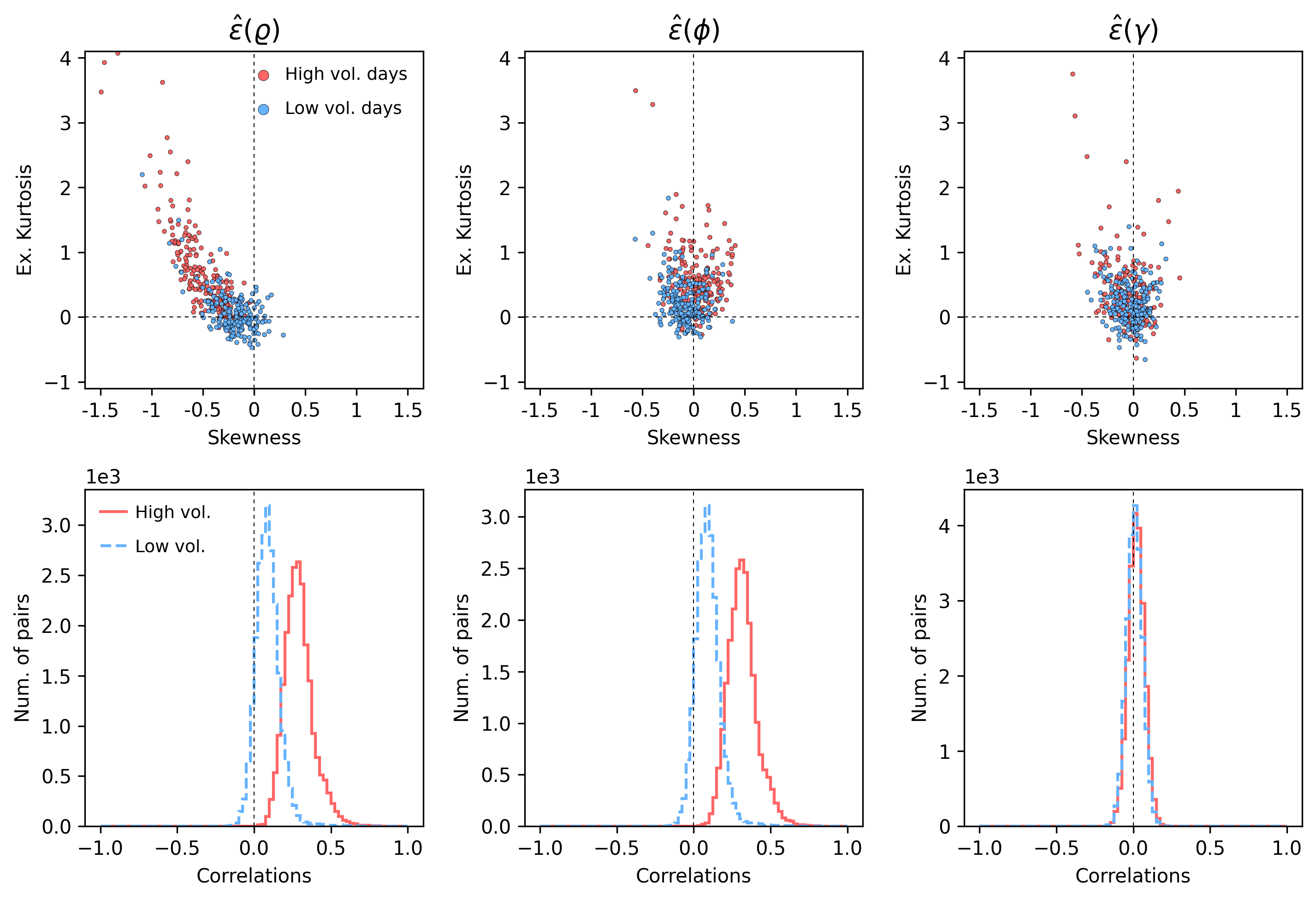}
\caption{{\small On the upper panels, skewness and excess kurtosis of the distributions
of the model residuals, $\hat{\varepsilon}_{i}(\hat{\varrho}_{i})$,
$\hat{\varepsilon}_{i}(\hat{\phi}_{i})$, and $\hat{\varepsilon}_{i}(\hat{\gamma}_{i})$,
for $i=1,\ldots,210$ correlation elements. On the lower panels, the distributions
of all 21,945 pairwise correlations between elements in the vectors of residuals,
$\hat{\varepsilon}(\hat{\varrho})$, $\hat{\varepsilon}(\hat{\phi})$,
and $\hat{\varepsilon}(\hat{\gamma})$. The distributions are constructed from
the residuals obtained for the subsamples of 400 trading days with the lowest
(blue) and highest (red) market volatility.}\label{fig:RealCorrResiduals}}
\end{figure}

The upper panels of Figure~\ref{fig:RealCorrResiduals} show the skewness
and excess kurtosis of these residuals, computed separately for the low-
and high-volatility subsamples. Residuals based on realized correlations
have substantial negative skewness and positive excess kurtosis, especially
on high-volatility days. The residuals based on $\hat{\phi}$ and $\hat{\gamma}$
are much closer to Gaussian and display fewer systematic differences across
volatility regimes.

The lower panels of Figure~\ref{fig:RealCorrResiduals} show the distributions
of correlations between residual elements. Dependence among the residuals is
substantial for $\hat{\varepsilon}(\hat{\varrho})$ and
$\hat{\varepsilon}(\hat{\phi})$, especially in the high-volatility subsample.
In contrast, the residuals based on $\hat{\gamma}$ are only weakly dependent:
their average correlation is close to zero, and the distribution is similar
across the low- and high-volatility subsamples.

Overall, the empirical results agree with the simulation evidence. Elements
of the log-transformed realized correlation matrix are approximately
uncorrelated, and their measurement errors are closer to Gaussian than those
of realized correlations. Thus, the finite-sample properties of $\hat{\gamma}$
appear to persist in a realistic data environment with non-Gaussian intraday
returns and latent correlations that are likely to vary within the trading day
\citep[see][]{HansenLuo-RobustCorr:2023}.

\section{GFT Coordinates: Covariance Stability and Weak Dependence}\label{sec:covstability}

Section~\ref{sec:MarginalJoint} documents that the GFT coordinates have much weaker finite-sample dependence than the raw correlations and the element-wise Fisher transformed correlations. We now give two theoretical results that explain this pattern. The first shows that the asymptotic covariance matrix $V_\gamma(C)$ is stable because the matrix logarithm maps perturbations through a well-conditioned linear operator except near the boundary of the correlation cone. The second shows that, locally around $C=I_n$, the GFT cancels the first-order dependence among overlapping sample correlations. Together, these results explain why $V_\gamma(C)$ is both comparatively stable in $C$ and nearly diagonal in empirically relevant designs.

\subsection{A spectral bound}

The stability of $V_\gamma(\cdot)$ admits a clean explanation under elliptical sampling. Normalizing the sample covariance to a correlation matrix removes the radial component of the data, so the kurtosis enters the GFT covariance only through a scalar; the remaining sensitivity is geometric and is governed by the conditioning of $A_C$, which deteriorates only as the eigenvalue spread of $C$ grows.

For an elliptical distribution we write $X_t\sim E(0,\Sigma,\kappa)$, where $\kappa$ is the scalar kurtosis parameter appearing in the asymptotic covariance of $\hat\Sigma$, as made explicit in the proof. The Gaussian case corresponds to $\kappa=0$.

\begin{theorem}[Spectral bound for the GFT covariance]\label{thm:spectral}
Let $X_1,\ldots,X_T$ be iid draws from $E(0,\Sigma,\kappa)$, where $\Sigma$ is positive definite and fourth moments are finite. Let $C$ be the implied correlation matrix and $\hat C$ the sample correlation matrix. Then, as $T\to\infty$,
$$
\sqrt{T}\operatorname{vecl}(\log\hat C-\log C)\xrightarrow{d}N(0,V_\gamma(C)),
$$
and
$\lambda_{\max}(V_\gamma(C))\leq(1+\kappa)\|\Pi_C\|_2^2$ with 
$\Pi_C=A_C^{-1}P_CA_C$, 
where $\|\cdot\|_2$ is the spectral norm, $A_C=\partial\operatorname{vec}(C)/\partial\operatorname{vec}(\log C)$, and
$$
P_C=I_{n^2}-\frac{1}{2}(I_{n^2}+K_n)(I_n\otimes C)M_d
$$
is the Jacobian of the covariance-to-correlation map at $\Sigma=C$, 
where $K_n$ is the commutation matrix and $M_d$ is the diagonal-selection matrix satisfying
$M_d\operatorname{vec}(X)=\operatorname{vec}(\operatorname{diag}(X))$ for every $n\times n$ matrix $X$, with $\operatorname{diag}(X)$ denoting the diagonal matrix with the same diagonal entries as $X$.
The bound is sharp: equality is approached as $C\to I_n$.
\end{theorem}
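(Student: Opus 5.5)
The plan is a delta-method argument carried out in three layers: sample covariance, then correlation, then matrix logarithm.

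\emph{Step 1: covariance limit.} By scale invariance of $\hat C$, assume $\Sigma=C$. For elliptical data with finite fourth moments, the multivariate CLT together with the standard fourth-moment formula for elliptical laws gives
$$\sqrt{T}\operatorname{vec}(\hat\Sigma-C)\xrightarrow{d}N(0,\Omega),$$
where
$$\Omega=(1+\kappa)(I_{n^2}+K_n)(C\otimes C)+\kappa\operatorname{vec}(C)\operatorname{vec}(C)'.$$
This display is where $\kappa$ is made explicit. Centering by $\bar X$ is $o_p(T^{-1/2})$ and does not change the limit.

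\emph{Step 2: correlation and logarithm.} The map $\Sigma\mapsto D^{-1/2}\Sigma D^{-1/2}$, with $D=\operatorname{diag}(\Sigma)$, has Jacobian $P_C$ at $\Sigma=C$. This is a routine differentiation: the perturbation of the diagonal enters through $-\tfrac12(dD\,C+C\,dD)$, which is exactly $-\tfrac12(I+K_n)(I\otimes C)M_d$ applied to $\operatorname{vec}(d\Sigma)$.

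The key observation is that $P_C\operatorname{vec}(C)=0$. This holds because scaling $\Sigma$ leaves the correlation matrix unchanged, so $P_C$ annihilates the rank-one $\kappa$ term in $\Omega$. The asymptotic covariance of $\operatorname{vec}\hat C$ is therefore
$$(1+\kappa)P_C(I+K_n)(C\otimes C)P_C'.$$

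The matrix logarithm is analytic on positive definite matrices, so its derivative is $A_C^{-1}$ (inverse function theorem for $\exp$). Hence the asymptotic covariance of $\operatorname{vec}\log\hat C$ is $(1+\kappa)A_C^{-1}P_C\,W\,P_C'A_C^{-T}$, with $W=(I+K_n)(C\otimes C)$. Asymptotic normality of $\operatorname{vecl}$ of this vector is then immediate, since $\operatorname{vecl}$ is a selection $E_l$.

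\emph{Step 3: spectral bound.} This is the step I expect to be the main obstacle. The idea is to rewrite
$$A_C^{-1}P_C\,W\,P_C'A_C^{-T}=\Pi_C\,(A_CWA_C')\,\Pi_C'.$$
One then needs $E_l\,\Pi_C(A_CWA_C')\Pi_C'E_l'\preceq\Pi_C\Pi_C'$ in the relevant sense, or else a factorization through the symmetric subspace on which the middle factor acts with norm at most one.

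To get there, I would use the eigendecomposition $C=Q\Lambda Q'$. In this basis $A_C$ is diagonal, with Daleckii--Krein divided-difference entries $(\lambda_i-\lambda_j)/(\log\lambda_i-\log\lambda_j)$, and $C\otimes C$ is diagonal with entries $\lambda_i\lambda_j$. Both commute with $K_n$ in this basis.

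The first thing I would try is to find a symmetric normalization, for example writing $W=(I+K_n)^{1/2}(C\otimes C)(I+K_n)^{1/2}/\ldots$, that makes the middle factor have spectral norm at most one on the off-diagonal coordinates selected by $\operatorname{vecl}$. Care is needed with the factor 2 from $I+K_n$ versus the fact that only one copy of each off-diagonal pair is kept by $\operatorname{vecl}$. If this normalization goes through, submultiplicativity gives
$$\lambda_{\max}(V_\gamma)\le(1+\kappa)\|\Pi_C\|_2^2.$$

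\emph{Step 4: sharpness.} Sharpness will be checked at $C=I_n$. There $A_C=I$ and $\Pi_C=P_I=I-\tfrac12(I+K_n)M_d$, which is a projection with $\|P_I\|_2=1$. Also $V_\gamma(I)=(1+\kappa)I_d$, because the off-diagonal entries of the sample correlation have asymptotic variance $1+\kappa$ and are asymptotically uncorrelated. So equality holds at $C=I_n$, and continuity in $C$ gives that equality is approached as $C\to I_n$.
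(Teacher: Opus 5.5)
\textbf{Verdict: genuine gap in Step 3.} Your Steps 1, 2 and 4 agree with the paper: the Browne--Shapiro limit, $P_C\operatorname{vec}(C)=0$ removing the rank-one $\kappa$ term, $A_C^{-1}$ as the derivative of the logarithm, and equality at $C=I_n$ where $\Pi_I=I_{n^2}-M_d$. Step 3, however, carries the entire content of the bound, and it is set up incorrectly rather than merely left unfinished. From $\Pi_C=A_C^{-1}P_CA_C$ you get $A_C^{-1}P_C=\Pi_CA_C^{-1}$, so the correct rewriting is
\[
A_C^{-1}P_CWP_C^{\prime}A_C^{-1}=\Pi_C\,\bigl(A_C^{-1}WA_C^{-1}\bigr)\,\Pi_C^{\prime},
\]
not $\Pi_C(A_CWA_C^{\prime})\Pi_C^{\prime}$. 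Your middle factor could not be normalized to norm at most one (or two) in any case. In the eigenbasis of $C$ it acts on symmetric directions with eigenvalues $2\Xi_{(i,j)}^2\lambda_i\lambda_j$, where $\Xi_{(i,j)}$ is the logarithmic mean of $\lambda_i,\lambda_j$ (your divided differences). For example, it has eigenvalue $2\lambda_{\max}(C)^4>2$ whenever $C\neq I_n$.

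Even with the corrected identity, two ingredients are still missing.

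First, you need the contraction $A_C^{-1}(C\otimes C)A_C^{-1}\preceq I_{n^2}$. This follows from your Daleckii--Krein representation: its eigenvalues are $\lambda_i\lambda_j/\Xi_{(i,j)}^2\le 1$, because the logarithmic mean dominates the geometric mean. Writing $W=2N_n(C\otimes C)$ with $N_n=\tfrac12(I_{n^2}+K_n)$, this gives $A_C^{-1}WA_C^{-1}\preceq 2N_n$. Submultiplicativity alone then yields only $2(1+\kappa)\|\Pi_C\|_2^2$.

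Second, to remove that factor of two you need $\Pi_CN_n$ to have range in the symmetric subspace, i.e. $N_n\Pi_CN_n=\Pi_CN_n$. This holds because $P_C\operatorname{vec}(S)=\operatorname{vec}\{S-\tfrac12(\Lambda_SC+C\Lambda_S)\}$ is symmetric for symmetric $S$ (with $\Lambda_S$ the diagonal part of $S$), and $A_C$ commutes with $N_n$. On symmetric $M$ one has $\|\operatorname{vecl}(M)\|^2\le\tfrac12\|M\|_F^2$, i.e. $E_lN_nE_l^{\prime}=\tfrac12 I_d$. Hence $\|E_l\Pi_CN_n\|_2=\|E_lN_n\Pi_CN_n\|_2\le\|\Pi_C\|_2/\sqrt2$, and so $\lambda_{\max}(V_\gamma(C))\le 2(1+\kappa)\|E_l\Pi_CN_n\|_2^2\le(1+\kappa)\|\Pi_C\|_2^2$.

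The symmetric-range fact is essential, not cosmetic. On vectorized strictly lower-triangular matrices $E_l$ preserves the norm. So noting that ``care is needed with the factor 2'' does not by itself deliver the stated constant.
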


\begin{remark}The bound in Theorem~\ref{thm:spectral} reveals two critical features of the GFT covariance. First, the distributional shape of the data is summarized entirely by the scalar $1+\kappa$. This is what makes the relative behavior of the GFT covariance across elliptical distributions so stable: changing the tails of the data rescales $V_\gamma(C)$ but does not reshape it. Second, the geometry enters through $\|\Pi_C\|_2$, where $\Pi_C$ differs from the correlation Jacobian $P_C$ only by the logarithmic-mean multiplier $\Xi$ of (\ref{eq:Aeig}). When the eigenvalues of $C$ are comparable, $\Xi$ is nearly constant and $\Pi_C\approx P_C$, so $V_\gamma(C)$ is nearly the same matrix for every such $C$, and the bound tightens toward the common value $1+\kappa$.
\end{remark}
The conditioning of $A_C$ inflates $\|\Pi_C\|_2$ only as $C$ approaches singularity, which is exactly the regime in which $\lambda_{\min}(C)$ was found to drive finite-sample behavior in Section~\ref{sec:MarginalJoint}. The bound thus provides an analytic account of both the invariance of $V_\gamma(\cdot)$ and its single point of fragility.

\subsection{Second-order local orthogonality}\label{subsec:secondorderorthogonality}
The spectral bound explains why the GFT covariance remains stable when $C$ is away from the boundary of the correlation cone. We next isolate a more local mechanism. Around $C=I_n$, the raw correlations and the element-wise Fisher transformed correlations acquire first-order dependence through overlapping index pairs, whereas the GFT cancels these first-order terms.

For a symmetric zero-diagonal matrix $\Delta$, let $B_\Delta$ denote the $d\times d$ matrix with zero diagonal whose first-order off-diagonal entries are determined by triangles in the lower-triangular indexing convention:
$$
[B_\Delta]_{(ij),(ik)}=\Delta_{jk},\qquad
[B_\Delta]_{(ij),(jk)}=\Delta_{ik},\qquad
[B_\Delta]_{(ik),(jk)}=\Delta_{ij},
$$
for distinct indices $i,j,k$, with entries involving disjoint pairs set equal to zero.

\begin{corollary}[Second-order local orthogonality of the GFT]\label{cor:secondorderorthogonality}
Let $C=I_n+\Delta$, where $\Delta$ is symmetric with zero diagonal, and let $\|\Delta\|_2\to0$. Under the conditions of Theorem~\ref{thm:spectral},
$$
\underset{=V_\varrho(C)}{\underbrace{(1+\kappa)(I_d+B_\Delta)+O(\|\Delta\|_2^2)}},
\quad
\underset{=V_\phi(C)}{\underbrace{(1+\kappa)(I_d+B_\Delta)+O(\|\Delta\|_2^2)}},
\quad
\underset{=V_\gamma(C)}{\underbrace{(1+\kappa)I_d+O(\|\Delta\|_2^2)}}.
$$
The matrix order terms are in spectral norm. Thus the raw-correlation and Fisher coordinates inherit first-order dependence from triangles in the correlation graph, whereas the GFT covariance is diagonal up to second order around $I_n$.
\end{corollary}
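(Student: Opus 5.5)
The plan is to write all three asymptotic covariances as sandwich forms built on the elliptical asymptotic covariance of $\hat\Sigma$, and then expand each to first order in $\Delta$. The proof of Theorem~\ref{thm:spectral} gives
$$\operatorname{avar}(\operatorname{vec}\hat\Sigma)=(1+\kappa)(I_{n^2}+K_n)(\Sigma\otimes\Sigma)+\kappa\operatorname{vec}(\Sigma)\operatorname{vec}(\Sigma)^{\prime}.$$
By scale invariance we take $\Sigma=C$. Because the correlation map is invariant to rescaling, $P_C\operatorname{vec}(C)=0$, so the rank-one $\kappa$ term drops out. This leaves
$$\operatorname{avar}(\operatorname{vec}\hat C)=(1+\kappa)\,P_C(I_{n^2}+K_n)(C\otimes C)P_C^{\prime},$$
which is the common factor $(1+\kappa)$ in all three displays. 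The three covariances then follow by the delta method:
\begin{itemize}
\item $V_\varrho$ is the $\operatorname{vecl}$-selection of $\operatorname{avar}(\operatorname{vec}\hat C)$.
\item $V_\phi=D_\phi V_\varrho D_\phi$, where $D_\phi=\operatorname{diag}\bigl((1-\varrho_i^2)^{-1}\bigr)$.
\item $V_\gamma=E_l A_C^{-1}\operatorname{avar}(\operatorname{vec}\hat C)A_C^{-1\prime}E_l^{\prime}$, where $E_l$ is the $\operatorname{vecl}$ selector.
\end{itemize}

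\emph{Step 1 (raw correlations).} With $C=I+\Delta$, I would expand $P_C$ and $C\otimes C$ to first order in $\Delta$. A more transparent route is the classical first-order formula for the perturbation of $\hat\rho_{ij}$:
$$d\rho_{ij}=d\Sigma_{ij}-\tfrac12\rho_{ij}(d\Sigma_{ii}+d\Sigma_{jj}).$$
The $\rho_{ij}$-weighted diagonal terms correlate with off-diagonal perturbations only at order $\Delta^2$, and $\operatorname{cov}(d\Sigma_{ij},d\Sigma_{kl})=(1+\kappa)(C_{ik}C_{jl}+C_{il}C_{jk})/T$. From these:
\begin{itemize}
\item The variances are $(1+\kappa)(1+O(\Delta^2))$.
\item For overlapping pairs the covariance is $\operatorname{cov}(\rho_{ij},\rho_{ik})\approx(1+\kappa)(C_{ii}C_{jk}+C_{ik}C_{ji})=(1+\kappa)(\Delta_{jk}+O(\Delta^2))$, and analogously for the other two triangle patterns.
\item For disjoint pairs the covariance is $O(\Delta^2)$.
\end{itemize}
This gives $(1+\kappa)(I_d+B_\Delta)+O(\|\Delta\|_2^2)$ entrywise. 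Since $d$ is fixed, entrywise bounds convert to the spectral norm.

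\emph{Step 2 (Fisher coordinates).} Here $D_\phi=I_d+O(\|\Delta\|^2)$, because $(1-\rho^2)^{-1}=1+\rho^2+\cdots$ has no linear term. Hence $V_\phi=V_\varrho+O(\|\Delta\|^2)$ immediately.

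\emph{Step 3 (GFT).} I would use the Fréchet derivative of the exponential at $L=\log C=\Delta+O(\Delta^2)$:
$$d\exp_L(H)=H+\tfrac12(LH+HL)+O(\|L\|^2\|H\|),$$
so that
$$A_C=I+\tfrac12(I\otimes\Delta+\Delta\otimes I)+O(\Delta^2),\qquad A_C^{-1}=I-\tfrac12(I\otimes\Delta+\Delta\otimes I)+O(\Delta^2).$$
The first-order perturbation of $dC$ has zero diagonal, but $A_C^{-1}$ also feeds diagonal components of $d\Sigma$ back in. Applying $A_C^{-1}$ entrywise gives
$$d\gamma_{ij}=dC_{ij}-\tfrac12\sum_k\bigl(\Delta_{ik}\,dC_{kj}+dC_{ik}\,\Delta_{kj}\bigr)+O(\Delta^2).$$
Only $k\neq i,j$ contributes, since $\Delta$ and $dC$ vanish on the diagonal.

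To compute $\operatorname{cov}(d\gamma_{ij},d\gamma_{ik})$, I would use the leading-order covariance of $dC$ at $C=I$, namely $(1+\kappa)$ times the identity on distinct pairs. The cross terms then contribute
$$-\tfrac12\Delta_{kj}\operatorname{var}(dC_{ik})-\tfrac12\Delta_{jk}\operatorname{var}(dC_{ij})=-(1+\kappa)\Delta_{jk},$$
which exactly cancels the $+(1+\kappa)\Delta_{jk}$ inherited from Step 1. The same cancellation must be checked for the patterns $(ij),(jk)$ and $(ik),(jk)$, and the variances must be checked to remain $(1+\kappa)+O(\Delta^2)$. The variance check follows because first-order cross terms pair $dC_{ij}$ with $dC_{ik}$ or $dC_{kj}$, which are uncorrelated at leading order.

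The main obstacle is this index bookkeeping in Step 3. It requires keeping track of the orientation signs in the lower-triangular convention and of both halves of the symmetric perturbation, so that the coefficient comes out to exactly $-1$ times $B_\Delta$ and not $-\tfrac12$. To control it, I would first verify the case $n=3$ explicitly with a single triangle. Then I would argue that every first-order entry involves exactly one triangle, so the general case reduces to $n=3$.
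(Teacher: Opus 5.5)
Your proposal is correct and follows essentially the same route as the paper. The paper also uses first-order triangle covariances $(1+\kappa)\Delta_{jk}$ for $\hat\varrho$, a Fisher Jacobian equal to $I_d+O(\|\Delta\|_2^2)$, and the expansion $D\log_C[H]=H-\tfrac12(\Delta H+H\Delta)$, whose two cross terms each contribute $-\tfrac12(1+\kappa)\Delta_{jk}$ and cancel the triangle term. The differences are cosmetic: you derive the raw-correlation covariances from the $\hat\Sigma$ delta method with $P_C\operatorname{vec}(C)=0$ instead of citing the appendix formula, and you invert the exponential's Fr\'echet derivative instead of differentiating the logarithm directly. The orientation-sign bookkeeping you flag is not a real obstacle, since $\Delta$ and $H$ are symmetric.
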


The local diagonalness of $V_\gamma(C)$ is therefore stronger than the corresponding local diagonalness of $V_\varrho(C)$ and $V_\phi(C)$. All three covariance matrices are diagonal at $C=I_n$, but the raw and Fisher coordinates acquire first-order off-diagonal terms as soon as the correlation matrix is perturbed away from the identity. These terms arise from triangles: for example, the asymptotic covariance between $\hat\varrho_{ij}$ and $\hat\varrho_{ik}$ is proportional to $C_{jk}$ to first order. The Fisher transform does not remove this effect, because its derivative is equal to one at the origin and its first nonlinear correction is cubic.

The GFT behaves differently. The matrix logarithm has the expansion $\log(I_n+\Delta)=\Delta-\frac{1}{2}\Delta^2+O(\|\Delta\|_2^3)$, so each GFT coordinate subtracts, to second order, the indirect two-step correlation paths running through the other variables. This is the local cancellation mechanism behind the weak dependence seen in the simulations: the raw and Fisher coordinates inherit first-order triangle dependence, whereas the GFT cancels it to first order.

\section{Inference}\label{sec:inference}

The finite-sample regularities documented in Section~\ref{sec:MarginalJoint} and the covariance-stability results in Section~\ref{sec:covstability} have a single inferential payoff: in the GFT coordinates the estimation error is approximately Gaussian, weakly dependent, and governed by a covariance matrix that varies little with the unknown $C$. Standardizing with a plug-in estimate of this covariance is therefore reliable in samples where the analogous operation is much less stable for $\hat\varrho$ and $\hat\phi$. This section examines the consequences for standardized statistics, plug-in covariance estimation, and Wald tests.

\subsection{Standardized statistics and the plug-in problem}

Inference about $C$ can be based on the limit distributions in (\ref{eq:LimitDistributions}), with the asymptotic covariances estimated by the plug-in estimators $V_\varrho(\hat C)$, $V_\phi(\hat C)$, and $V_\gamma(\hat C)$. Define the standardized statistics
\begin{equation}\label{eq:Zstats}
\begin{aligned}
Z_{\varrho,T}&=\sqrt T [V_\varrho(\hat C)]^{-1/2}(\hat\varrho-\varrho),\\
Z_{\phi,T}&=\sqrt T [V_\phi(\hat C)]^{-1/2}(\hat\phi-\phi),\\
Z_{\gamma,T}&=\sqrt T [V_\gamma(\hat C)]^{-1/2}(\hat\gamma-\gamma),
\end{aligned}
\end{equation}
where $A^{1/2}=Q\Lambda^{1/2}Q^\prime $ for the eigendecomposition $A=Q\Lambda Q^\prime $. If $\hat C\xrightarrow{p}C$ and the relevant covariance is continuous at $C$, then each statistic converges to $N_d(0,I_d)$. In finite samples, however, the quality of the $N_d(0,I_d)$ approximation is controlled not by the limit but by two distinct sources of error: the non-normality of the underlying coordinate vector, and the error in the plug-in covariance, $V(\hat C)-V(C)$. The GFT improves both, and for a common reason.

The decisive quantity is the sensitivity of $V(\cdot)$ to its argument. Whitening in (\ref{eq:Zstats}) requires $V(\hat C)$ to be a good estimate of $V(C)$, yet $V(\hat C)$ inherits the full estimation error in $\hat C$. When $V(\cdot)$ varies steeply with $C$, as $V_\varrho(\cdot)$ and $V_\phi(\cdot)$ do, a moderate error in $\hat C$ produces a large error in the whitening matrix, and the standardized statistic is poorly behaved even when its components are individually close to normal. The weak dependence documented in Section~\ref{sec:MarginalJoint} and the stability results in Section~\ref{sec:covstability} imply that the GFT whitening matrix is much less sensitive to the plug-in step: $V_\gamma(\hat C)\approx V_\gamma(C)$ over a wide range of designs, so the standardization behaves much closer to the infeasible standardization based on the true covariance matrix.

\subsection{Finite-sample normality of the standardized statistics}

We first assess the standardized statistics directly. We draw $10{,}000$ random $5\times5$ correlation matrices using the design of Section~\ref{subsec:RandomCorr}; for each we simulate a sample, compute $\hat C$, and form $Z_{\varrho,T}$, $Z_{\phi,T}$, and $Z_{\gamma,T}$ (each of dimension $d=10$). Figure~\ref{fig:qqRhoPhiGamma} reports the pooled Q--Q plots against the standard normal for $T=40,100,250$.

\begin{figure}[!htbp]
\centering{}\includegraphics[width=0.9\textwidth]{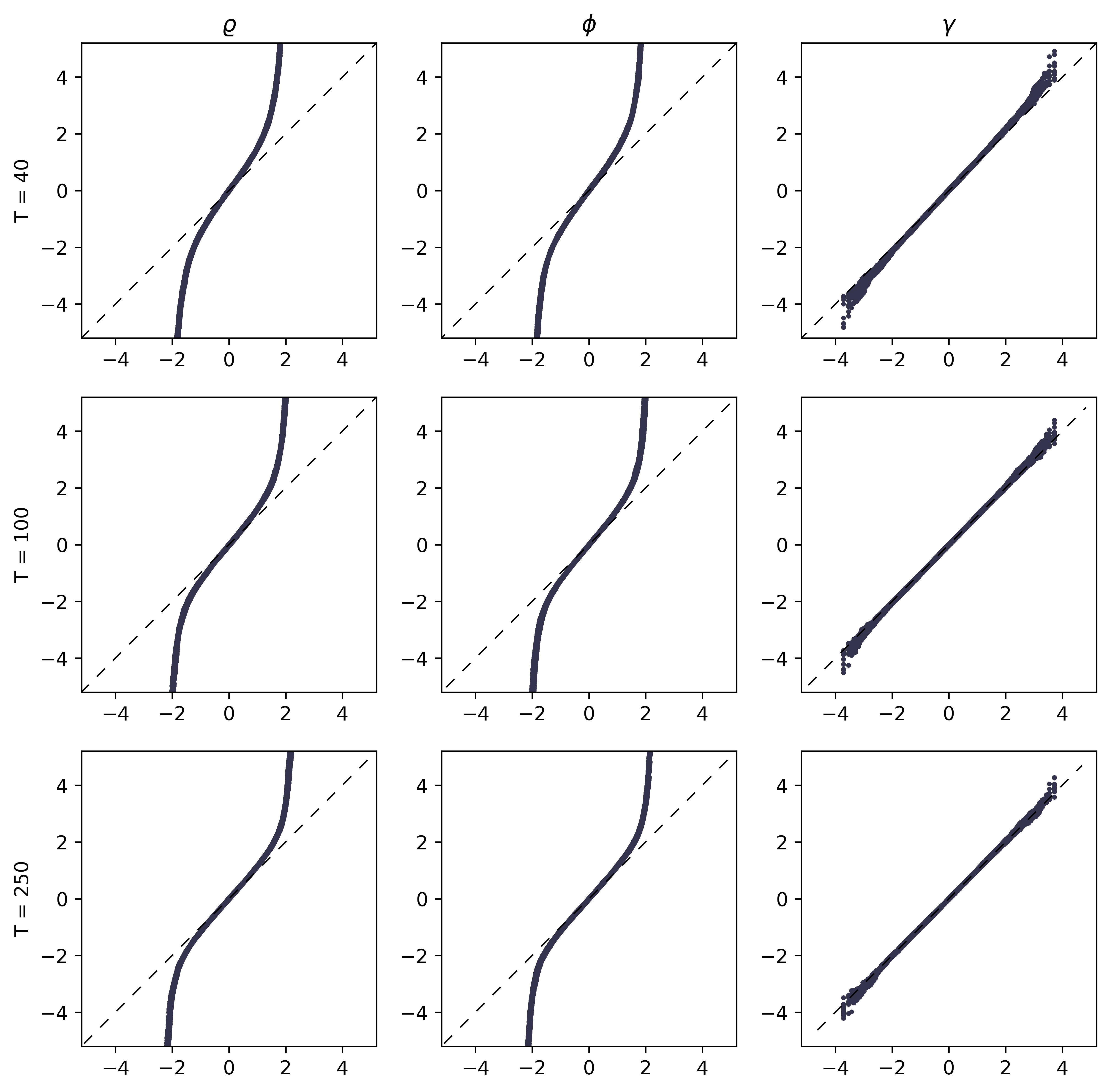}
\caption{{\small Q--Q plots of the standardized estimation errors $Z_{\varrho,T}$ (left), $Z_{\phi,T}$ (middle), and $Z_{\gamma,T}$ (right), for $T=40,100,250$. The solid line is the standard normal reference.}\label{fig:qqRhoPhiGamma}}
\end{figure}

The contrast is stark, and it separates the two error sources. The poor behavior of $Z_{\varrho,T}$ is expected from the non-normality of $\hat\varrho$ documented earlier. The behavior of $Z_{\phi,T}$ is more instructive: even though the \emph{marginal} elements of $\hat\phi$ are close to normal, the dependence among them is strong and highly sensitive to $C$, so the plug-in matrix $V_\phi(\hat C)$ is a poor estimate of $V_\phi(C)$ and the whitening in (\ref{eq:Zstats}) degrades the joint distribution. $Z_{\gamma,T}$ is close to standard normal already at $T=40$ because the GFT coordinates are both nearly Gaussian and nearly orthogonal, and because $V_\gamma(\hat C)$ is much closer to $V_\gamma(C)$ than the corresponding plug-in covariance matrices for $\hat\varrho$ and $\hat\phi$, as documented in Section~\ref{sec:MarginalJoint} and explained by Section~\ref{sec:covstability}.

\subsection{Plug-in covariance estimation}

We quantify the plug-in error directly with the Stein loss
\begin{equation}\label{eq:stein}
L(\hat V,V)=\operatorname{tr}\{\hat VV^{-1}\}-\log\det(\hat VV^{-1})-d ,
\end{equation}
computed for the plug-in estimators of $V_\varrho(C)$, $V_\phi(C)$, and $V_\gamma(C)$ across $1{,}000$ random correlation matrices ($10{,}000$ samples each) and plotted against $\lambda_{\min}(C)$ in Figure~\ref{fig:SteinLoss}.

\begin{figure}[!htbp]
\centering{}\includegraphics[width=0.9\textwidth]{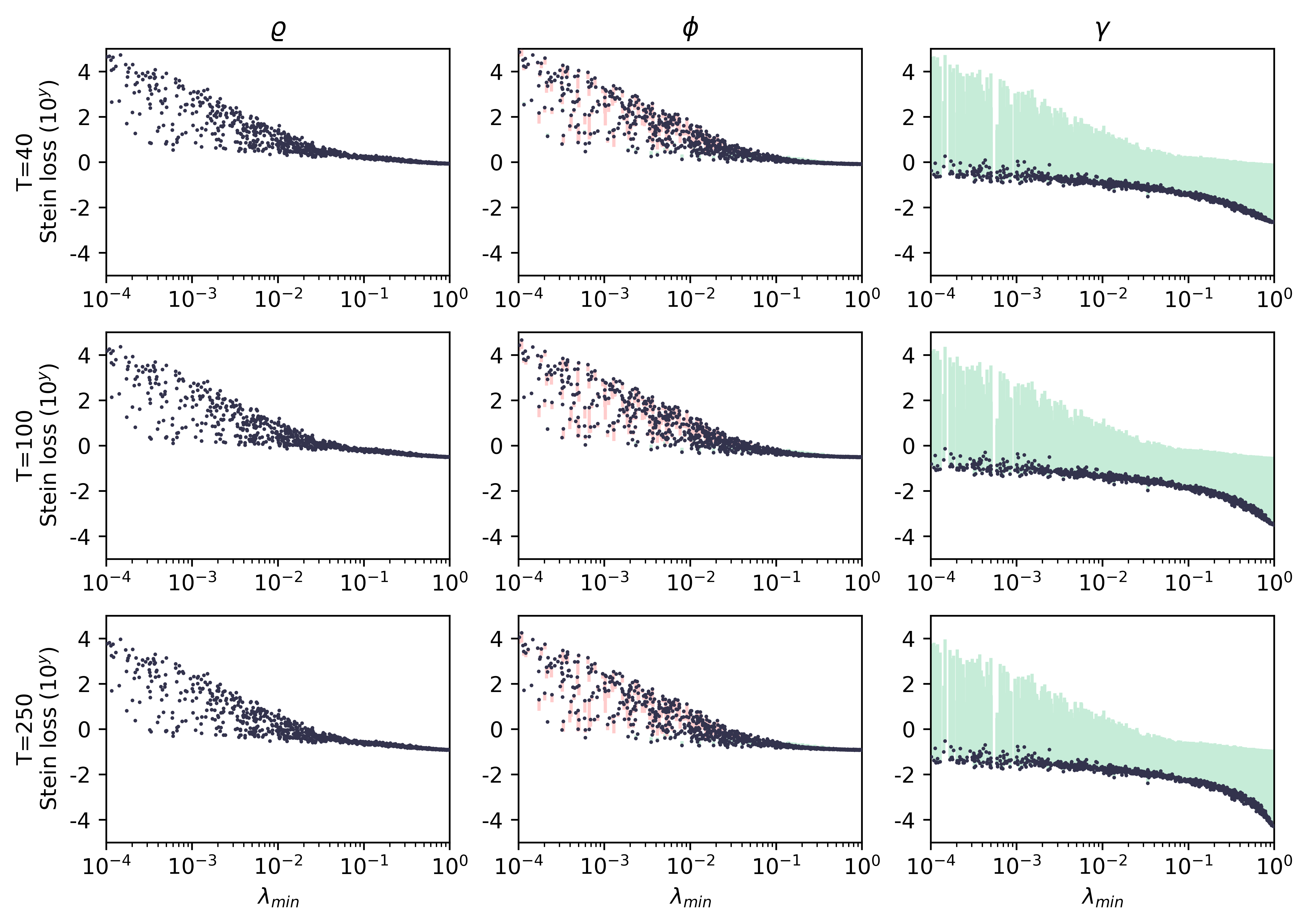}
\caption{{\small Stein loss (\ref{eq:stein}) for the plug-in estimators of the asymptotic covariance matrices of $\hat\varrho$, $\hat\phi$, and $\hat\gamma$. Red segments in the middle column mark the loss difference between $V_\varrho(\hat C)$ and $V_\phi(\hat C)$; green segments in the right column mark the difference between $V_\phi(\hat C)$ and $V_\gamma(\hat C)$. Units are $\log_{10}$, so one vertical unit is a factor of ten.}\label{fig:SteinLoss}}
\end{figure}

The reduction is large and systematic. On the $\log_{10}$ scale, $L\big(V_\gamma(\hat C),V_\gamma(C)\big)$ is typically $2$ to $5$ units below the corresponding losses for $\hat\varrho$ and $\hat\phi$, a reduction in plug-in covariance loss by factors of roughly $10^2$ to $10^5$. The gap widens as $\lambda_{\min}(C)\to0$, consistent with Theorem~\ref{thm:spectral}: it is precisely near singularity that $\|\Pi_C\|_2$, and hence the curvature of $V(\cdot)$ that the plug-in must contend with, grows.

\subsection{Wald tests}

Finally, Figure~\ref{fig:WaldTest} reports the size of Wald tests of $H_0\!: C=C_0$ at the nominal $5\%$ level, using $W_{\varrho,T}=Z_{\varrho,T}^\prime Z_{\varrho,T}$, $W_{\phi,T}=Z_{\phi,T}^\prime Z_{\phi,T}$, and $W_{\gamma,T}=Z_{\gamma,T}^\prime Z_{\gamma,T}$ against the $\chi^2_d$ critical value. The upper panels use an equicorrelation matrix with all correlations $0.8$; the lower panels use the Toeplitz structure (\ref{eq:CorrelationMatrixToeplitz}) with $\rho=0.8$.

\begin{figure}[!htbp]
\centering{}\includegraphics[width=0.9\textwidth]{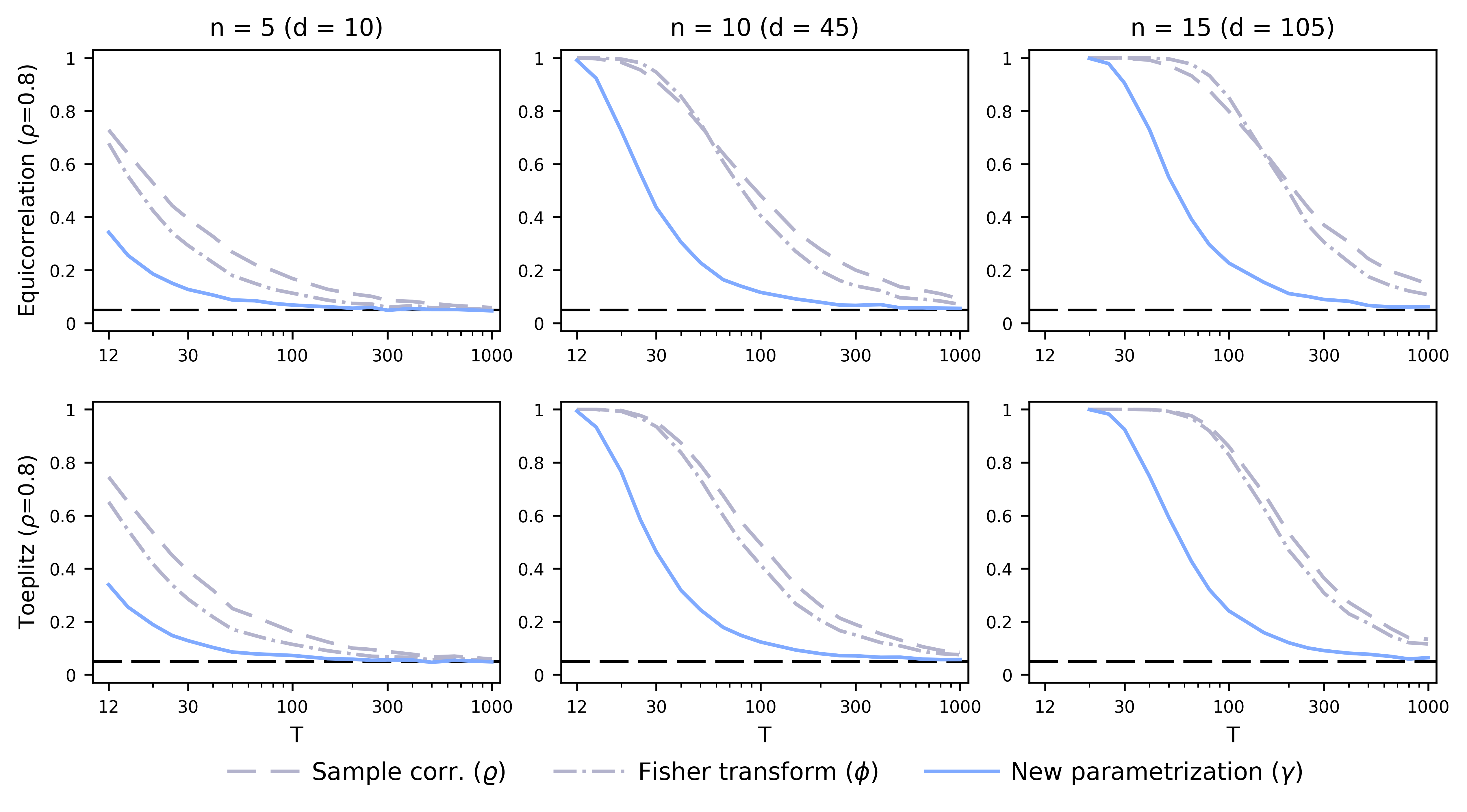}
\caption{{\small Empirical size of nominal $5\%$ Wald tests of $H_0\!:\,C=C_0$ based on $\hat\varrho$, $\hat\phi$, and $\hat\gamma$, as a function of the sample size.}\label{fig:WaldTest}}
\end{figure}

All three tests are oversized when $d/T$ is moderate or large, and all approach the nominal level as $T\to\infty$, since the standardized statistics share the same limit. Convergence is far faster for $W_{\gamma,T}$: in these designs $W_{\varrho,T}$ and $W_{\phi,T}$ require roughly five times as many observations to reach a comparable proximity to nominal size. The improvement traces back to the same source as in Figures~\ref{fig:qqRhoPhiGamma} and \ref{fig:SteinLoss}: a whitening matrix that is reliable because $V_\gamma(\cdot)$ is nearly invariant to $C$. The practical implication is that GFT-based Wald inference is usable at sample sizes where the conventional and element-wise Fisher tests are not.

\section{Conclusion}\label{sec:conclusion}

A non-singular correlation matrix, $C$, can be parametrized by
$\gamma(C)=\operatorname{vecl}\log C$. We have investigated
$\hat{\gamma}=\gamma(\hat{C})$, where $\hat{C}$ is the sample correlation matrix, with emphasis on its finite-sample distribution and dependence
structure. When $\hat{C}$ is computed from independent Gaussian data, the marginal distributions of the elements of $\hat{\gamma}$ are well approximated by their asymptotic normal distributions. Since $\gamma(C)$ coincides with Fisher's transformation when $C$ is a $2\times2$ correlation matrix, this extends a familiar property of the scalar Fisher transformation to higher-dimensional correlation matrices. As with the scalar Fisher transformation, however, the marginal Gaussian approximation can deteriorate
when the data are strongly non-Gaussian.

The more striking finding is the joint behavior of $\hat{\gamma}$. The elements of $\hat{\gamma}$ are nearly uncorrelated in finite samples, and
the covariance matrix $V_{\gamma,T}(C)$ is far more stable across values of
$C$ than the corresponding covariance matrices for $\hat{\varrho}$ and
$\hat{\phi}$. This weak dependence appears to be a robust feature of the
GFT coordinates. It is present for Gaussian samples, persists across a range
of non-Gaussian designs, and is also evident in simulations based on empirical
distributions of macroeconomic variables, daily industry returns, and
high-frequency financial data.

These properties have direct implications for inference. Because
$V_{\gamma}(C)$ is relatively insensitive to the true correlation matrix,
the plug-in estimator $V_{\gamma}(\hat{C})$ is much less affected by
estimation error in $\hat{C}$ than the analogous plug-in covariance matrices
for $\hat{\varrho}$ and $\hat{\phi}$. As a result, standardized statistics
based on
$V_{\gamma}^{-1/2}(\hat{C})\sqrt{T}(\hat{\gamma}-\gamma)$ have substantially
better finite-sample behavior than those based on the sample correlations or
the element-wise Fisher transformed correlations. This is reflected in both
the Q-Q plots and the Wald-test simulations, where the GFT-based statistic
converges much more rapidly to its asymptotic reference distribution.

Our analysis assumes iid observations, which underlies both the closed-form asymptotic covariances and the bootstrap resampling in the empirical designs. Many target applications, such as realized correlations and multivariate volatility, involve serial dependence and, at high frequency, microstructure noise. Under weak dependence the asymptotic covariance of $\hat{C}$ acquires the usual long-run (HAC) form, and feasible inference would replace the iid plug-in with a corresponding long-run covariance estimator. The properties that motivate the GFT, however, namely the near-orthogonality and near-invariance of the coordinate system and the scalar role of higher-order moments in the spectral bound, are properties of the transformation rather than of the sampling scheme, and the high-frequency evidence in Section~\ref{sec:RealizedCorr} suggests they persist under realistic dependence. A formal treatment of dependent data is left for future work.

The approximate orthogonality and covariance stability of $\hat{\gamma}$
also suggest that the GFT coordinates may be useful for regularization of
large correlation and covariance matrices. A systematic treatment of this
possibility, including near-singular and high-dimensional cases, is left for
future research.

\section*{Acknowledgments}
We thank participants at the 2022 Vienna--Copenhagen Conference on Financial Econometrics, the inaugural Virtual Time Series Seminar (VTSS), the 2023 Aarhus Workshop in Econometrics II, the 2024 SoFiE Conference, and seminars at University of Freiburg, the University of Chicago Booth School of Business, and Toronto Metropolitan University for helpful comments and discussions. The second author thanks York University for its hospitality during a research visit.

\section*{Funding and Disclosure Statement}
The authors received no financial support for this research. The authors declare no conflicts of interest.

\section*{Declaration of Generative AI and AI-Assisted Technologies}

In preparing this manuscript, the authors used ChatGPT (GPT-5.5 Thinking, OpenAI) and Claude (Opus 4.8, Anthropic) for language editing, notational consistency checks, and proofreading. The authors reviewed and edited all output and take full responsibility for the content of the manuscript, including the accuracy of the results, references, and conclusions.

\section*{Data Availability Statement}

The FRED-MD macroeconomic data are publicly available from the Federal Reserve Bank of St. Louis. The Fama-French 30 industry portfolio returns are publicly available from Kenneth French's Data Library. The high-frequency intraday return data used in Section~\ref{sec:RealizedCorr} are based on NYSE TAQ data, which are proprietary and cannot be redistributed by the authors. Researchers with TAQ access may reconstruct the data from the securities and date range described in the paper. Simulation code and replication scripts are available from the authors upon request.

\appendix
\setcounter{lemma}{0}
\renewcommand{\thelemma}{A.\arabic{lemma}}

\section{Proofs}\label{app:proofs}
 
Throughout, $N_n=\tfrac12(I_{n^2}+K_n)$ denotes the symmetrizer, so that $N_n\operatorname{vec}(M)=\operatorname{vec}\!\big(\tfrac12(M+M^\prime )\big)$, $N_n^2=N_n$, $N_n=N_n^\prime $, and $N_n(B\otimes B)=(B\otimes B)N_n$ for any symmetric $B$. 
 
\begin{lemma}[De-duplication identity]\label{lem:dedup}
Let $E_l$ be the $d\times n^2$ elimination matrix with $E_l\operatorname{vec}(M)=\operatorname{vecl}(M)$, so that $E_lE_l^\prime =I_d$. Then $E_l N_n E_l^\prime =\tfrac12 I_d$, or equivalently $J\equiv \sqrt2 N_nE_l^\prime $ is an isometry, $J^\prime J=I_d$ and $\|J\|_2=1$.
\end{lemma}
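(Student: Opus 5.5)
The plan is a direct computation on the canonical basis. I will write $E_l^\prime$ column by column, apply $N_n$, and apply $E_l$ again. For a strictly lower-triangular index pair $(i,j)$ with $i>j$, let $e_{(ij)}$ be the corresponding standard basis vector of $\mathbb{R}^d$. Since $E_l$ selects the entries $\operatorname{vec}(M)$ at positions $(i,j)$ with $i>j$, its transpose embeds: $E_l^\prime e_{(ij)}=\operatorname{vec}(e_ie_j^\prime)$, the matrix with a single one in position $(i,j)$. This also gives $E_lE_l^\prime=I_d$, because the selected positions are distinct.

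Next, apply the symmetrizer using the stated property $N_n\operatorname{vec}(M)=\operatorname{vec}(\tfrac12(M+M^\prime))$. This yields
$$N_nE_l^\prime e_{(ij)}=\tfrac12\operatorname{vec}(e_ie_j^\prime+e_je_i^\prime).$$
Applying $E_l$ reads off the strictly lower-triangular entries of $\tfrac12(e_ie_j^\prime+e_je_i^\prime)$. The only nonzero lower-triangular entry is at $(i,j)$, with value $\tfrac12$; the entry at $(j,i)$ lies above the diagonal and is discarded. Hence $E_lN_nE_l^\prime e_{(ij)}=\tfrac12 e_{(ij)}$ for every basis vector, so $E_lN_nE_l^\prime=\tfrac12 I_d$.

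The only point that needs care is that $i\neq j$. This ensures the two transposed unit positions are distinct and that exactly one of them is strictly lower triangular. With that checked, the computation is routine.

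For the isometry formulation, I will use $N_n^\prime=N_n$ and $N_n^2=N_n$ to compute
$$J^\prime J=2E_lN_n^\prime N_nE_l^\prime=2E_lN_nE_l^\prime=I_d.$$
Then $\|J\|_2^2=\lambda_{\max}(J^\prime J)=1$. Conversely, $J^\prime J=I_d$ implies $E_lN_nE_l^\prime=\tfrac12 I_d$ by the same identity, which establishes the stated equivalence.
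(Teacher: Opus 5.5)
Your proposal is correct and is essentially the paper's argument. The paper writes $E_lN_nE_l^\prime=\tfrac12\big(E_lE_l^\prime+E_lK_nE_l^\prime\big)$ and notes that $E_lK_nE_l^\prime=0$, because $K_n$ sends each strictly-lower position to its strictly-upper transpose; your column-by-column observation that the $(j,i)$ entry is discarded is the same fact. The isometry step using $N_n^\prime=N_n$ and $N_n^2=N_n$ is identical in both.
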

\begin{proof}
$E_lN_nE_l^{\prime}=\tfrac12\big(E_lE_l^{\prime}+E_lK_nE_l^{\prime}\big)=\tfrac12 I_d$, because $E_lE_l^{\prime}=I_d$ and $E_lK_nE_l^{\prime}=0$: the commutation matrix $K_n$ sends each strictly-below-diagonal $\operatorname{vec}$-position to its strictly-above-diagonal transpose, and $E_l$ selects only the former, so no position survives both. Hence $J^{\prime}J=2E_lN_nN_nE_l^{\prime}=2E_lN_nE_l^{\prime}=I_d$.
\end{proof}

\noindent Lemma~\ref{lem:dedup} is the algebraic reason the factor $(I_{n^2}+K_n)=2N_n$ in the covariance of $\operatorname{vec}(\hat C)$ contributes a factor $1$ rather than $2$ once attention is restricted to the $d=n(n-1)/2$ GFT coordinates: each off-diagonal entry is counted twice in $\operatorname{vec}$ but only once in $\operatorname{vecl}$. It is the half-vectorization (duplication/elimination) identity $E_lN_nE_l^{\prime}=\tfrac12 I_d$ stated for the strictly-lower selection.

\begin{lemma}[Logarithmic-mean contraction]\label{lem:logmean}
Let $C$ be positive definite and let
$A_C=\partial\operatorname{vec}(C)/\partial\operatorname{vec}(\log C)$. Then
$A_C^{-1}(C\otimes C)A_C^{-1}\preceq I_{n^2}$.
\end{lemma}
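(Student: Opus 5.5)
The plan is to diagonalize everything in the eigenbasis of $C$. Write $C=Q\Lambda Q^\prime$ with $Q$ orthogonal and $\Lambda=\operatorname{diag}(\lambda_1,\ldots,\lambda_n)$, $\lambda_i>0$. By the Daleckii--Krein formula for the Fréchet derivative of the matrix exponential, evaluated at $\log C=Q(\log\Lambda)Q^\prime$,
$$
A_C=(Q\otimes Q)\,\Xi\,(Q\otimes Q)^\prime,
$$
where $\Xi$ is the $n^2\times n^2$ diagonal matrix with entry $\xi_{ij}=(\lambda_i-\lambda_j)/(\log\lambda_i-\log\lambda_j)$ for $i\neq j$ and $\xi_{ii}=\lambda_i$. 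This is the logarithmic mean $L(\lambda_i,\lambda_j)$ of the two eigenvalues, indexed by the position of $e_j\otimes e_i$. I expect this to be the multiplier $\Xi$ in (\ref{eq:Aeig}). Since all $\xi_{ij}>0$, $A_C$ is symmetric positive definite and $A_C^{-1}=(Q\otimes Q)\Xi^{-1}(Q\otimes Q)^\prime$.

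In the same basis,
$$
C\otimes C=(Q\otimes Q)(\Lambda\otimes\Lambda)(Q\otimes Q)^\prime,
$$
and $\Lambda\otimes\Lambda$ is diagonal with entry $\lambda_i\lambda_j$ at the same position $(j,i)$. Because $Q\otimes Q$ is orthogonal, conjugation by it preserves the Loewner order. Hence
$$
A_C^{-1}(C\otimes C)A_C^{-1}=(Q\otimes Q)\,\Xi^{-1}(\Lambda\otimes\Lambda)\Xi^{-1}\,(Q\otimes Q)^\prime,
$$
and the claim reduces to the entrywise scalar inequality
$$
\lambda_i\lambda_j\le \xi_{ij}^2 \quad\text{for all } i,j.
$$
For $i=j$ this is an equality. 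For $i\neq j$ it is the classical geometric--logarithmic mean inequality $\sqrt{ab}\le L(a,b)$.

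The only step needing an argument is that scalar inequality. Set $a=\lambda_i$, $b=\lambda_j$, and $t=\tfrac12\log(a/b)\neq0$. Then
$$
\frac{L(a,b)}{\sqrt{ab}}=\frac{e^{t}-e^{-t}}{2t}=\frac{\sinh t}{t}\ge1,
$$
which follows from the power series of $\sinh$ and holds strictly for $t\neq0$. This yields $\Xi^{-1}(\Lambda\otimes\Lambda)\Xi^{-1}\preceq I_{n^2}$, and conjugating back gives the lemma.

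The main obstacle is justifying the eigen-representation of $A_C$ with the right index convention. I would derive it from $\exp(L+\epsilon H)$ via the integral formula
$$
D\exp_L[H]=\int_0^1 e^{sL}He^{(1-s)L}\,ds.
$$
In the eigenbasis of $L=\log C$, its $(i,j)$ entry is $H_{ij}\int_0^1\lambda_i^s\lambda_j^{1-s}\,ds=H_{ij}\,L(\lambda_i,\lambda_j)$, since $\int_0^1 a^sb^{1-s}\,ds=(a-b)/(\log a-\log b)$. Vectorizing then gives $A_C=(Q\otimes Q)\Xi(Q\otimes Q)^\prime$ with symmetric weights $\xi_{ij}=\xi_{ji}$, so the ordering convention of $\operatorname{vec}$ does not matter.
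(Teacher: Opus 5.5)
Your proof is correct and follows the paper's argument essentially step for step: you diagonalize $A_C$ and $C\otimes C$ simultaneously in the orthogonal basis $Q\otimes Q$ via the logarithmic-mean representation (\ref{eq:Aeig}), and then reduce the Loewner bound to the scalar inequality $\Xi_{(i,j)}\geq\sqrt{\lambda_i\lambda_j}$. Beyond the paper, you also derive two facts it only cites: the Daleckii--Krein representation, from the integral formula for $D\exp$, and the geometric--logarithmic mean inequality, via $\sinh t/t\geq 1$.
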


\begin{proof}
Let $C=Q\Lambda_CQ^\prime$ with
$\Lambda_C=\operatorname{diag}(\lambda_1,\dots,\lambda_n)$. By the standard Fr\'echet derivative formula for the matrix exponential, applied at
$\log C=Q(\log\Lambda_C)Q^\prime$ (see \citet{LintonMcCrorie:1995} and
\citet{ArchakovHansen:Correlation}), the Jacobian
$A_C=\partial\operatorname{vec}(C)/\partial\operatorname{vec}(\log C)$ has the spectral representation
\begin{equation}\label{eq:Aeig}
A_C=(Q\otimes Q)\Xi(Q^\prime\otimes Q^\prime),\qquad
\Xi_{(i,j)}=\begin{cases}
\qquad \lambda_i,&\lambda_i=\lambda_j,\\[2pt]
\dfrac{\lambda_i-\lambda_j}{\log\lambda_i-\log\lambda_j},&\lambda_i\neq\lambda_j,
\end{cases}
\end{equation}
where $\Xi$ is diagonal and $\Xi_{(i,j)}$ is the logarithmic mean of $\lambda_i$ and $\lambda_j$. Hence
$$
A_C^{-1}(C\otimes C)A_C^{-1}
=(Q\otimes Q)D_C(Q^\prime\otimes Q^\prime),
\qquad
[D_C]_{(i,j)}=\frac{\lambda_i\lambda_j}{\Xi_{(i,j)}^2}.
$$
Since the logarithmic mean is bounded below by the geometric mean,
$\Xi_{(i,j)}\geq\sqrt{\lambda_i\lambda_j}$, each diagonal entry of
$D_C$ satisfies
$[D_C]_{(i,j)}\leq 1$. Thus $D_C\preceq I_{n^2}$, and the result follows because $Q\otimes Q$ is orthogonal.
\end{proof}

\noindent \textbf{Proof of Theorem~\ref{thm:spectral}.} Consider $C=\operatorname{diag}(\Sigma)^{-1/2}\Sigma\operatorname{diag}(\Sigma)^{-1/2}$. Differentiating $C_{ij}=\Sigma_{ij}/\sqrt{\Sigma_{ii}\Sigma_{jj}}$ and evaluating at $\Sigma=C$ (so $\Sigma_{ii}=1$), one obtains, in matrix form, $dC=d\Sigma-\tfrac12(\Lambda C+C\Lambda)$ with $\Lambda=\operatorname{diag}((d\Sigma)_{11},\ldots,(d\Sigma)_{nn})$. Vectorizing and using $\operatorname{vec}(\Lambda)=M_d\operatorname{vec}(d\Sigma)$, together with $\operatorname{vec}(C\Lambda)=(I_n\otimes C)\operatorname{vec}(\Lambda)$ and $\operatorname{vec}(\Lambda C)=(C\otimes I_n)\operatorname{vec}(\Lambda)$, gives $\operatorname{vec}(dC)=P_C\operatorname{vec}(d\Sigma)$ with $P_C=I_{n^2}-\tfrac12(C\otimes I_n+I_n\otimes C)M_d$. 
Because $\Lambda$ is diagonal and $C$ is symmetric, $\Lambda C=(C\Lambda)^\prime$, and hence
$(C\otimes I_n)\operatorname{vec}(\Lambda)=K_n(I_n\otimes C)\operatorname{vec}(\Lambda)$ for all such $\Lambda$, i.e. $(C\otimes I_n)M_d=K_n(I_n\otimes C)M_d$. Substituting yields $P_C=I_{n^2}-\tfrac12(I_{n^2}+K_n)(I_n\otimes C)M_d$.
 
Next, for $X_t\sim\mathrm{iid}\,E(0,\Sigma,\kappa)$ the sample covariance satisfies $\sqrt T\operatorname{vec}(\hat\Sigma-\Sigma)\xrightarrow{d}N(0,(1+\kappa)(I_{n^2}+K_n)(\Sigma\otimes\Sigma)+\kappa\operatorname{vec}(\Sigma)\operatorname{vec}(\Sigma)^\prime)$, 
see \citet{BrowneShapiro:1986}. By the delta method with the Jacobian, $P_C$, evaluated at $\Sigma=C$, $V_C=(1+\kappa)P_C(C\otimes C)(I_{n^2}+K_n)P_C^\prime+\kappa P_C\operatorname{vec}(C)\operatorname{vec}(C)^\prime P_C^\prime$. Since $C$ is a correlation matrix, $M_d\operatorname{vec}(C)=\operatorname{vec}(I_n)$, and therefore
$
P_C\operatorname{vec}(C)=\operatorname{vec}(C)-\tfrac12(I_{n^2}+K_n)(I_n\otimes C)\operatorname{vec}(I_n)=\operatorname{vec}(C)-\tfrac12(I_{n^2}+K_n)\operatorname{vec}(C)=\tfrac12(I_{n^2}-K_n)\operatorname{vec}(C)=0$, because $C$ is symmetric. The radial (kurtosis) term vanishes and
\begin{equation}\label{eq:VC}
V_C=(1+\kappa)P_C(C\otimes C)(I_{n^2}+K_n)P_C^\prime .
\end{equation}
Thus, after normalization to a correlation matrix, non-Gaussianity enters \emph{only} through the scalar $1+\kappa$; the entire dependence on the shape of the distribution has collapsed to a single number.
 
Moreover, the representation (\ref{eq:Aeig}) shows that $A_C$ is symmetric positive definite and commutes with $K_n$ and $N_n$.
 
The GFT covariance is $V_\gamma(C)=E_lA_C^{-1}V_CA_C^{-1}E_l^\prime$. Substitute (\ref{eq:VC}) and write $(I_{n^2}+K_n)=2N_n$, so that $V_C=2(1+\kappa)P_C(C\otimes C)N_nP_C^\prime$. Because $(C\otimes C)$ commutes with $N_n$ and $N_n^2=N_n$, we have $(C\otimes C)N_n=N_n(C\otimes C)N_n$; writing $(C\otimes C)^{1/2}=C^{1/2}\otimes C^{1/2}$, which also commutes with $N_n$, this factors as a square, $P_C(C\otimes C)N_nP_C^\prime=H_C H_C^\prime$ with $H_C \equiv P_CN_n(C\otimes C)^{1/2}$. Hence $V_\gamma(C)=2(1+\kappa)(E_lA_C^{-1}H_C)(E_lA_C^{-1}H_C)^\prime$ and
\begin{equation}\label{eq:lamM}
\lambda_{\max}(V_\gamma(C))=2(1+\kappa)\big\|E_lA_C^{-1}H_C\big\|_2^2 .
\end{equation}
We bound $\|E_lA_C^{-1}H_C\|_2$ in two steps. First, the operator $A_C^{-1}P_CN_n$ maps into the symmetric subspace, i.e.
\begin{equation}\label{eq:symrange}
N_nA_C^{-1}P_CN_n=A_C^{-1}P_CN_n .
\end{equation}
To see this, note first that $P_C$ preserves the symmetric subspace. Indeed, if $S$ is symmetric and $\Lambda_S=\operatorname{diag}(S_{11},\ldots,S_{nn})$, then $P_C\operatorname{vec}(S)=\operatorname{vec}\left\{S-\tfrac12(\Lambda_SC+C\Lambda_S)\right\}$, which is symmetric because $S$ is symmetric and $(\Lambda_SC+C\Lambda_S)^\prime=\Lambda_SC+C\Lambda_S$. Hence $N_nP_CN_n=P_CN_n$. Since $A_C^{-1}$ commutes with $N_n$ by (\ref{eq:Aeig}), this gives (\ref{eq:symrange}). Therefore, using $E_lA_C^{-1}H_C=E_l(A_C^{-1}P_CN_n)(C\otimes C)^{1/2}$ and (\ref{eq:symrange}), $E_lA_C^{-1}H_C=E_lN_n(A_C^{-1}P_CN_n)(C\otimes C)^{1/2}=\tfrac{1}{\sqrt2}J^\prime(A_C^{-1}P_CN_n)(C\otimes C)^{1/2}$, where $J^\prime=\sqrt2E_lN_n$ has $\|J^\prime\|_2=1$ by Lemma~\ref{lem:dedup}. Hence $\big\|E_lA_C^{-1}H_C\big\|_2\le\tfrac{1}{\sqrt2}\big\|A_C^{-1}P_CN_n(C\otimes C)^{1/2}\big\|_2$. Second, writing $A_C^{-1}P_C=\Pi_CA_C^{-1}$ (with $\Pi_C=A_C^{-1}P_CA_C$, using $A_C$ symmetric) and $A_C^{-1}N_n=N_nA_C^{-1}$, $A_C^{-1}P_CN_n(C\otimes C)^{1/2}=\Pi_C N_n A_C^{-1}(C\otimes C)^{1/2}$, so, since $\|N_n\|_2=1$ and $\big\|A_C^{-1}(C\otimes C)^{1/2}\big\|_2^2=\lambda_{\max}(A_C^{-1}(C\otimes C)A_C^{-1})\leq 1$ by Lemma~\ref{lem:logmean}, $\big\|A_C^{-1}P_CN_n(C\otimes C)^{1/2}\big\|_2\le\|\Pi_C\|_2$.
 
Together, these two bounds imply
$\big\|E_lA_C^{-1}H_C\big\|_2^2\le\tfrac12\|\Pi_C\|_2^2$. Substituting this into (\ref{eq:lamM}) gives $\lambda_{\max}(V_\gamma(C))\le(1+\kappa)\|\Pi_C\|_2^2$ with $\Pi_C=A_C^{-1}P_CA_C$, which is the claimed bound. This completes the proof.
\hfill{}$\square$\medskip{}
 
The constant in the bound is sharp. As $C\to I_n$, one has
$A_C\to I_{n^2}$, $\Pi_C\to P_C$, $\|\Pi_C\|_2\to1$, and
$\lambda_{\max}(V_\gamma(C))\to1+\kappa$. A direct submultiplicative
bound that ignores the half-vectorization step would give only
$2(1+\kappa)\|\Pi_C\|_2^2$; Lemma~\ref{lem:dedup} is what removes this
extraneous factor of two.

\noindent \textbf{Proof of Corollary~\ref{cor:secondorderorthogonality}.}
Write $C=I_n+\Delta$, where $\Delta$ is symmetric with zero diagonal, and let $H$ denote the Gaussian limit of $\sqrt{T}(\hat C-C)$. 
The covariance formula for the sample correlation matrix in Appendix~\ref{app:additional} gives, uniformly for distinct indices $i,j,k$,
$$
\operatorname{cov}(H_{ij},H_{ik})=(1+\kappa)\Delta_{jk}+O(\|\Delta\|_2^2),\qquad
\operatorname{var}(H_{ij})=(1+\kappa)+O(\|\Delta\|_2^2),
$$
with analogous expressions for the other overlapping pairs, while disjoint pairs have covariance $O(\|\Delta\|_2^2)$. Hence
$V_\varrho(C)=(1+\kappa)(I_d+B_\Delta)+O(\|\Delta\|_2^2)$. 
Since the Fisher transform is element-wise and
$\operatorname{arctanh}(x)=x+\frac{x^3}{3}+O(x^5)$, with derivative $1+O(x^2)$, its Jacobian at $C=I_n+\Delta$ is $I_d+O(\|\Delta\|_2^2)$. Therefore the Fisher transform leaves the first-order covariance terms unchanged, and
$$
V_\phi(C)=(1+\kappa)(I_d+B_\Delta)+O(\|\Delta\|_2^2).
$$
For the GFT, the Fr\'echet derivative of the matrix logarithm at $C=I_n+\Delta$ is
$$
D\log_C[H]=H-\frac{1}{2}(\Delta H+H\Delta)+O(\|\Delta\|_2^2\|H\|_2).
$$
Thus, for an overlapping pair $(ij)$ and $(ik)$,
$$
D\gamma_{ij}[H]=H_{ij}-\frac{1}{2}\Delta_{jk}H_{ik}+O_p(\|\Delta\|_2),\qquad
D\gamma_{ik}[H]=H_{ik}-\frac{1}{2}\Delta_{jk}H_{ij}+O_p(\|\Delta\|_2),
$$
where the displayed terms are the only first-order terms that can contribute to the covariance between these two coordinates. Indeed, the remaining first-order terms involve covariances at $C=I_n$ between distinct off-diagonal elements that are zero unless the two elements coincide, by the Pearson--Filon formula. Using the identities above at $C=I_n$, the first-order covariance is
$$
(1+\kappa)\Delta_{jk}-\frac{1}{2}(1+\kappa)\Delta_{jk}-\frac{1}{2}(1+\kappa)\Delta_{jk}=0.
$$
Hence $[V_\gamma(C)]_{(ij),(ik)}=O(\|\Delta\|_2^2)$. The same argument applies to the other overlapping configurations, $(ij),(jk)$ and $(ik),(jk)$, after relabeling the shared and non-shared indices. The diagonal entries satisfy $[V_\gamma(C)]_{(ij),(ij)}=(1+\kappa)+O(\|\Delta\|_2^2)$, since $\Delta_{ij}$ enters only quadratically in the marginal variance. Combining the diagonal and off-diagonal expansions gives $V_\gamma(C)=(1+\kappa)I_d+O(\|\Delta\|_2^2)$. This completes the proof.
\hfill{}$\square$

\section{Expressions for the Asymptotic Variances}\label{app:additional}

The asymptotic distributions for $\hat{\varrho}$, $\hat{\phi}$,
and $\hat{\gamma}$ are inherited from those of $\hat{C}$. Suppose
that $\sqrt T\operatorname{vec}(\hat C-C)\xrightarrow{d}N(0,V_C)$, 
as $T\rightarrow\infty$, where $V_C$ is an $n^2\times n^2$ matrix of rank
$r\leq n(n-1)/2$. Convenient closed-form expressions for $V_{C}$ are available in special cases, see \citet{Nel:1985}, \citet{BrowneShapiro:1986},
and \citet{NeudeckerWesselman:1990}. For sample correlation matrices
computed from independent vectors, the elements of $V_{C}$ are
given by
\begin{align*}
\mu_{i} &= \mathbb{E}[X_{it}],\\
\sigma_{ij} &= \mathbb{E}\bigl[(X_{it}-\mu_{i})(X_{jt}-\mu_{j})\bigr],\\
\mu_{ijkl} &= \mathbb{E}\bigl[(X_{it}-\mu_{i})(X_{jt}-\mu_{j})(X_{kt}-\mu_{k})(X_{lt}-\mu_{l})\bigr],\\
\kappa_{ijkl} &= \frac{\mu_{ijkl}}{\sqrt{\sigma_{ii}\sigma_{jj}\sigma_{kk}\sigma_{ll}}}.
\end{align*}
The elements of $V_{C}$ are then
$$
[V_{C}]_{ij,kl}=\kappa_{ijkl}
+\tfrac{1}{4}C_{ij}C_{kl}\bigl(\kappa_{iikk}+\kappa_{jjkk}+\kappa_{iill}+\kappa_{jjll}\bigr)
-\tfrac{1}{2}C_{ij}\bigl(\kappa_{iikl}+\kappa_{jjkl}\bigr)
-\tfrac{1}{2}C_{kl}\bigl(\kappa_{ijkk}+\kappa_{ijll}\bigr).
$$
In the special case where $X_{t}$ is normally distributed, the expression for $[V_{C}]_{ij,kl}$ simplifies to that of \citet{PearsonFilon:1898},
$$
\begin{aligned}
[V_{C}]_{ij,kl}={}&C_{ik}C_{jl}+C_{il}C_{jk}
+\tfrac{1}{2}C_{ij}C_{kl}\bigl(C_{ik}^{2}+C_{jk}^{2}+C_{il}^{2}+C_{jl}^{2}\bigr)\\
&-C_{ij}\bigl(C_{ik}C_{il}+C_{jk}C_{jl}\bigr)
-C_{kl}\bigl(C_{ik}C_{jk}+C_{il}C_{jl}\bigr).
\end{aligned}
$$
Since $\hat{\varrho}=\operatorname{vecl}(\hat{C})$ and $\hat{\phi}$ is an element-wise transformation of $\hat{\varrho}$, their asymptotic variances are
$V_{\varrho}(C)=E_{l}V_{C}E_{l}^{\prime}$ and $V_{\phi}(C)=D_{\varrho}E_{l}V_{C}E_{l}^{\prime}D_{\varrho}$, where $E_{l}$ is the elimination matrix characterized by $\operatorname{vecl}[M]=E_{l}\operatorname{vec}[M]$ for any $n\times n$ matrix $M$, and
$
D_{\varrho}=\operatorname{diag}\!\left(\tfrac{1}{1-\varrho_{1}^{2}},\tfrac{1}{1-\varrho_{2}^{2}},\ldots,\tfrac{1}{1-\varrho_{d}^{2}}\right),
$
see \citet{LinPerlman1985}. For the GFT coordinates it follows that
$V_{\gamma}(C)=E_{l}A_C^{-1}V_{C}A_C^{-1}E_{l}^{\prime}$, where $A_C=\partial\operatorname{vec}(C)/\partial\operatorname{vec}(\log C)$ is the symmetric Jacobian given in (\ref{eq:Aeig}).

{\footnotesize\bibliographystyle{chicago}
\setstretch{1.3}
\bibliography{prh}
}{\footnotesize\par}

\clearpage{}

\setcounter{equation}{0}\renewcommand{\theequation}{S.\arabic{equation}}
\setcounter{figure}{0}\renewcommand{\thefigure}{S.\arabic{figure}}
\setcounter{table}{0}\renewcommand{\thetable}{S.\arabic{table}}
\setcounter{section}{0}\renewcommand{\thesection}{S.\arabic{section}}
\setcounter{page}{1}\renewcommand{\thepage}{S.\arabic{page}}

\part*{Supplementary Material}

\section{Additional Non-Gaussian Simulation Results}

We next consider a non-Gaussian design with a known correlation structure. The sample correlation matrices are computed from random samples with $n=25$, where $C$ is the $25\times25$ Toeplitz correlation matrix in (\ref{eq:CorrelationMatrixToeplitz}) with $\rho=0.9$. Figure~\ref{fig:IGproperties} reports finite-sample results for the three parametrizations, $\hat{\varrho}$, $\hat{\phi}$, and $\hat{\gamma}$, when the marginal distributions are Inverse Gaussian with scale and location parameters equal to one. These marginals are skewed and leptokurtic, with skewness equal to three and excess kurtosis equal to 15. The figure plots excess kurtosis against skewness for the 300 elements of each transformed correlation vector, for sample sizes $T=40$, $T=100$, and $T=250$. The shade of each point indicates the corresponding variance. The results are based on 100,000 simulations.

\begin{figure}[!htbp]
\centering{}\includegraphics[width=0.95\textwidth]{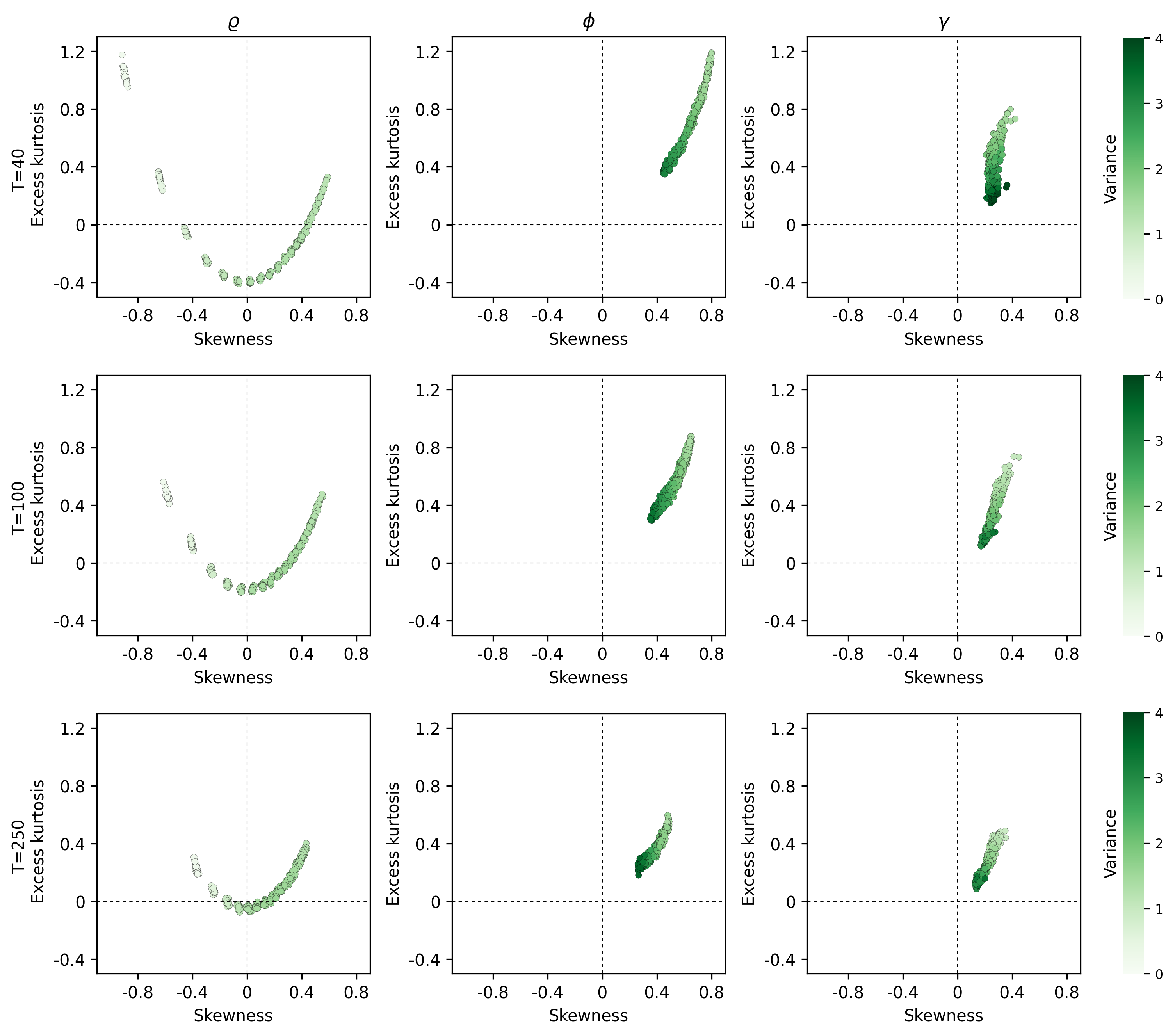}\caption{\small Properties when $\hat{C}$ is computed from $X_t\sim iid \operatorname{IG}(1,1)$ with $n=25$. Variance, skewness, and excess kurtosis for the $d=300$ elements of $\hat{\varrho}$, $\hat{\phi}$, and $\hat{\gamma}$ are shown for sample sizes $T=40$, $T=100$, and $T=250$.\label{fig:IGproperties}}
\end{figure}

In this design, the departures from normality of $\hat{\phi}$ and $\hat{\gamma}$ are evident for all three sample sizes. The GFT coordinates nevertheless perform somewhat better: on average, $\hat{\gamma}$ exhibits less skewness and excess kurtosis than $\hat{\phi}$.

\begin{figure}[!htbp]
\centering{}\includegraphics[width=0.95\textwidth]{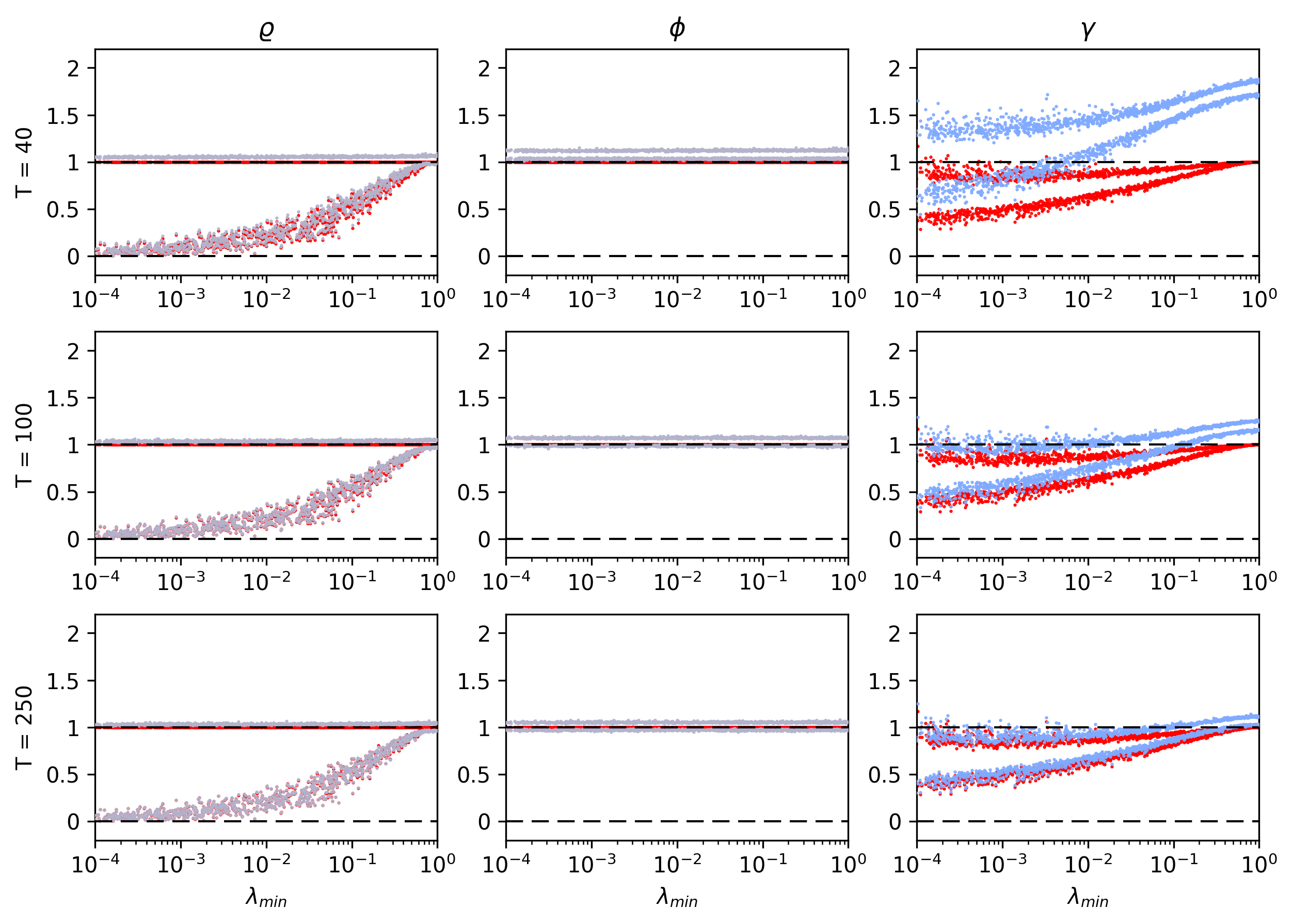}\caption{\small Smallest and largest finite-sample variances of $\sqrt{T}(\hat{\varrho}-\varrho)$, $\sqrt{T}(\hat{\phi}-\phi)$, and $\sqrt{T}(\hat{\gamma}-\gamma)$ for $n=25$ ($d=300$), plotted against $\lambda_{\min}(C)$ for 1,000 random correlation matrices. Columns correspond to $\varrho$, $\phi$, and $\gamma$. Blue dots are finite-sample variances, and red dots are the corresponding asymptotic variances.\label{fig:RandomDesignVariances}}
\end{figure}

\begin{figure}[!htbp]
\centering{}\includegraphics[width=0.95\textwidth]{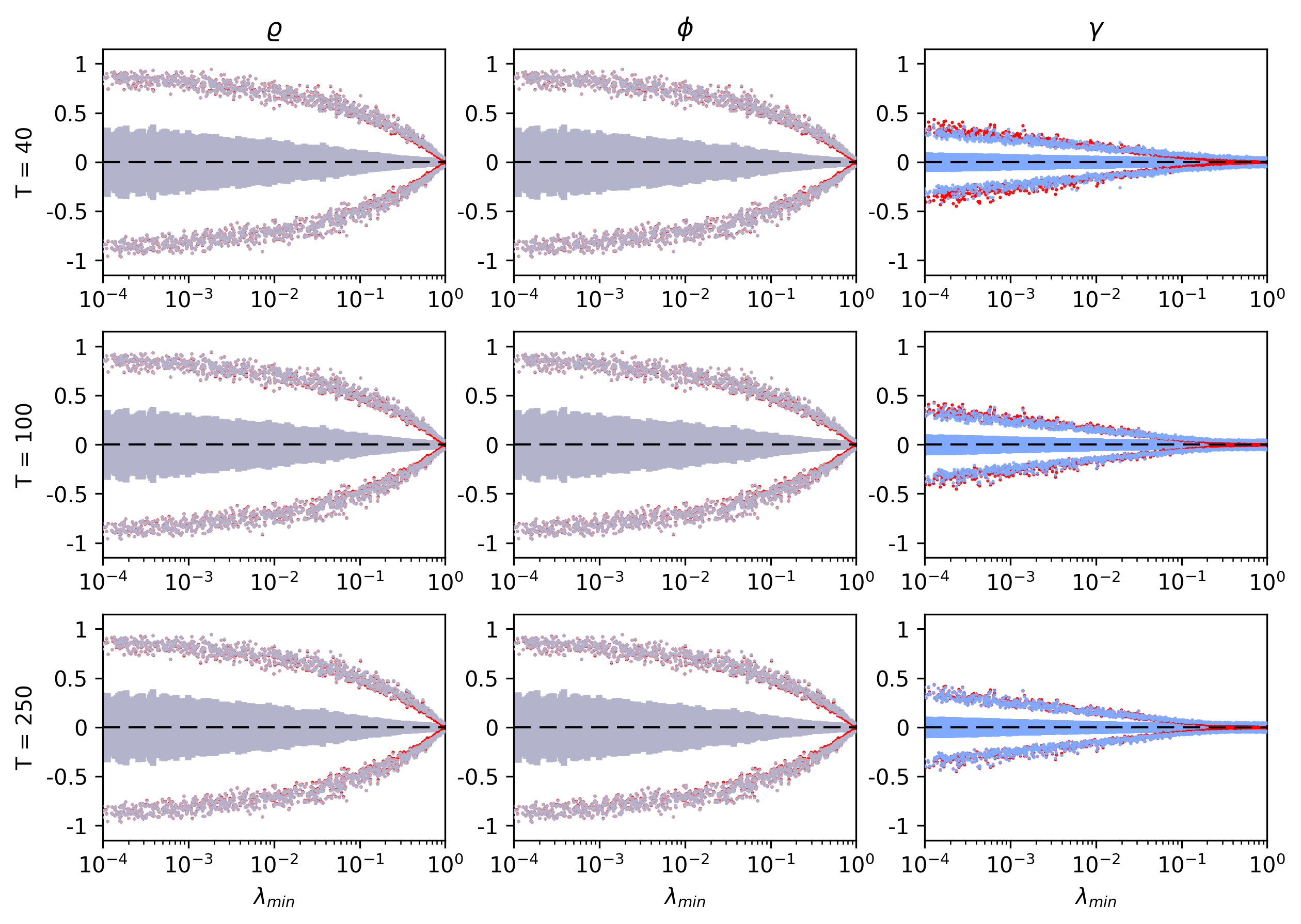}\caption{\small Range of finite-sample correlations between elements of $\hat{\varrho}$, $\hat{\phi}$, and $\hat{\gamma}$ for $n=25$ ($d=300$), plotted against $\lambda_{\min}(C)$ for 1,000 random correlation matrices. Columns correspond to $\varrho$, $\phi$, and $\gamma$, and rows correspond to sample sizes $T=40$, $100$, and $250$. Blue dots show the smallest and largest finite-sample correlations, red dots show the corresponding asymptotic values, and shaded regions contain 80\% of the 44,850 pairwise correlations in each design.\label{fig:R-CorrelationRange}}
\end{figure}
\begin{figure}[!htbp]
\centering{}\includegraphics[width=0.95\textwidth]{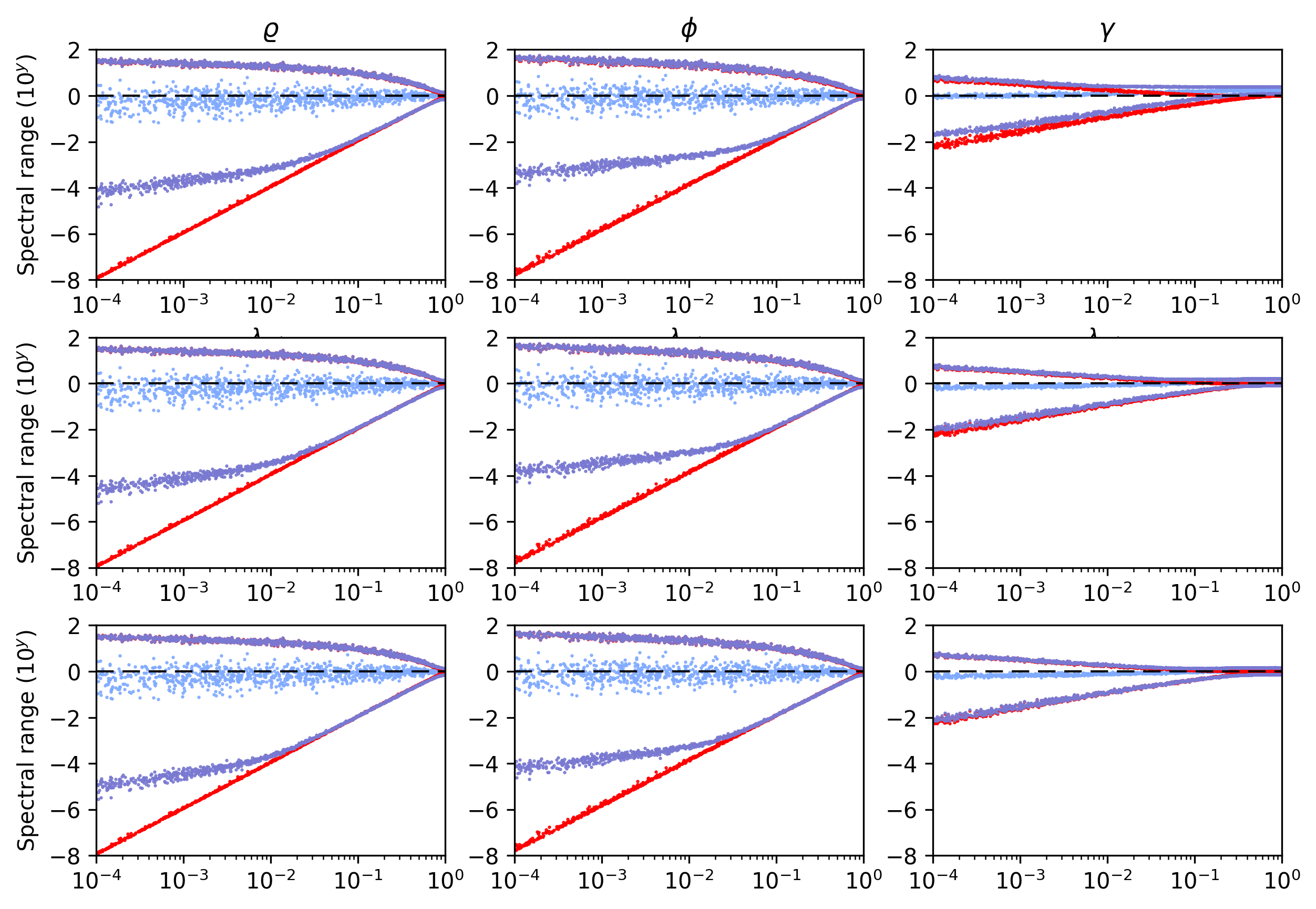}\caption{\small Variance stabilization for linear combinations of $\hat{\varrho}$, $\hat{\phi}$, and $\hat{\gamma}$, based on 1,000 random correlation matrices with $n=25$ and sample sizes $T=40$, $100$, and $250$. Dark blue dots show the smallest and largest eigenvalues of $V_{\varrho,T}(C)$, $V_{\phi,T}(C)$, and $V_{\gamma,T}(C)$, reported in $\log_{10}$ units; red dots show the corresponding eigenvalues of the asymptotic covariance matrices. Light blue dots show the variances of the standardized average elements, $\sqrt{d}\overline{\hat{\varrho}}=d^{-1/2}\sum_{i=1}^d\hat{\varrho}_i$, $\sqrt{d}\overline{\hat{\phi}}$, and $\sqrt{d}\overline{\hat{\gamma}}$.\label{fig:Veigenvalues}}
\end{figure}
Figures~\ref{fig:RandomDesignVariances}, \ref{fig:R-CorrelationRange}, and \ref{fig:Veigenvalues} extend the three rows of Figure~\ref{fig:VarCorrSpectrum} to sample sizes $T=40$, $100$, and $250$. They show, respectively, marginal variances, pairwise correlations between elements, and covariance-matrix spectra. The patterns are stable across sample sizes: the GFT coordinates exhibit substantially weaker dependence and a more stable covariance spectrum than the sample correlations and the element-wise Fisher transformed correlations.

\section{Additional Empirical Results}

Figure~\ref{Fig:Dataset} in the main text shows the empirical correlation structures used in the resampling exercises: for each data set, the left panel shows the sample correlation matrix, the middle panel the corresponding log-transformed correlation matrix, and the right panel plots the element-wise Fisher transformed correlations, $\hat{\phi}_i$, against the corresponding GFT coordinates, $\hat{\gamma}_i$. These plots illustrate how the matrix logarithm reshapes the empirical correlation structure and how closely the GFT coordinates are related to the conventional Fisher transformed correlations.

The three data sets exhibit different forms of dependence. The macroeconomic variables display an approximate block structure, reflecting groups of variables with similar cyclical or trend behavior. The industry portfolios are more uniformly positively correlated, consistent with the presence of a strong market-wide component in daily equity returns. The high-frequency return data also exhibit substantial dependence, but with a correlation structure that is less uniform across assets. In all three cases, the scatter plots show a close relationship between $\hat{\phi}_i$ and $\hat{\gamma}_i$, but the relationship is not exactly linear. This confirms that the GFT coordinates retain a strong element-wise connection to the corresponding correlations while incorporating information from the full correlation matrix through the matrix logarithm.

\subsection{Results for 24 Macroeconomic Variables}

We extract a subset of 24 variables from the FRED-MD database of \citet{McCrackenNg:2016}. This design provides an empirical correlation structure that differs substantially from the equity-return designs considered in the main text. In particular, the macroeconomic variables exhibit a more pronounced block structure, reflecting groups of variables with similar cyclical behavior and common low-frequency components. Table~\ref{tab:FRED-rho} reports the sample correlation matrix together with its log-transformed counterpart.

We use this empirical distribution to assess whether the finite-sample properties of the GFT persist in a macroeconomic setting with non-Gaussian dependence and heterogeneous marginal behavior. Figure~\ref{fig:FRED-VarSkewKurt} reports the variance, skewness, and excess kurtosis of the elements of $\sqrt{T}(\hat{\varrho}-\varrho)$, $\sqrt{T}(\hat{\phi}-\phi)$, and $\sqrt{T}(\hat{\gamma}-\gamma)$ based on empirical resampling. The results are broadly consistent with the evidence in the main text: the marginal distributions of the transformed coordinates are not exactly Gaussian, but the GFT coordinates continue to display more stable behavior than the sample correlations.

\begin{table}[H]
\begin{centering}
\includegraphics[width=0.95\textwidth]{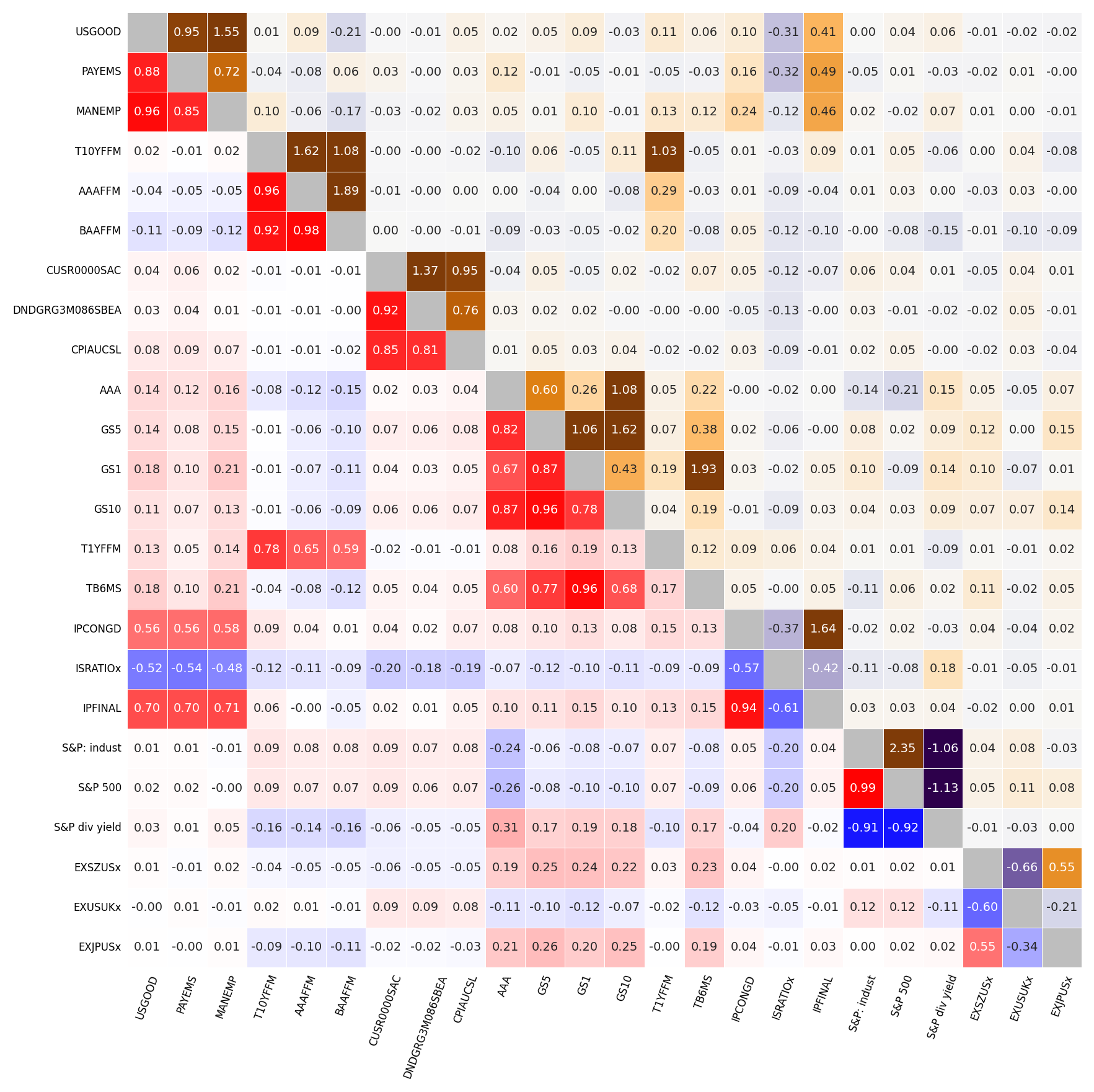}
\par\end{centering}
\caption{{\small Sample correlation matrix and log-transformed sample correlation matrix estimated for 24 macro variables}\label{tab:FRED-rho}}
\end{table}

\begin{figure}[!htbp]
\centering{}\includegraphics[width=0.95\textwidth]{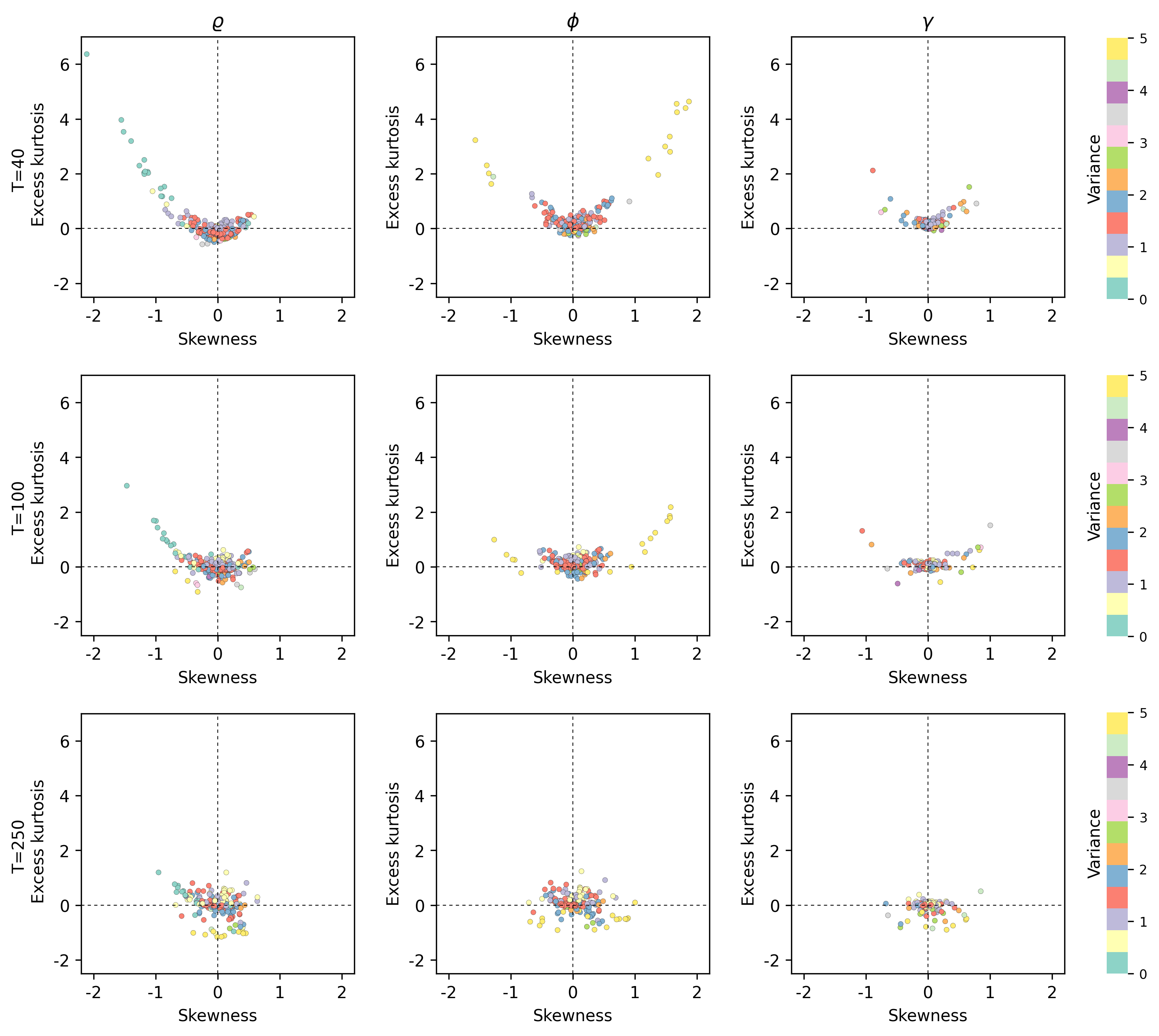}\caption{\small FRED-MD empirical resampling results for 24 macroeconomic variables. The panels report finite-sample variance, skewness, and excess kurtosis for the $d=276$ elements of $\sqrt{T}(\hat{\varrho}-\varrho)$, $\sqrt{T}(\hat{\phi}-\phi)$, and $\sqrt{T}(\hat{\gamma}-\gamma)$. The color of each point indicates the corresponding variance. Results are based on 100,000 sample correlation matrices computed from samples drawn with replacement from the empirical distribution of the 24-dimensional macroeconomic data.\label{fig:FRED-VarSkewKurt}}
\end{figure}

\subsection{Additional Results for Fama-French Industry Portfolios}

This subsection provides additional results for the 30 Fama-French industry portfolios analyzed in Section~\ref{sec:FamaFrench}. The empirical sample contains 503 daily return vectors from January 2, 2018 to December 31, 2019, and the dimension is $n=30$, so $d=435$. We consider two simulation designs. In the first design, observations are drawn from a Gaussian distribution with correlation matrix equal to the empirical correlation matrix. This isolates the effect of the empirical correlation structure. In the second design, observations are drawn with replacement from the empirical distribution of return vectors. This preserves both the empirical correlation structure and the non-Gaussian features of the observed returns.

Figures~\ref{fig:ff-VarSkewKurtGaussian} and \ref{fig:ff-VarSkewKurtEmpirical} report variance, skewness, and excess kurtosis for the elements of $\sqrt{T}(\hat{\varrho}-\varrho)$, $\sqrt{T}(\hat{\phi}-\phi)$, and $\sqrt{T}(\hat{\gamma}-\gamma)$. The Gaussian design confirms that the marginal distributions of $\hat{\phi}$ and $\hat{\gamma}$ are close to normal when the data are Gaussian, even for moderate sample sizes. The empirical resampling design shows slower convergence to normality, reflecting the non-Gaussian features of daily industry returns. Nevertheless, the GFT coordinates continue to exhibit more stable finite-sample behavior than the sample correlations.

Figure~\ref{fig:ff-finite-sample-distr} summarizes the same comparison in terms of the distributions of variances, pairwise correlations, and skewness across elements. The contrast between $\hat{\varrho}$ and $\hat{\gamma}$ is especially clear for the dependence measures: the correlations between elements of $\hat{\gamma}$ are much more concentrated around zero, whereas the corresponding correlations between elements of $\hat{\varrho}$ are substantially more dispersed. This reinforces the main finding that the GFT coordinates provide a more nearly orthogonal representation of correlation estimation errors.

\begin{figure}[!htbp]
\centering{}\includegraphics[width=0.95\textwidth]{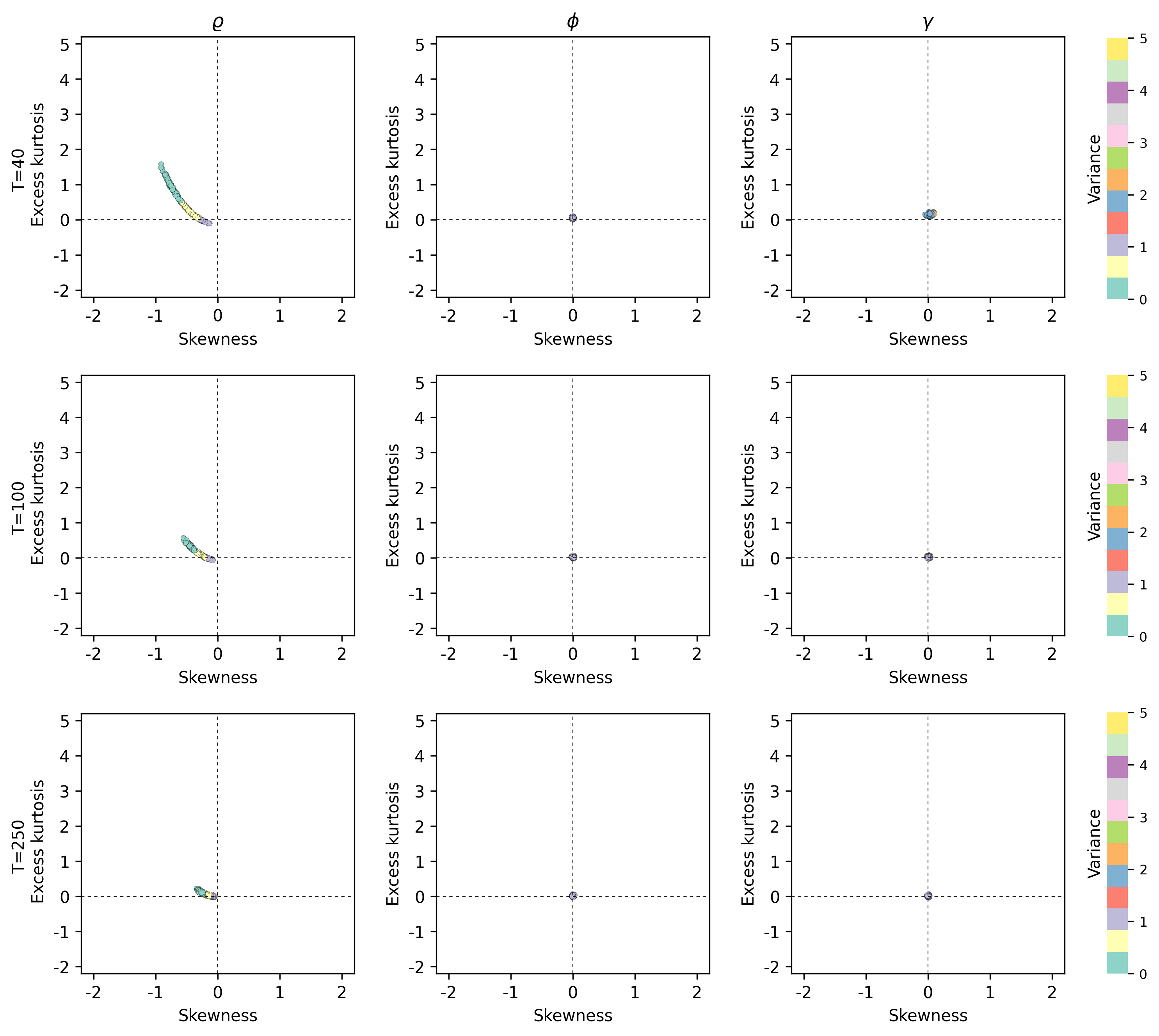}\caption{\small Gaussian simulation results based on the empirical correlation matrix of the 30 Fama-French industry portfolios. The panels report finite-sample variance, skewness, and excess kurtosis for the $d=435$ elements of $\sqrt{T}(\hat{\varrho}-\varrho)$, $\sqrt{T}(\hat{\phi}-\phi)$, and $\sqrt{T}(\hat{\gamma}-\gamma)$. The color of each point indicates the corresponding variance. Results are based on 100,000 sample correlation matrices computed from Gaussian samples with correlation matrix equal to the sample correlation matrix estimated from daily industry returns over the period January 2, 2018, to December 31, 2019.\label{fig:ff-VarSkewKurtGaussian}}
\end{figure}

\begin{figure}[!htbp]
\centering{}\includegraphics[width=0.95\textwidth]{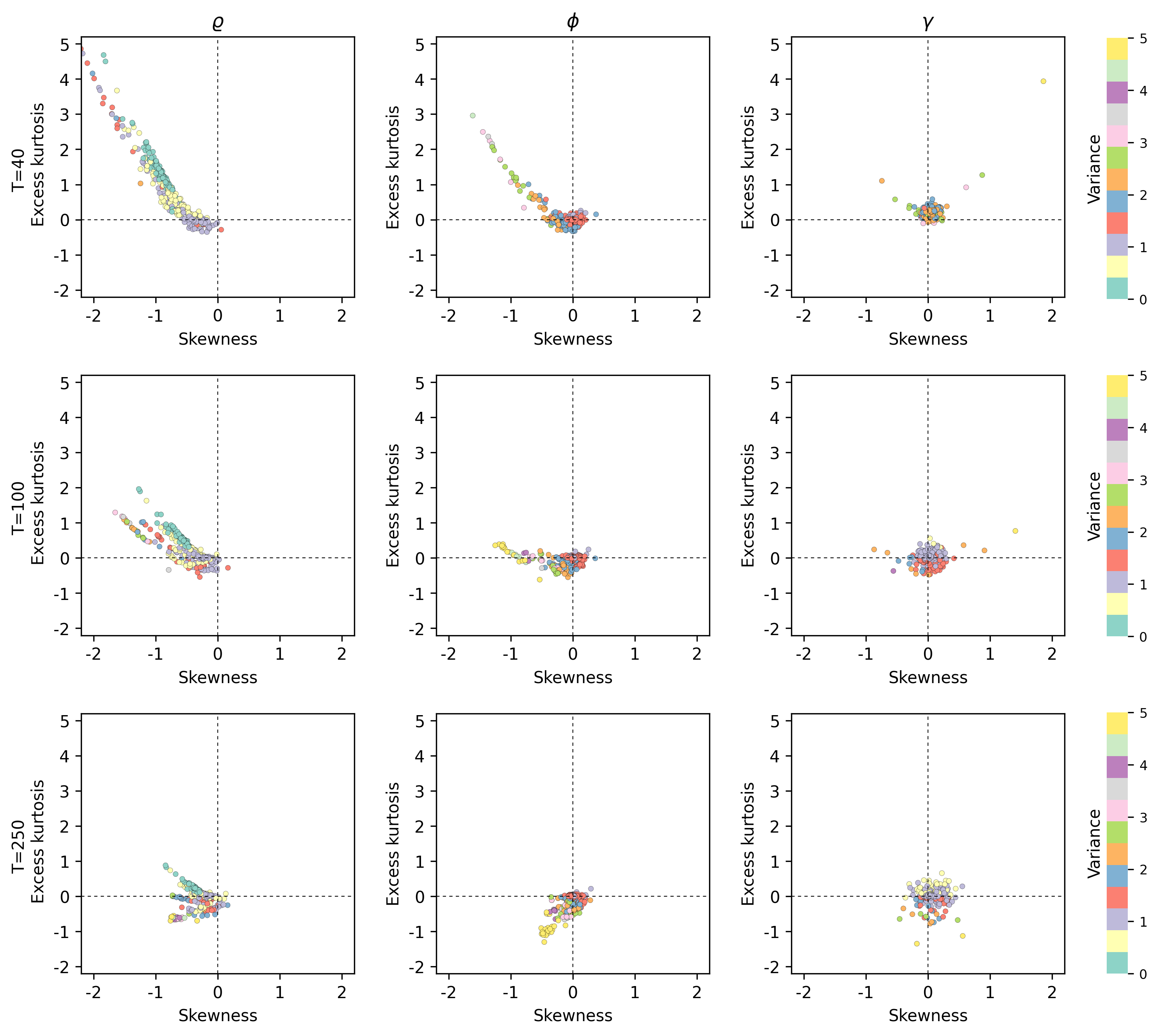}\caption{\small Empirical resampling results based on daily returns for the 30 Fama-French industry portfolios. The panels report finite-sample variance, skewness, and excess kurtosis for the $d=435$ elements of $\sqrt{T}(\hat{\varrho}-\varrho)$, $\sqrt{T}(\hat{\phi}-\phi)$, and $\sqrt{T}(\hat{\gamma}-\gamma)$. The color of each point indicates the corresponding variance. Results are based on 100,000 sample correlation matrices computed from samples drawn with replacement from the 503 daily return vectors over the period January 2, 2018, to December 31, 2019.\label{fig:ff-VarSkewKurtEmpirical}}
\end{figure}

\begin{figure}[!htbp]
\centering{}\includegraphics[width=0.95\textwidth]{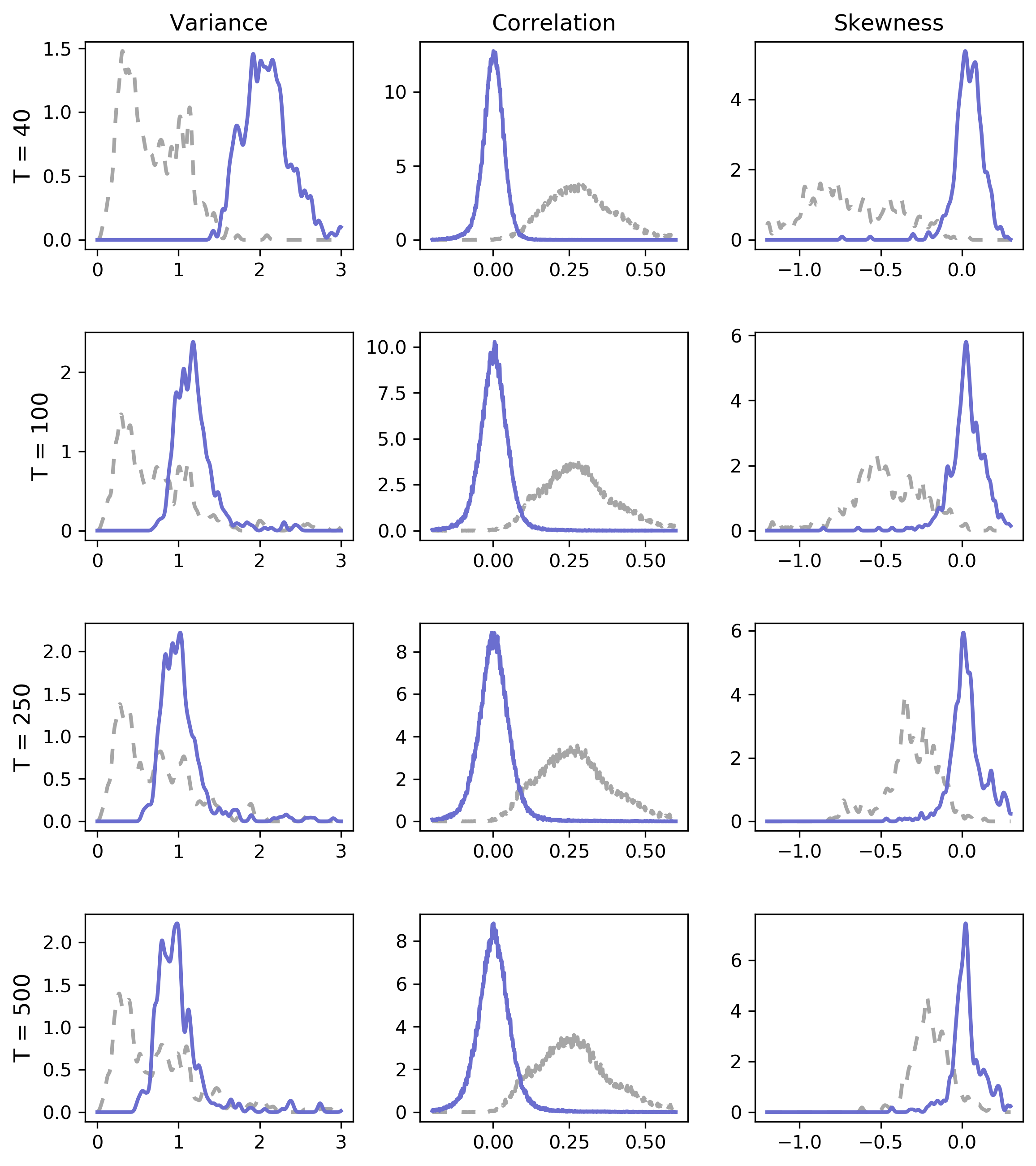}\caption{\small Empirical resampling results for daily returns on the 30 Fama-French industry portfolios. The left panels show density plots of the finite-sample variances of the 435 elements of $\sqrt{T}(\hat{\varrho}-\varrho)$ (grey dashed lines) and $\sqrt{T}(\hat{\gamma}-\gamma)$ (blue solid lines), for sample sizes $T=40$, $100$, $250$, and $500$. The middle panels show the corresponding distributions of pairwise correlations between elements, and the right panels show the distributions of skewness across elements. Results are based on 10,000 sample correlation matrices computed from samples drawn with replacement from the 503 daily return vectors over the period January 2, 2018, to December 31, 2019.\label{fig:ff-finite-sample-distr}}
\end{figure}

\subsection{Results for 5-Minute High-Frequency Returns}

This subsection provides additional results for the high-frequency return data used in Section~\ref{sec:RealizedCorr}. The data consist of five-minute intraday returns for 21 assets, so that each realized correlation matrix has $d=210$ distinct off-diagonal elements. Table~\ref{tab:TAQ-gamma} reports the sample correlation matrix and the corresponding log-transformed correlation matrix for this data set.

The high-frequency design differs from the daily industry-return and macroeconomic designs in two important respects. First, the sample correlation matrices are computed from intraday observations within each trading day, so the sample size for each realized correlation matrix is relatively small. Second, the latent correlation matrix is likely to vary over time and may also vary within the trading day. These features make the empirical setting more challenging than the controlled simulation designs, but also more relevant for applications involving realized covariance and correlation matrices.

Figure~\ref{fig:TAQ-VarSkewKurt} reports finite-sample variance, skewness, and excess kurtosis for the elements of $\sqrt{T}(\hat{\varrho}-\varrho)$, $\sqrt{T}(\hat{\phi}-\phi)$, and $\sqrt{T}(\hat{\gamma}-\gamma)$ based on resampling from the empirical distribution of five-minute returns. The results are consistent with the evidence in the main text: the transformed coordinates have better marginal behavior than the raw realized correlations, and the GFT coordinates remain comparatively stable.

Figure~\ref{fig:TAQ-gamma-series} illustrates the time-series behavior of a representative GFT coordinate, $\hat{\gamma}_{i,t}$, together with its filtered counterpart, $\gamma_{i,t}^f$. This figure highlights the interpretation used in the main empirical analysis: the observed realized GFT coordinate is treated as a noisy measurement of a smoother latent correlation component.

\begin{table}[H]
\begin{centering}
\includegraphics[width=0.95\textwidth]{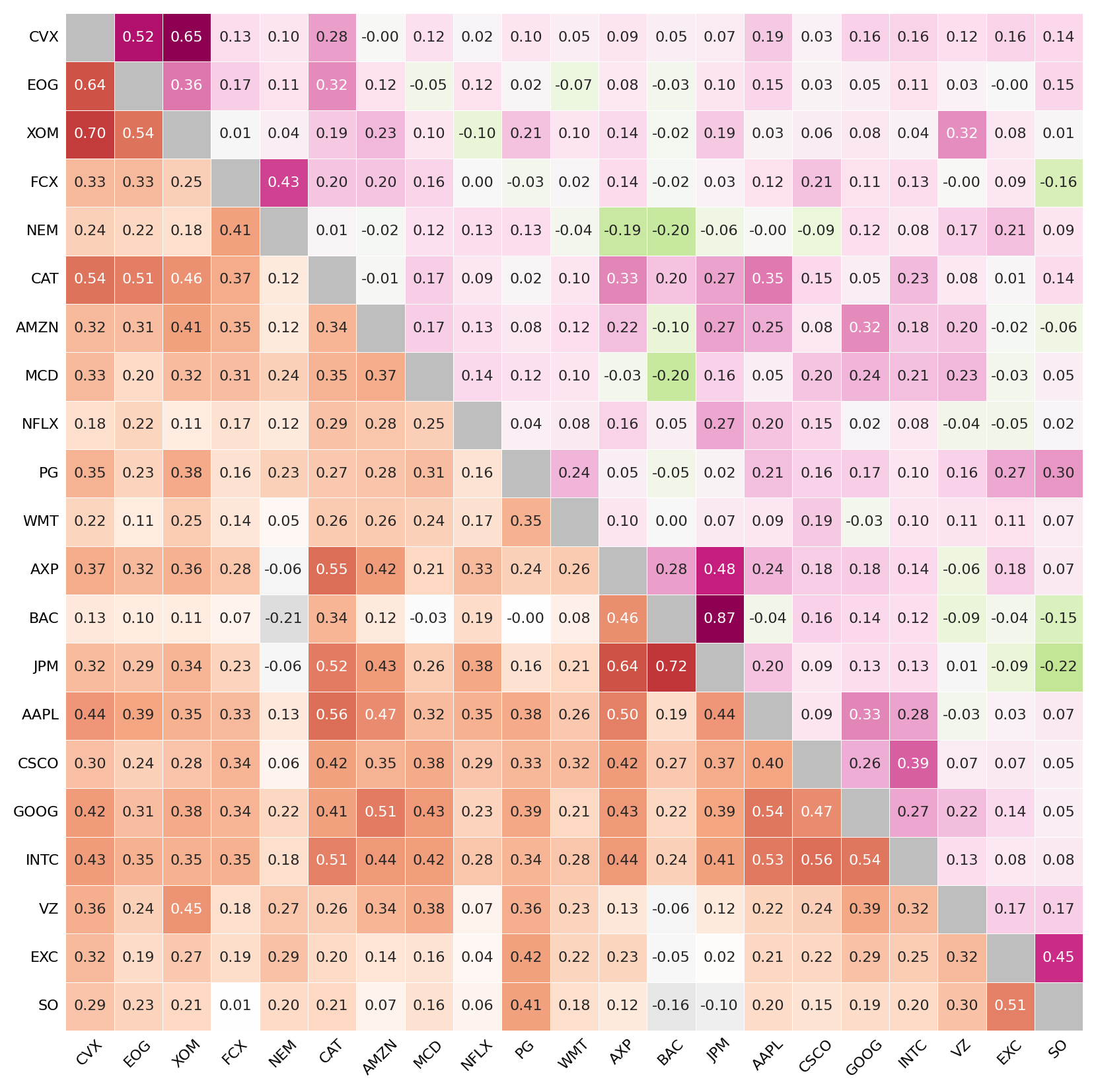}
\par\end{centering}
\caption{\small Sample correlation matrix and log-transformed sample correlation matrix for the 21-asset five-minute high-frequency return data.\label{tab:TAQ-gamma}}
\end{table}

\begin{figure}[!htbp]
\centering{}\includegraphics[width=0.95\textwidth]{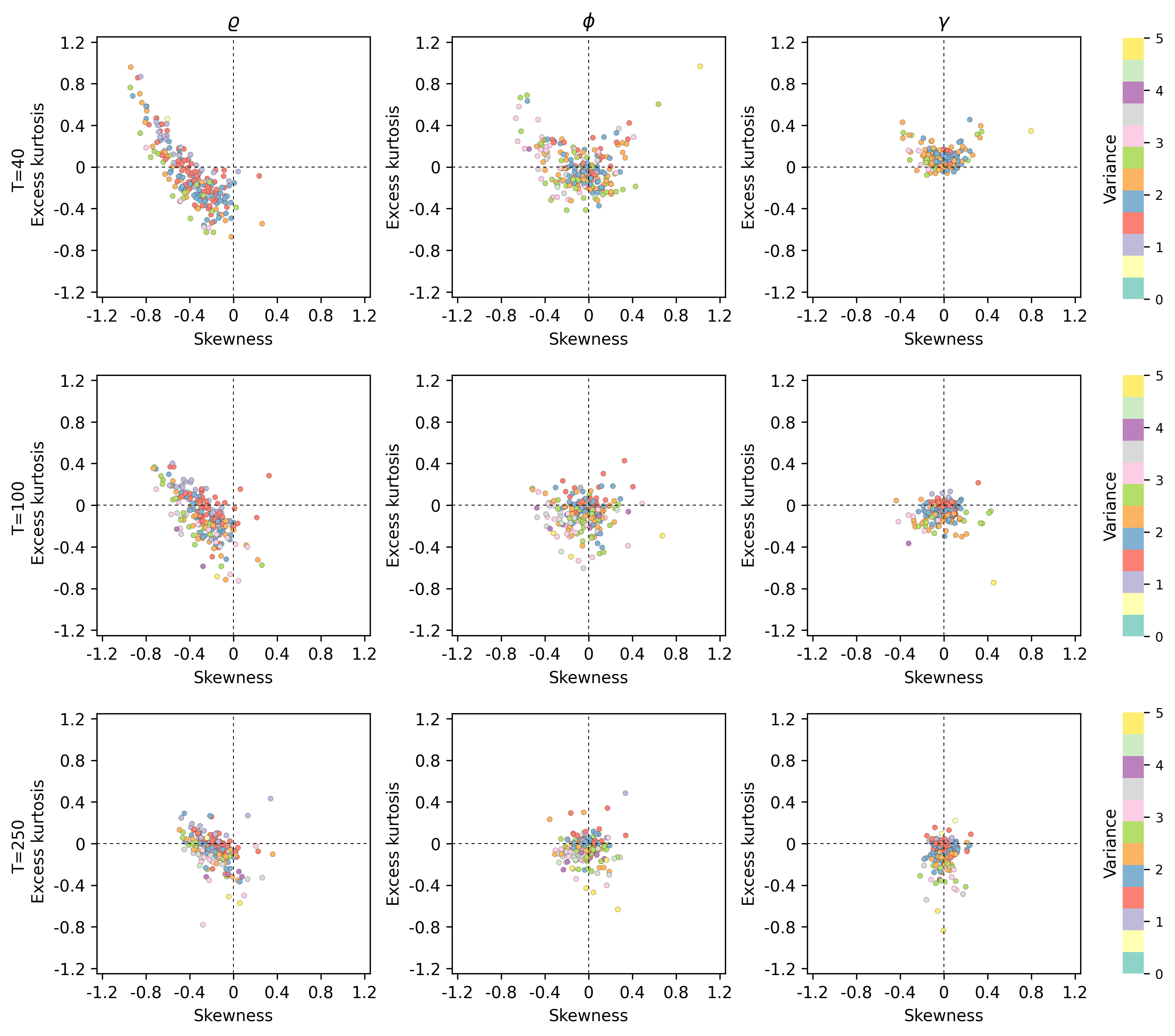}\caption{\small Empirical resampling results based on five-minute high-frequency returns for 21 assets. The panels report finite-sample variance, skewness, and excess kurtosis for the $d=210$ elements of $\sqrt{T}(\hat{\varrho}-\varrho)$, $\sqrt{T}(\hat{\phi}-\phi)$, and $\sqrt{T}(\hat{\gamma}-\gamma)$. The color of each point indicates the corresponding variance. Results are based on 100,000 sample correlation matrices computed from samples drawn with replacement from the empirical distribution of five-minute return vectors.\label{fig:TAQ-VarSkewKurt}}
\end{figure}

\begin{center}
\begin{figure}[!htbp]
\begin{centering}
\includegraphics[width=0.95\textwidth]{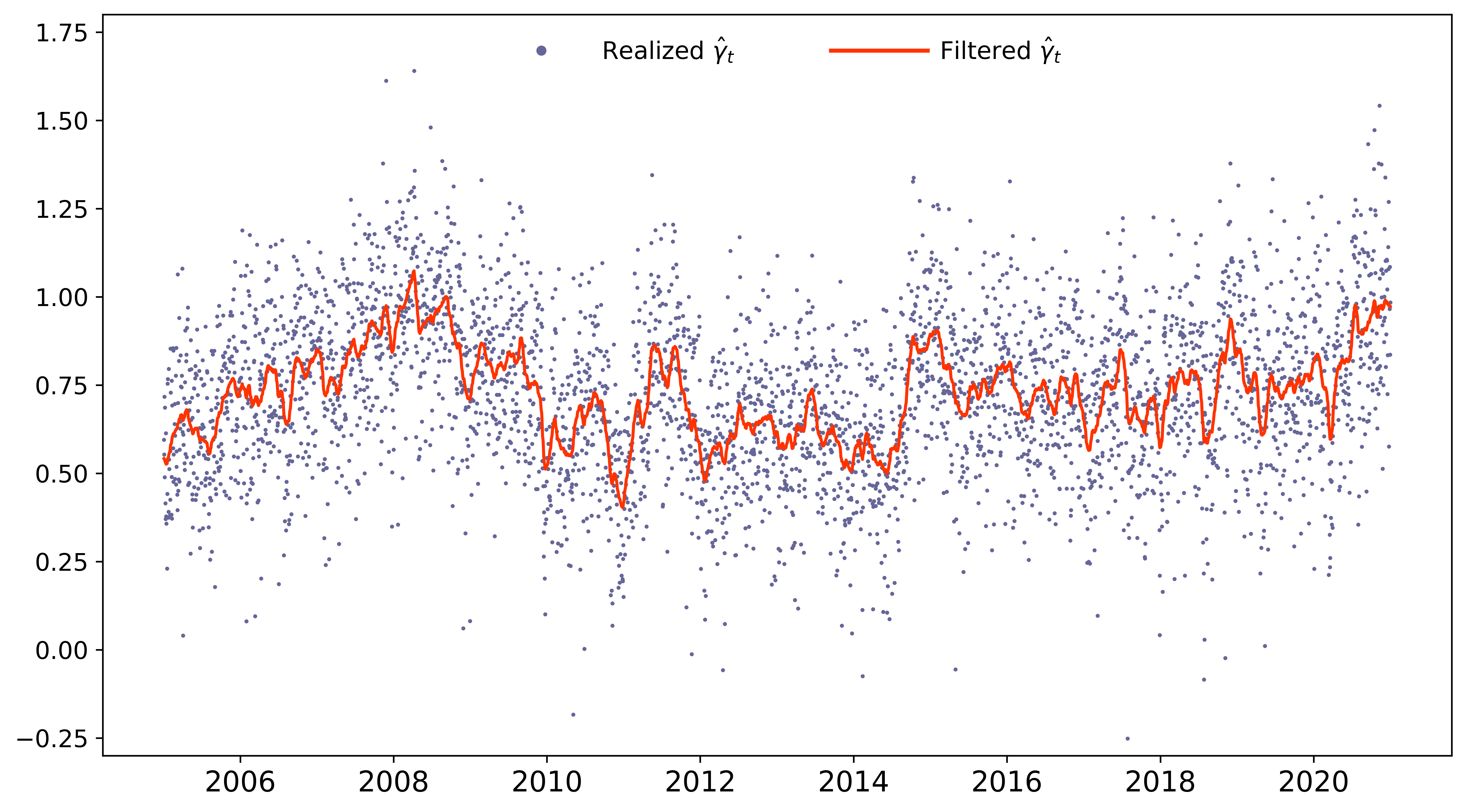}
\par\end{centering}
\caption{\small Representative time series of a realized GFT coordinate, $\hat{\gamma}_{i,t}$, and its filtered counterpart, $\gamma_{i,t}^{f}$.\label{fig:TAQ-gamma-series}}
\end{figure}
\par\end{center}

\section{Additional Evidence on Near-Singularity\label{sec:app_Simulations}}

The results above show that the smallest eigenvalue, $\lambda_{\min}(C)$, is an important determinant of the finite-sample behavior of the GFT coordinates. We provide two additional pieces of evidence. First, we report more detailed results for the random correlation design in Figure~\ref{fig:RandomDesignVariancesGammaAppendix}, focusing on the finite-sample variances of $\sqrt{T}(\hat{\gamma}-\gamma)$ across dimensions and sample sizes. Second, we complement the random-design evidence with Toeplitz correlation matrices of the form in (\ref{eq:CorrelationMatrixToeplitz}), where near-singularity is generated by varying the single parameter $\rho$. The Toeplitz design provides a more structured way to study the role of near-singularity and helps separate the effect of $\lambda_{\min}(C)$ from the broader geometry of the correlation matrix.

\begin{figure}[!htbp]
\centering{}\includegraphics[width=0.95\textwidth]{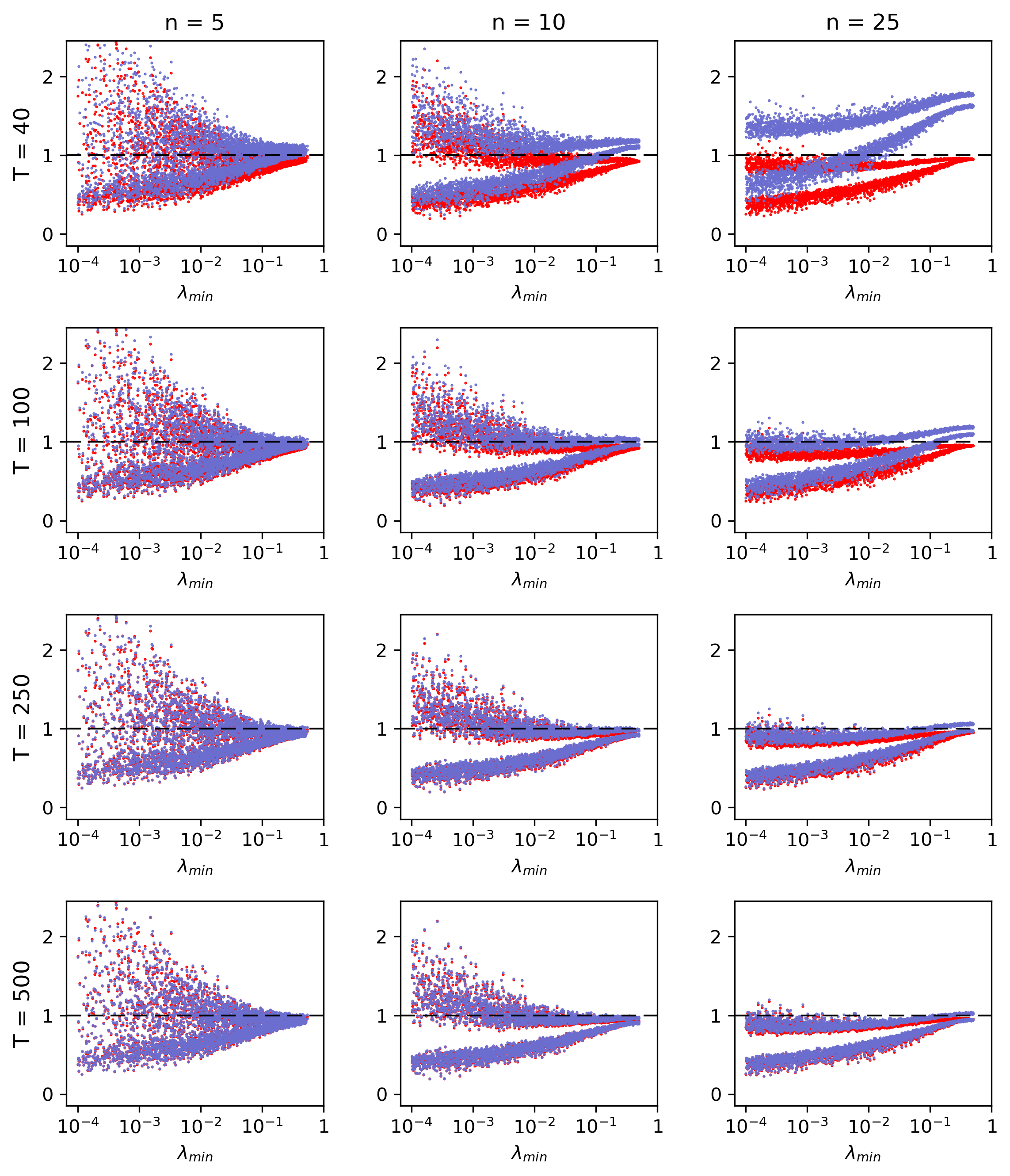}\caption{\small Smallest and largest finite-sample variances of $\sqrt{T}(\hat{\gamma}-\gamma)$ for the random correlation design, plotted against $\lambda_{\min}(C)$. Results are shown for dimensions $n=5$, $10$, and $25$ and sample sizes $T=40$, $100$, $250$, and $500$. Blue dots are finite-sample variances estimated from 10,000 independent estimates of $\hat{C}$ for each of 1,000 random correlation matrices; red dots are the corresponding asymptotic variances from $V_{\gamma}(C)$.\label{fig:RandomDesignVariancesGammaAppendix}}
\end{figure}

\subsection{Toeplitz Correlation Matrices}

Figure~\ref{fig:min_max_var_corr_toeplitz} compares the finite-sample variances and pairwise correlations of $\sqrt{T}(\hat{\gamma}-\gamma)$ with their asymptotic counterparts for Toeplitz correlation matrices. In this design, variation in $\lambda_{\min}(C)$ is generated by varying the single parameter $\rho$. The Toeplitz matrices are more restrictive than the random correlation matrices used above, and this helps explain some differences between the two designs. In particular, the dispersion between the smallest and largest finite-sample variances and correlations is less pronounced for Toeplitz matrices. Thus, although $\lambda_{\min}(C)$ is an important determinant of finite-sample behavior, the structure of the correlation matrix also matters.

\begin{figure}[!htbp]
\begin{centering}
\includegraphics[width=0.95\textwidth]{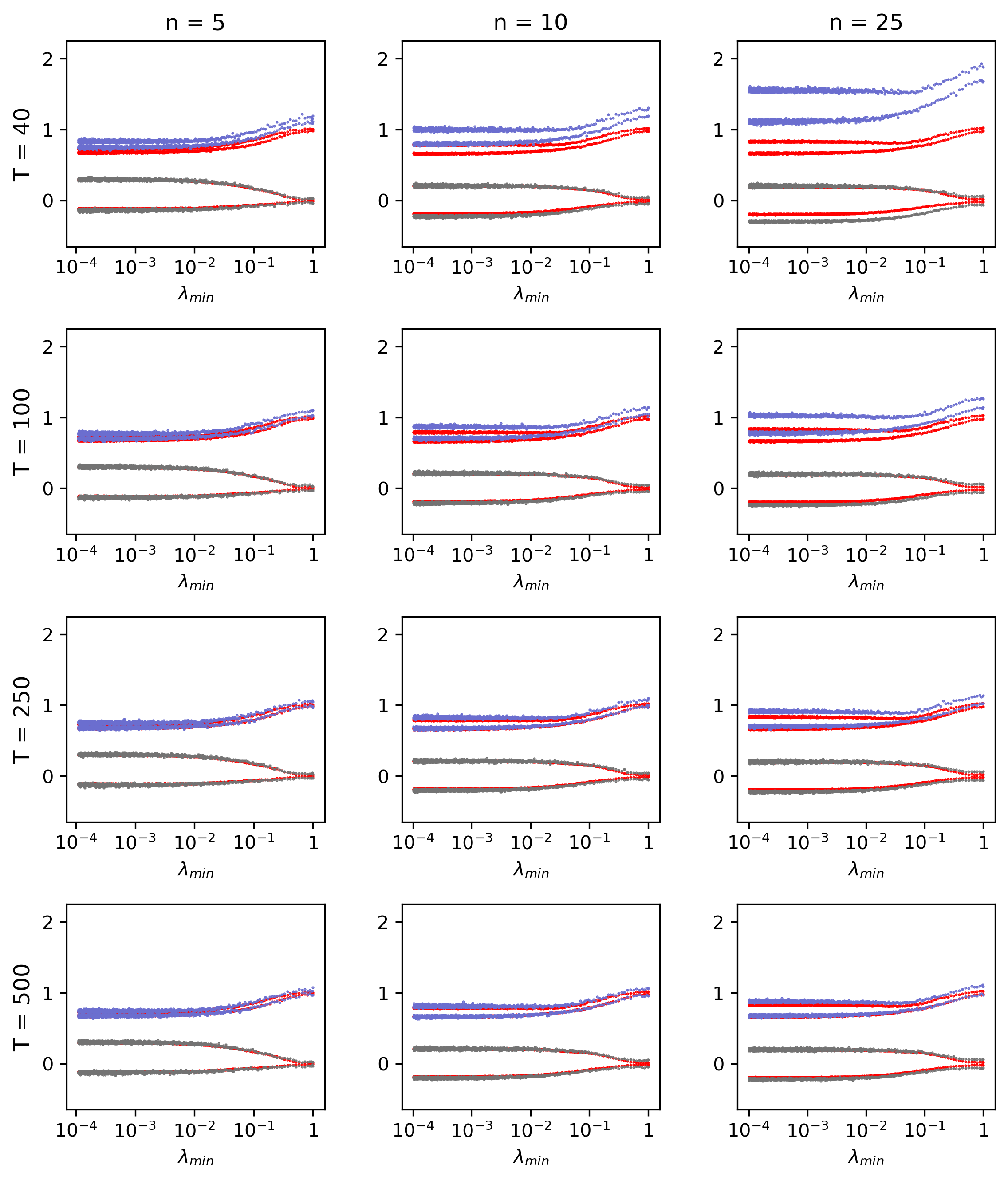}
\par\end{centering}
\caption{\small Finite-sample and asymptotic variances and pairwise correlations for the elements of $\sqrt{T}(\hat{\gamma}-\gamma)$ under Toeplitz correlation matrices. Columns correspond to $n=5$, $10$, and $25$, and rows correspond to sample sizes $T=40$, $100$, $250$, and $500$. In each panel, blue dotted curves show the smallest and largest finite-sample variances, while grey dotted curves show the smallest and largest finite-sample correlations, plotted against $\lambda_{\min}(C)$. Red dots show the corresponding asymptotic minima and maxima for the same Toeplitz correlation matrices. Since the asymptotic quantities do not depend on $T$, the red dots are identical across rows within each column.\label{fig:min_max_var_corr_toeplitz}}
\end{figure}

\subsection{Correlations Between Elements}

We next provide additional evidence on the dependence between elements of the coordinate vectors. Since $\phi$ is obtained from $\varrho$ by an element-wise transformation, the delta method implies that the asymptotic correlation matrices of $\hat{\varrho}$ and $\hat{\phi}$ are identical, $R_{\varrho}(C)=R_{\phi}(C)$. In the simulations, the corresponding finite-sample correlation matrices, $R_{\varrho,T}(C)$ and $R_{\phi,T}(C)$, are also virtually indistinguishable. We therefore report results for $R_{\varrho,T}(C)$ in Figure~\ref{fig:min_max_corr_corr_random}; the results for the element-wise Fisher transformed correlations are essentially the same.
\begin{figure}[!htbp]
\centering{}\includegraphics[width=0.95\textwidth]{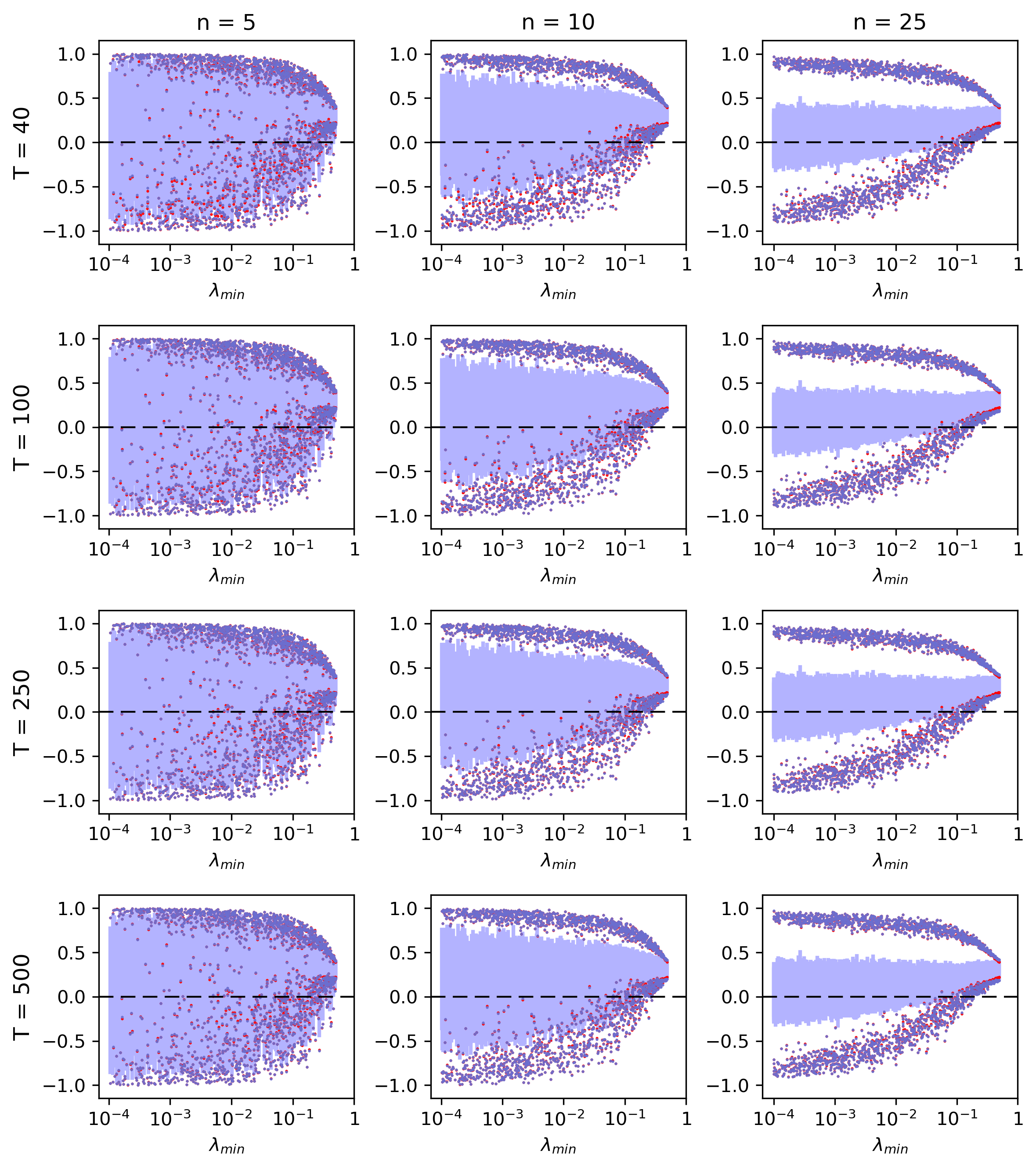}\caption{\small Finite-sample correlations between elements of $\hat{\varrho}$ under the random correlation design, plotted against $\lambda_{\min}(C)$ for 1,000 distinct correlation matrices. Blue dots show the smallest and largest off-diagonal elements of $R_{\varrho,T}(C)$, and red dots show the corresponding asymptotic values from $R_{\varrho}(C)$. The shaded region covers the 10\%-to-90\% interquantile range of the off-diagonal elements of $R_{\varrho,T}(C)$. Results are shown for dimensions $n=5$, $10$, and $25$ and sample sizes $T=40$, $100$, $250$, and $500$. For each correlation matrix, $R_{\varrho,T}(C)$ is estimated from 10,000 samples of $T$ independent Gaussian vectors, $X_t\sim iid N(0,C_{(j)})$.\label{fig:min_max_corr_corr_random}}
\end{figure}

For $n=5$, $10$, and $25$, the coordinate vectors have dimensions $d=10$, $45$, and $300$, respectively. Thus, the corresponding correlation matrices contain 45, 990, and 44,850 distinct off-diagonal elements.

Figure~\ref{fig:min_max_corr_corr_random} reports results for 1,000 randomly generated correlation matrices. For each matrix, we compute the asymptotic correlation matrix $R_{\varrho}(C)$ from the expression for $V_{\varrho}(C)$. The finite-sample counterpart, $R_{\varrho,T}(C)$, is estimated from $N=10,000$ independent estimates of $\hat{C}$, each based on $T$ independent draws from $N(0,C)$. The blue dots show the minimum and maximum off-diagonal elements of $R_{\varrho,T}(C)$, while the shaded region gives the 10\%-to-90\% interquantile range. The corresponding asymptotic minima and maxima from $R_{\varrho}(C)$ are shown with red dots. The close alignment of the blue and red points shows that the dependence between elements of $\hat{\varrho}$ is not primarily a finite-sample artifact.

Figure~\ref{fig:min_max_corr_corr_random} shows that $\lambda_{\min}(C)$ is a key determinant of the dependence structure. As the smallest eigenvalue approaches zero, the pairwise correlations between the elements of $\hat{\varrho}$ become increasingly dispersed. Even for $\lambda_{\min}(C)=0.1$, many of these correlations are large in absolute value, and the dependence becomes stronger as $\lambda_{\min}(C)\rightarrow0$. This pattern is particularly pronounced for $n=5$, where the shaded regions indicate that a substantial fraction of the correlations in $R_{\varrho,T}(C)$ are close to one in absolute value. Increasing the sample size has little effect on this dependence, suggesting that it is driven mainly by the geometry of the correlation matrix rather than by finite-sample noise. As noted above, the same conclusion applies to the corresponding element-wise Fisher transformed correlations.
\begin{figure}[!htbp]
\centering{}\includegraphics[width=0.95\textwidth]{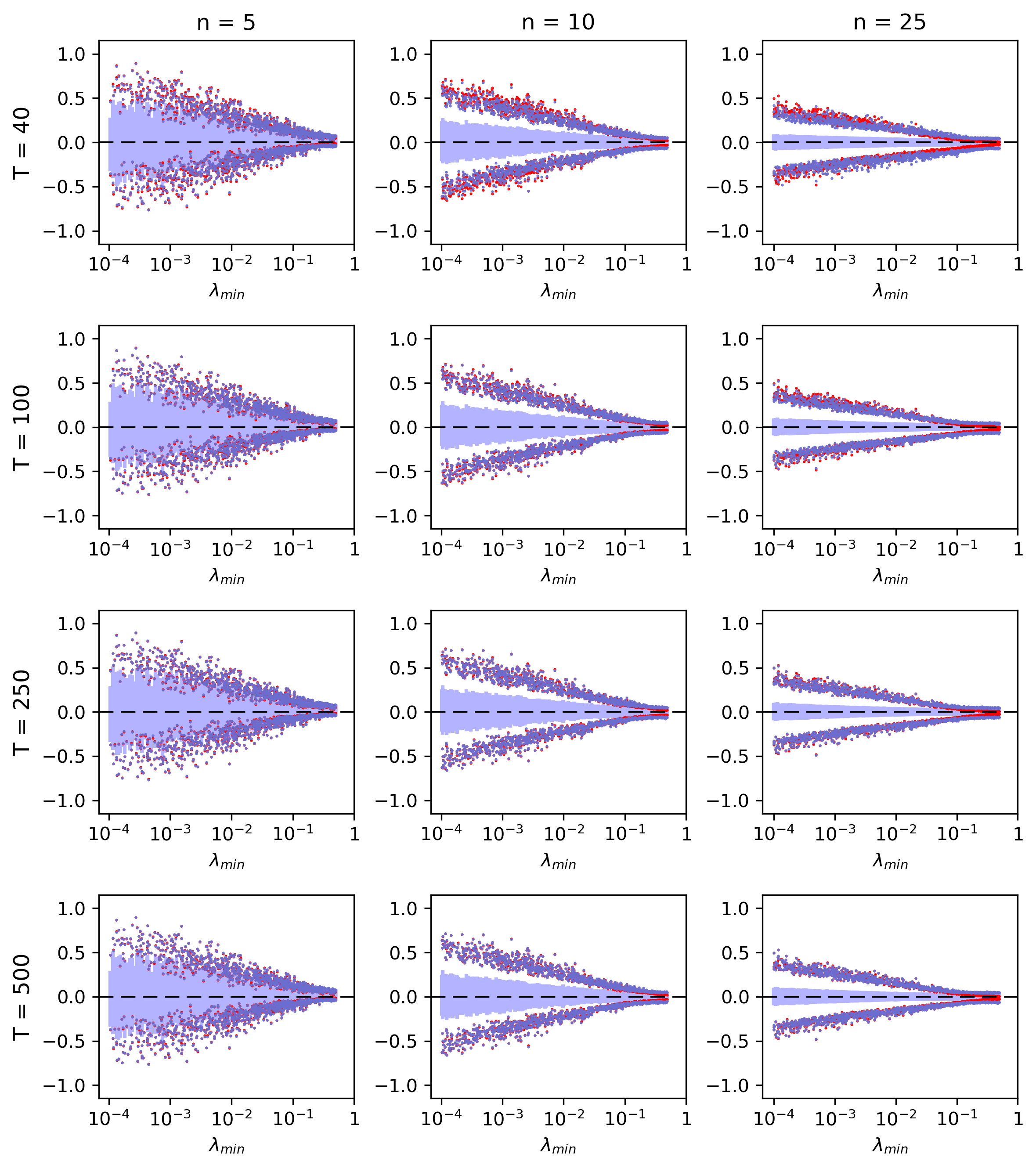}\caption{\small Range of finite-sample and asymptotic correlations between the elements of $\hat{\gamma}$ under the random correlation design, plotted against $\lambda_{\min}(C)$ for $\lambda_{\min}(C)\in[10^{-4},1]$. Results are shown for dimensions $n=5$, $10$, and $25$ and sample sizes $T=40$, $100$, $250$, and $500$. For each dimension, 2,500 random correlation matrices are generated. Blue dots show the smallest and largest off-diagonal elements of $R_{\gamma,T}(C)$, estimated from 10,000 samples of $T$ independent Gaussian vectors, $X_t\sim iid N(0,C_{(j)})$. Red dots show the corresponding asymptotic values from $R_{\gamma}(C)$, and shaded regions contain 80\% of the finite-sample correlations.\label{fig:min_max_corr_gam_random}}
\end{figure}

The analogous results for $\hat{\gamma}$, shown in Figure~\ref{fig:min_max_corr_gam_random}, are markedly different. The pairwise correlations between elements of $\hat{\gamma}$ are much smaller than those for $\hat{\varrho}$ and $\hat{\phi}$ and are centered close to zero. Moreover, the correlations become increasingly concentrated around zero as the dimension $n$ increases. The most extreme correlations become larger in absolute value as $\lambda_{\min}(C)$ approaches zero, but the shaded regions show that most correlations remain close to zero, especially in higher dimensions. As before, increasing the sample size has little effect on the dependence structure.

\end{document}